\newcommand{\commentout}[1]{}
\newtheorem{proposition}{Proposition}
\newtheorem{corollary}{Corollary}
\newtheorem{theorem}{Theorem}
\newtheorem{lemma}{Lemma}
\newtheorem{definition}{Definition}
\title{Lower Bounds Implementing Mediators
  in Asynchronous Systems\thanks{Authors supported in part 
by MURI (MultiUniversity Research
Initiative) under grant 
W911NF-19-1-0217, by the ARO under grant W911NF-17-1-0592,
by  the NSF under grants IIS-1703846 and IIS-1718108, and by a grant
from the Open Philosophy Foundation.}}
\author{Ivan Geffner \\ Cornell University \\ ieg8@cornell.edu \and
  Joseph Y. Halpern \\ Cornell University \\ halpern@cs.cornell.edu}
  \date{}
\begin{document}

\maketitle

\begin{abstract}
Abraham, Dolev, Geffner, and Halpern \citeyear{ADGH19} proved that, in
  asynchronous systems, a \emph{$(k,t)$-robust equilibrium} for $n$
players and a trusted mediator can be implemented without the mediator
as long as $n > 4(k+t)$, where an equilibrium is $(k,t)$-robust if,
roughly speaking, no coalition of $t$ players can decrease the payoff
of any of the other players, and no coalition of $k$ players can
increase their payoff by deviating. We prove that this bound is tight,
in the sense that if $n \le 4(k+t)$ there exist $(k,t)$-robust
equilibria with a mediator that cannot be implemented by the players
alone. Even though implementing $(k,t)$-robust mediators seems closely
related to implementing asynchronous multiparty $(k+t)$-secure
computation \cite{BCG93}, to the best of our knowledge there is
no known straightforward reduction from one problem to
another. Nevertheless, we show that there is a non-trivial reduction
from a slightly weaker notion of $(k+t)$-secure computation, which we
call \emph{$(k+t)$-strict secure computation}, to implementing
$(k,t)$-robust mediators. We prove the desired lower bound by showing
that there are functions on $n$ variables that cannot be
$(k+t)$-strictly securely computed if $n \le 4(k+t)$. This also
provides a simple alternative proof for the well-known lower bound of
$4t+1$ on asynchronous secure computation in the presence of up to $t$
malicious agents \cite{ADS20,BKR94,canetti96studies}.
\end{abstract}


\maketitle

\section{Introduction}
Ben-Or, Goldwasser, and Wigderson \citeyear{BGW88} (BGW from now on) showed
that given a finite domain $D$, a function $f: D^n \rightarrow D$
can be \emph{$t$-securely computed} by $n$ agents in a synchronous network with
private authenticated channels as long as $n > 3t$, where $t$ is a
bound on the number of malicious players.  
Roughly speaking, ``$t$-securely computed'' means that all honest
agents
 correctly compute the output of $f$,
while a group of up to $t$ malicious 
agents 
 can learn
nothing about the players' inputs beyond what can be learned the output of $f$.
%
Ben-Or, Canetti, and Goldreich \citeyear{BCG93} (BCG from now on) later
provided analogous results for the asynchronous case: a function $f:
D^n \rightarrow D$ can be 
$t$-securely 
computed by $n$ agents if 
$n > 4t$.

Abraham, Dolev, Gonen, and Halpern \citeyear{ADGH06} consider a
problem related to secure function computation that has deep roots in
the game-theory literature.  
The agents in this case have an input and play a game $\Gamma$.
They make a move in the game and get a
payoff that depends on the action profile (i.e., the move made by 
each agent).  Of course, if the moves consist of outputting a value in
$D$, then we can view function computation as a game, where the agents
(or players\footnote{We typically use the term \emph{agent} when there is no underlying game and \emph{player} when there is.}) 
get a payoff depending on the value that they output.

Secure function computation is often viewed as a game with a trusted
third party, or \emph{mediator}.  Roughly speaking, we want the
outcome to be the same as if the agents had sent their input values
$\vec{x}$ to the mediator, who then sends back $f(\vec{x})$.  
Motivated by this viewpoint, Abraham et al.~considered two 
extensions
of a \emph{Bayesian game}  $\Gamma$ (a game where the agents each have
an input, or \emph{type}, make a single move, and then get a payoff
that depends on the type profile---the tuple consisting of each
agent's type---and move profile).
\commentout{ 
  The first is a game
$\Gamma_d$ with a trusted mediator, where, after communicating with
the mediator, 
the players make a move in the underlying game $\Gamma$ 
and get payoffs as specified by $\Gamma$.  The second is a
communication game denoted $\Gamma_{ACT}$, where there is no mediator,
but the players can communicate with each other, and then make a move
and get payoffs as in the underlying game $\Gamma$. 
}
The first is a game
$\Gamma_d$ with a trusted mediator, where, after a communication phase
in which the players can communicate with the mediator, they make a
move in the underlying game $\Gamma$ and get 
the same payoffs as they would in $\Gamma$. The second is a
\emph{communication game}, denoted 
$\Gamma_{ACT}$, where there is no mediator, but players can
communicate with each other before making a move in the underlying game.

Combining ideas
 from game theory and distributed computing, Abraham et al.~were interested in
 what they called \emph{$(k,t)$-robust equilibria}.  
These are strategy profiles (i.e., a strategy for each agent)  where,
roughly speaking, no coalition
of $t$ players can decrease the payoff of any of the other players,
and no coalition of $k$ players can increase their payoff by
deviating.  They showed, among other things, that if $n > 3(k+t)$ 
and there exists a $(k,t)$-robust 
equilibrium  $\vec{\sigma} + \sigma_d$ in the mediator game $\Gamma_d$ (where
$\vec{\sigma}$ represents the players' strategies and $\sigma_d$
represents the mediator's strategy), then there exists a
$(k,t)$-robust equilibrium $\vec{\sigma}_{ACT}$ in $\Gamma_{ACT}$ in a
synchronous setting 
such that, for all inputs, $\vec{\sigma}_{ACT}$ and $\vec{\sigma} + \sigma_d$
produce the same distribution over outputs if no player deviates.
\commentout{
The case where $k=0$ can be viewed as a generalization of the BGW
result. 
}
They also proved a matching lower bound \cite{ADH07}.

Abraham, Dolev, Geffner, and Halpern \citeyear{ADGH19} (ADGH from now
on) extended this result to the asynchronous setting.  
They showed that if $n > 4(k+t)$ and there exists a $(k,t)$-robust 
equilibrium  $\vec{\sigma} + \sigma_d$ in the mediator game $\Gamma_d$,
then there exists a
$(k,t)$-robust equilibrium $\vec{\sigma}_{ACT}$ in $\Gamma_{ACT}$ in an
asynchronous setting 
such that, for all inputs, $\vec{\sigma}_{ACT}$ and $\vec{\sigma} + \sigma_d$
produce the same set of possible distributions over outputs (note that agents have no control over how long the messages take to be delivered, and this can affect the output).
\commentout{
The case where $k=0$ can be viewed as a generalization of the BCG
result.
}

Our goal in this paper is to prove a lower bound that matches the upper
bounds of 
ADGH.  
\commentout{
BCG in fact already proved a matching lower
bound; however, their proof has a nontrivial error.%
\footnote{Ran Canetti agreed that there was a nontrivial problem with
  the BCG   proof and that a technique different from that used
  in his thesis \cite{canetti96studies} was required for the proof. We
  thank him for his comments.}
}
\commentout{
We thus
start by providing a careful proof of the lower bound for
$t$-secure computation in the asynchronous setting
(from now on we assume we are dealing with asynchronous systems unless explicitly stated otherwise).
The obvious next step is then to reduce the problem of $t$-secure
computation to that of implementing a $(k,t)$-robust mediator.  This
will give us the desired lower bound for implementation in
asynchronous systems.
}
To do so, we would like to reduce implementing $(k+t)$-secure
computation to implementing $(k,t)$-robust mediators.
If such a reduction were possible, the $n > 4(k+t)$ lower bound for implementing
$(k,t)$-robust mediators would follow immediately from the same lower
bound for secure computation \cite{ADS20,BKR94,canetti96studies}. Unfortunately,
there does not seem to be such a reduction.
However, we show that there exists a nontrivial
reduction from a slightly weaker notion of $(k+t)$-secure computation,
which we call \emph{$(k+t)$-strict secure computation}, to implementing
$(k,t)$-robust mediators. We thus start by providing a careful proof
of the lower bound for 
$(k+t)$-strict secure computation in the asynchronous setting. 
In the process, we also give a simple alternative proof for the 
lower bound on asynchronous secure computation.%
\footnote{As Ran 
Canetti [private communication] agreed,  there is a nontrivial
problem with the proof given in his thesis \cite{canetti96studies};
a different technique is needed. We 
  thank him for his comments.} 

\commentout{
On the face of
it, being able to implement 
an arbitrary $(k,t)$-robust mediator
protocol
seems to be at 
least as hard a problem as
 that of $t$-secure computation,
  so the $n > 4t$ lower bound should lead to an $n > 4(k+t)$ lower bound
 for implementing a mediator.
  Indeed, as we mentioned, $t$-secure computation is often presented
  as a problem 
of implementing a mediator. 
Unfortunately, 
there does not seem to be a trivial reduction from secure computation to
implementing $(k,t)$-robust protocols with a mediator.  However, we 
show that there exists such a reduction if we use a weaker notion of secure
computation:  
A protocol
}

Intuitively, a protocol
$t$-strictly securely computes a
function $f$ if it satisfies the properties of secure computation but
only for adversaries 
 consisting of exactly $t$ malicious agents. It
might seem that $t$-secure 
computation should be equivalent to $t$-strict secure computation.
After all, if a function can be securely computed 
with
 adversaries  of maximal size, surely it can be securely computed
with smaller
 adversaries!  As we show by example in
 Section~\ref{sec:other-secure}, this is not the case. 
Intuitively, the problem is that an adversary consisting of fewer than
$t$ agents may not be permitted to learn as much as an adversary
consisting of $t$ agents.
While investigating these issues, we noted an
ambiguity in the definition of $t$-secure computation in BCG, which
led us to consider yet another notion that we call $t$-weak secure
computation.  As the name suggests, it is weaker than $t$-secure
computation; we show that
it is 
actually
equivalent to $t$-strict secure
computation.  By considering these variants of secure 
computation,
 we
gain a deeper understanding of 
its subtleties.

\commentout{
We actually prove a lower bound of $n > 4t$ for $t$-strict secure
computation;  the  lower bound
for $t$-secure
 computation follows immediately.  Moreover,
 \commentout{
  we
can reduce the problem of $t$-strict secure computation to that of
implementing a mediator. More precisely,
}
 in 
Section~\ref{sec:reduction}, we show that for every function $f: D^n
\rightarrow D$ and all $k$, $t$, and $n$ such that $n >4(k+t)$, there exists a
mediator game 
$\Gamma_d^{f,k,t}$ and a strategy $\vec{\sigma} + \sigma_d$ for
$\Gamma_d^{f,k,t}$ such that 
every
$(k,t)$-robust implementation of
$\vec{\sigma} + \sigma_d$ also $(k+t)$-strictly securely computes $f$.  
Thus, by showing that there exist functions that cannot be
$(k+t)$-strictly securely computed if 
$n \le 4(k+t)$, it
follows that there are also 
mediator games with $(k,t)$-robust strategies that cannot be
implemented without the mediator if 
$n \le 4(k+t)$.
}

\commentout{
There is one more issue worth mentioning.
In the process of investigating these issues, we noted an
ambiguity in the definition of $t$-secure computation in BCG, which
led us to consider yet another notion that we call $t$-weak secure
computation.  As the name suggests, it is weaker than $t$-secure
computation.  As we show, it is equivalent to $t$-strict secure
computation.  By considering these variants of secure 
computation,
 we
gain a deeper understanding of 
its subtleties.
}

\section{Basic Definitions}\label{sec:definitions}




\subsection{The Asynchronous Model}

The model used throughout this paper is the one used by
ADGH~\cite{ADGH19}, which consists of an asynchronous network in which
there is a reliable, authenticated and asynchronous channel between
all pairs of players. This means that all messages sent by player $i$
to player $j$ are guaranteed to be delivered eventually, and that $j$
can identify that these messages were sent by $i$. However, that these
messages may be delayed arbitrarily. The order in which these messages
are received is decided by an adversarial entity called the
\emph{scheduler}. The scheduler also decides in which order the
players are scheduled. 

We define the local history $h_i$ of player $i$ to be the ordered sequence of local computations (including random coin tosses), messages sent and received (including senders and recipients), in addition to all the times in between in which $i$ has been scheduled. Similarly, we define the local history $h_T$ of a subset $T$ of players as the collection of local histories $h_i$ with $i \in T$. Note that in the distributed computing literature, it is generally assumed that the players are scheduled automatically right after receiving a message. However, in this model we allow the scheduler to decide separately when the messages are delivered and when the players are scheduled. This means that whenever it is the turn of a player to act, that player may have received no messages since its last turn, or it may have received more than one (as opposed to exactly one). It is straightforward to check that all of our results also hold if we use the more standard model.

\subsection{Secure Computation}\label{sec:secure-computation}

For the main definitions in this section, we need the following
notation, largely taken from BCG.  Given a finite domain $D$, let
$\vec{x}$ be a vector in $D^n$. Given a set $C \subseteq 
[n]$, 
denote by $\vec{x}_C$ the vector obtained by projecting $\vec{x}$ onto
the indices of $C$. Also, given a vector $\vec{z} \in D^{|C|}$,
let 
$\vec{x}/_{(C, \vec{z})}$
 be the vector obtained by replacing
the entries of $\vec{x}$ indexed by 
$C$
 by the corresponding entries
of $\vec{z}$. To 
simplify notation, given a function $f: D^n \rightarrow D$, we
write $f_C(\vec{x})$ rather than 
$f(\vec{x}/_{(\overline{C},\vec{x}_0)})$
 to denote
the output of evaluating $f$ on $\vec{x}$ 
with the entries in $\vec{x}$ not indexed by an element of $C$ 
replaced 
by
some default value $x_0 \in D$.

Suppose that a group of $n$ agents wants to compute the output of a function
$f:D^n \rightarrow D$, but the $i$th input $x_i$ is known only by
agent $i$. A protocol securely computes $f$ if (a)
all 
agents
 correctly compute $f$, regardless of the
deviations of malicious players, and (b) malicious 
agents
 do not
learn anything about the input of honest 
agents
 beyond what can be
deduced from the output of $f$.
Before going on, we need to make precise what it means to 
\emph{correctly compute} $f$, since a
malicious 
agent
 can lie about its input or not participate
in the computation at all. Roughly speaking, the idea is 
to accept as correct any output of $f$ that can be obtained from an 
input profile that differs from the actual input profile
in at most $t$ coordinates
(intuitively, these coordinates are ones corresponding to inputs of malicious
agents
 who did not submit a value or lied about their actual input.)
More precisely, 
we have the following definition:
\begin{definition}
A
 protocol
$\vec{\pi}$ $t$-securely computes $f$ in synchronous systems if for
every coalition $T$ of at most $t$ malicious 
agents
and every strategy $\vec{\tau}_T$ for players in $T$, 
 there exist
functions $h: D^{|T|} \rightarrow D^{|T|}$ and $O: D^{|T|} \times D
\times T \rightarrow \{0,1\}^*$ such that,
for each input $\vec{x}$, 
\begin{itemize}
\item [(a)] each agent $i \not \in
  T$ outputs $f(\vec{x}/_{(T, h(\vec{x}_T)})$;
\item[(b)] each agent $i \in
T$ outputs $O(\vec{x}_T,f(\vec{x}/_{(T, h(\vec{x}_T)}), i)$. 
\end{itemize}
\end{definition}
Note that
$h$ and $O$ encode how malicious 
agents
 might lie about their inputs 
(if a malicious 
agent
does not participate in the computation, its input is assumed to be
the default value $x_0 \in D$)
and what they output, respectively. 
 We thus consider an output to
be correct if only the inputs of 
agents
 in $T$
used in the computation of $f$
 differ from their actual
inputs, and if the output of malicious  
agents
 output is just a function
of the output of $f$ and their own inputs. 
\commentout{
Since malicious players can
randomize, we can assume that both $h$ and $O$ have an extra input $r$
sampled from a distribution $\mathcal{R}$.
To capture the fact that malicious players do not learn anything
regardless of their strategy, we require the output of the malicious
players to be a function of their inputs 
($\vec{x}_T$) and the output $f(\vec{x}/_{(T, h(\vec{x}_T)})$ of honest
players. This guarantees that malicious players do not learn anything
besides $f(\vec{x}/_{(T, h(\vec{x}_T)})$, since otherwise they could
generate outputs that cannot be written as such a function $O$.
(See Definition~\ref{def:secure} for a more standard formalization of
this property.)
}
Note that this last requirement captures the fact that malicious
agents
 do not learn anything besides the (honest 
agents' ) output of
the secure computation protocol, since otherwise they could use this
extra information to  
generate outputs that cannot be written as such a function $O$.
Since malicious 
agents
 can
randomize, we assume that both $h$ and $O$ have an extra input $r$,
a bitstring chosen uniformly at random from $\{0,1\}^\omega$
(the set of all finite bitstrings), and
that 
agent
 $i$'s output is distributed identically to
$f(\vec{x}/_{(T, h(\vec{x}_T)})$ or  
$O(\vec{x}_T,f(\vec{x}/_{(T, h(\vec{x}_T)}), i)$, depending on whether $i$ is
honest. (See Definition~\ref{def:secure} for the more standard
formalization of this property.) 
BGW proved the following result: 

\begin{theorem}\emph{\cite{BGW88}}
  If $D$ is a finite domain, $n > 3t$, and $f:D^n
\rightarrow D$, then there exists a protocol $\vec{\pi}$
that $t$-securely computes $f$ in synchronous systems. 
\end{theorem}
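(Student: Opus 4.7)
The plan is to construct the protocol following the classical BGW blueprint based on Shamir secret sharing over a finite field. First I would embed the domain $D$ into a finite field $\mathbb{F}$ with $|\mathbb{F}| > n$, and represent $f$ as an arithmetic circuit over $\mathbb{F}$ built from addition, scalar multiplication, and multiplication gates (this is always possible since $f$ has finite domain and range). The protocol will maintain the invariant that after processing each gate, every agent holds a Shamir share of the gate's output value under some degree-$t$ polynomial. Since any $t$ shares of a degree-$t$ polynomial are information-theoretically independent of the secret, this invariant is what will give us privacy.

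The protocol proceeds in three phases. In the input phase, each agent $i$ picks a random degree-$t$ polynomial $p_i$ with $p_i(0) = x_i$ and distributes $p_i(j)$ to each agent $j$; to prevent malicious agents from distributing inconsistent shares, this is done using a verifiable secret sharing (VSS) subprotocol, whose correctness and hiding properties require $n > 3t$. In the computation phase, addition and scalar multiplication gates are handled locally (shares of a sum are sums of shares). The key step is multiplication: the pointwise product of two degree-$t$ polynomials has degree $2t$, so a degree-reduction subprotocol is required. Here each agent re-shares its local product via VSS and then a fixed linear combination of these re-sharings (with coefficients coming from Lagrange interpolation at $0$) yields a valid degree-$t$ sharing of the true product. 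This works because $n \geq 2t+1$ makes the interpolation well-defined, and VSS combined with $n > 3t$ lets the honest agents filter out cheaters. Finally, in the output phase, agents broadcast their shares of the output wire; since at most $t$ shares are corrupted and $n - t \geq 2t+1$, Reed--Solomon error-correction lets every honest agent reconstruct the correct value.

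For the security argument I would exhibit the functions $h$ and $O$ required by the definition via a simulation. Given a coalition $T$ of at most $t$ malicious agents running strategy $\vec\tau_T$, $h(\vec{x}_T)$ is defined by extracting, from $T$'s behavior in the input VSS phase, the unique values to which $T$ is committed (with the default $x_0$ used when VSS fails or no value is sent). Then $O$ is computed by a simulator that, on input $(\vec{x}_T, f(\vec x /_{(T, h(\vec x_T))}), i, r)$, runs a mental execution of $\vec\pi$ with $\vec\tau_T$, generating the honest agents' messages to $T$ at each wire by sampling uniformly random shares consistent with what $T$ already holds, and only injecting the true output at the final reconstruction step. Because any $t$ shares are statistically independent of the secret, this simulated view is distributed identically to the real view of $T$, which in turn forces $T$'s actual output to be a function of $(\vec{x}_T, f(\vec x /_{(T, h(\vec x_T))}), i, r)$ only.

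The main obstacle, and the source of the $n > 3t$ threshold, is realizing a VSS subprotocol and the accompanying degree-reduction multiplication that remain correct and private against $t$ malicious agents; once these building blocks are in place, the outer protocol and the simulation argument go through uniformly. Since this theorem is already established in BGW and is quoted here only as context for the asynchronous bounds that follow, I would sketch the construction above and then appeal to \cite{BGW88} for the verification of the VSS and multiplication subprotocols rather than reprove them in full.
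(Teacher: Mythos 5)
This theorem is stated in the paper purely as a cited background result from \cite{BGW88}; the paper gives no proof of its own, so the only meaningful comparison is with the original BGW argument. Your sketch is a faithful outline of that argument --- Shamir sharing over a field extension of $D$, VSS for input commitment, local addition, degree reduction via resharing and Lagrange recombination for multiplication, Reed--Solomon decoding at output, and a simulator yielding the functions $h$ and $O$ --- and your explicit deferral to \cite{BGW88} for the VSS and multiplication subprotocols is exactly the level of detail the paper itself intends here.
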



Subtleties introduced by asynchrony make the definition of secure
computation slightly more involved in asynchronous systems.
In asynchronous systems, as is standard, we assume that there is a
scheduler
with its own strategy $\sigma_e$
that decides the order in which agents act and how long it
takes for a message to be delivered.
As pointed
out by ADGH, malicious 
agents
 can
effectively communicate with the scheduler, so we can assume
that the adversary and malicious 
agents
 are all controlled by a
single entity. We call this entity the \emph{adversary}; we define
it as a tuple $A = (T, \vec{\tau}_T, \sigma_e)$ consisting of the set
$T$ of malicious 
agents, 
their strategy $\vec{\tau}_T$, and the
scheduler's strategy $\sigma_e$. 
\commentout{
This implies that malicious agents
can deviate in a new way respect to synchronous systems: they can
decide (if they collude with the scheduler): when securely computing a
function $f$, agents cannot distinguish between an agent that did not
send its input from an agent that is being slow, the only guarantee is
that at least $n-t$ 
agents
 eventually send theirs. Thus, the
adversary can decide which subset $C \subseteq [n]$ of size at least
$n-t$ is taken into consideration. 
}
The existence of such adversaries implies that there are
deviations that are possible in asynchronous systems that are not
possible in synchronous systems; specifically, the
scheduler can delay a subset of 
agents
 until the other 
agents 
terminate the protocol. If the number of 
agents
 delayed is less than the
number of malicious 
agents
 that the protocol tolerates, delayed
honest 
agents
 are indistinguishable from malicious 
agents 
  that never
engage in the communication, and thus the remaining 
agents
 must be
able to terminate regardless of the delay. Since the inputs of delayed
honest 
agents
 are not taken into consideration, the
adversary can choose a set $C \subseteq [n]$ of size at least $n-t$
and force the computation to ignore the inputs of 
agents
 not in
  $C$.

To define asynchronous secure computation, BCG introduced another
type of adversary that they called a \emph{trusted-party adversary}.
A $t$-trusted-party adversary is
defined as a quadruple $A = (T, h, c, O)$ where 
\begin{itemize}
  \item $T$ is the set of malicious 
agents;  
\item $h: D^{|T|} \times \{0,1\}^\omega \rightarrow D^{|T|}$ is the input
substitution function; 
\item $c: D^{|T|} \times \{0,1\}^\omega \rightarrow \{C \subseteq [n]
  \mid |C| \ge n-t\}$ is a subset of 
agents  
   (intuitively, the ones
    whose inputs are taken into consideration);
\item 
    $O: D^{|T|} \times \{0,1\}^\omega \times D \times T \rightarrow \{0,1\}^*$
    is the output function for the malicious 
agents.    
\end{itemize}
In the sequel, we use ``trusted-party adversary'' to refer to such a
tuple $(T,h,c,O)$, and reserve the term adversary for a tuple of the form
$(A,\vec{T}_T, \sigma_e)$, as  defined earlier.

\commentout{
Let $\vec{x}$ be a vector in $D^n$. Given a set $C \subseteq [n]$,
denote by $\vec{x}_C$ the vector obtained by projecting $\vec{x}$ onto
the indices of $C$. Also, given a $|C|$-vector $\vec{z} \in D^{|C|}$,
let $\vec{x}_{(B, \vec{z})}$ be the vector obtained by replacing
the entries of $\vec{x}$ indexed by $B$ by those of $\vec{z}$. To
simplify notation, given a function $f: D^n \rightarrow D$, we
write $f_C(\vec{x})$ rather than $f(\vec{x}_{(\overline{C},
    (\vec{x}_0)_{|C|})})$ to denote
the output of evaluating $f$ on $\vec{x}$ 
with the entries in $\vec{x}$ not in $C$ 
replaced  
 by 0. 
 }

Given a function $f: D^n \rightarrow D$, a trusted-party adversary $A
= (T, h, c, O)$, and an input vector $\vec{x}$, let $C = c(\vec{x}_T,
r)$ and $\vec{y} = \vec{x}/_{(T, h(\vec{x}_T, r))}$.  Intuitively, $C$
is the set of 
agents
 whose inputs are considered and $\vec{y}$ is the input
profile obtained by replacing the actual inputs of 
agents
 in $T$
with the output of $h$. The output of $f$ with
trusted-party adversary $A$ and input $\vec{x}$ is an $n$-vector of
random variables $\vec{\rho}(A, \vec{x}; f)$ such that $$\rho_i(A,
\vec{x}; f) =  
\left\{
\begin{array}{ll}
    (C, f_C(\vec{y})) & \mbox{if } i \not \in T\\          
O(\vec{x}_B, r, f_C(\vec{y}), i) & \mbox{if } i \in T.    
\end{array}\right.$$

Note that the outputs of trusted-party adversaries are analogous to
the outputs of secure computation in the synchronous case, except that
here we must take into account the subset $C$ of 
agents
 that provide
their inputs. In asynchronous systems, secure computation is defined
as follows: 

\begin{definition}[Secure computation]\label{def:secure}
  Let $f: D^n \rightarrow D$ be a function of $n$ variables over some
 finite domain 
   $D$. The protocol $\vec{\sigma}$ \emph{$t$-securely computes
$f$ in an asynchronous setting} if the following hold for all
(standard) adversaries $A = (T, \vec{\tau}_T, \sigma_e)$ with $|T| \le t$: 
\begin{itemize}
\item on all inputs, 
agents
 not in $T$ terminate the protocol with
  probability 1;
\item there exists a 
$t$-trusted-party adversary 
$A^{tr} = (T, h, c, O)$ 
    such that, for all inputs $\vec{x} \in D^n$, we have 
  $\vec{\sigma}(\vec{x}, A) \sim \vec{\rho}(A^{tr}, \vec{x};
  f)$ (i.e.,  $\vec{\sigma}(\vec{x}, A)$ and 
  $\vec{\rho}(A^{tr}, \vec{x})$
  are identically distributed).
\end{itemize}
\end{definition}

In other words, a protocol $\vec{\sigma}$ $t$-securely computes some
function $f$ if it terminates with probability 1 and there exists a
trusted-party adversary such that that, for all inputs, gives the 
same distribution over outputs. BCG proved the following
result. 

\begin{theorem}\emph{\cite{BCG93}}\label{thm:BCG}
If $D$ is a finite domain, $n > 4t$, and $f:D^n
\rightarrow D$, then there exists a protocol $\vec{\pi}$
that $t$-securely computes $f$ in asynchronous systems.
\end{theorem}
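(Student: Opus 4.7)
The plan is to follow the classical BCG blueprint and build the protocol by composing three asynchronous primitives: an Asynchronous Verifiable Secret Sharing (AVSS) scheme, an Asynchronous Common Subset (ACS) agreement protocol, and a gate-by-gate circuit evaluation built on top of a linear (Shamir-style) secret-sharing scheme. Concretely, I would first embed $D$ in a finite field $\mathbb{F}$ of size greater than $n$ and express $f$ as an arithmetic circuit of addition and multiplication gates over $\mathbb{F}$. Each agent $i$ then uses AVSS to share its input $x_i$ among the $n$ agents via a degree-$t$ Shamir polynomial; the verifiable part guarantees that even a malicious dealer is committed to some well-defined value that honest agents can later reconstruct.

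The second step is to handle the fundamental asynchronous obstacle that honest but slow agents are indistinguishable from silent malicious ones: the protocol can only ever wait for $n-t$ agents to finish sharing. I would use an ACS subprotocol (built from $n$ parallel instances of asynchronous binary Byzantine agreement, in the Ben-Or/Bracha style) to have all honest agents agree on a common set $C\subseteq[n]$ with $|C|\ge n-t$ of agents whose AVSS completed. This $C$ is precisely what determines the function $c(\vec x_T,r)$ of the trusted-party adversary in Definition~\ref{def:secure}. Inputs of agents outside $C$ are effectively replaced by the default $x_0$ in the definition of $f_C$.

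The third step is gate-by-gate evaluation on the shared inputs. Addition and scalar multiplication gates are computed locally by linearity of Shamir sharing. Multiplication gates use the standard degree-reduction/randomization technique of BGW, adapted to the asynchronous setting: each multiplication consumes auxiliary sharings of random field elements (produced via AVSS in a preprocessing-like phase) so that the product of two degree-$t$ sharings can be rerandomized and reduced back to a fresh degree-$t$ sharing while leaking no information. After the output gate has been reached, agents reconstruct by opening the output sharing via Reed--Solomon decoding, which corrects up to $t$ errors once $n-t\ge 3t+1$ values have arrived. Finally, I would construct the simulator $A^{tr}=(T,h,c,O)$ required by Definition~\ref{def:secure} by running an ideal-world experiment in which a simulator with only $\vec x_T$ extracts the effective inputs of $T$ from their AVSS dealings (defining $h$), records the agreed-upon $C$ from ACS (defining $c$), and feeds simulated transcripts back into the adversary $\vec\tau_T$ to produce its output (defining $O$); standard hybrid arguments show the simulated and real views are identically distributed.

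The main obstacle is the bound $n>4t$ itself. Two places force it: (i) AVSS in asynchrony requires $n>4t$ because reconstruction must succeed from only $n-t$ values in the presence of $t$ malicious shares, so one needs $n-t\ge 3t+1$ to Reed--Solomon-decode, and the verifiable sharing protocol must itself remain live when up to $t$ honest agents are ignored; (ii) reaching agreement on $C$ asynchronously with robustness against $t$ faults similarly relies on $n-t>3t$. The second delicate point is ensuring liveness --- showing that honest agents terminate with probability $1$ on all inputs --- which propagates from the termination of AVSS and of the randomized asynchronous Byzantine agreement inside ACS. Everything else (correctness of arithmetic on shares, secrecy via $t$-privacy of Shamir sharing, and the simulator construction) is then essentially routine once these primitives are in place.
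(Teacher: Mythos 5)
This theorem is not proved in the paper at all: it is quoted directly from Ben-Or, Canetti, and Goldreich \cite{BCG93}, and your outline (AVSS for input sharing, agreement on a core set $C$ of size at least $n-t$, BGW-style linear/multiplication gate evaluation with degree reduction, Reed--Solomon reconstruction, and a simulator yielding the trusted-party adversary $(T,h,c,O)$) is exactly the standard construction of that cited work. So your proposal takes essentially the same approach as the source the paper relies on, and at the level of a sketch it is correct.
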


\subsection{Weaker Notions of Secure Computation}\label{sec:other-secure}

Note that the $T$ in the second condition of
Definition~\ref{def:secure}, that is, the $T$ in the trusted-party
adversary 
$(T,h,c,O)$,
 is the same as the $T$ in the adversary.
This is also true in the BGW definition of $t$-secure computation.
While we believe that this was also the intention of BCG, 
their definition
simply says that that there 
exists
a 
$t$-trusted-party adversary,
 without specifying $T$ (the set of
  malicious agents)
that satisfies the second bullet of Definition~\ref{def:secure}.
   Taking this definition seriously leads to a
slightly weaker notion of secure 
computation
 that we call 
  $t$-\emph{weak secure computation}, which is defined just as
$t$-secure computation
 except that the 
$t$-trusted-party adversary 
$A^{tr}$ may involve any subset $T'$ of malicious agents such that 
 $|T'| = t$ and $T' \supseteq T$, as opposed to consisting of the same
set $T$ of malicious agents as $A$.  

We show next that $t$-weak secure computation is strictly weaker than
the standard notion of secure computation. To do so, first we
introduce an intermediate notion of secure computation called
\emph{$t$-strict secure computation}; it is defined just as $t$-secure
computation, except that we require only that the
properties are satisfied for adversaries of size exactly $t$ (i.e., for
$|T| = t$).
\commentout{
As we mentioned in the introduction, somewhat surprisingly, $t$-strict
secure computation is strictly weaker than $t$-secure computation, but
is equivalent to $t$-weak secure computation.
}
As we mentioned in the introduction, somewhat surprisingly, $t$-strict
secure computation is strictly weaker than $t$-secure computation,
but, as we show next, it is actually equivalent to $t$-weak secure
computation.

\begin{theorem}\label{thm:ranking}
\leavevmode
\begin{itemize}
\item[(a)] If a protocol $\vec{\pi}$ $t$-securely computes a function $f$, it also $t$-strictly securely computes $f$.
\item[(b)] A protocol $\vec{\pi}$ $t$-strictly securely computes a
function $f$ iff it $t$-weakly securely computes $f$. 
\item[(c)] 
If $t > 1$ and $n > 4t$, there
 exists a function $f$ on $n$ variables 
  and a protocol
$\vec{\pi}$ such that $\vec{\pi}$ $t$-strictly securely computes $f$
but does not $t$-securely computes $f$.  
\end{itemize}
\end{theorem}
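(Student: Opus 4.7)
For part (a), the defining condition of $t$-strict secure computation is the defining condition of $t$-secure computation restricted to $|T| = t$, so the implication is immediate. For part (b), both directions follow from a padding argument. For $t$-strict implies $t$-weak: given an adversary $A = (T, \vec{\tau}_T, \sigma_e)$ with $|T| \le t$, extend it to $A' = (T', \vec{\tau}_{T'}, \sigma_e)$ where $T' \supseteq T$, $|T'| = t$, and the new ``malicious'' agents in $T' \setminus T$ simply follow the honest protocol on their actual inputs; because these agents behave honestly, $\vec{\sigma}(\vec{x}, A) = \vec{\sigma}(\vec{x}, A')$ as random variables, and so the trusted-party adversary supplied by $t$-strict security for $A'$ (which has set $T'$ with $T' \supseteq T$ and $|T'| = t$) is exactly what $t$-weak security demands. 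Conversely, if $|T| = t$, the set $T'$ supplied by $t$-weak security satisfies $T \subseteq T'$ and $|T'| = t = |T|$, forcing $T' = T$, which is the form demanded by $t$-strict security.

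For part (c), the plan is to exhibit $f$ and $\vec{\pi}$ such that $\vec{\pi}$ honestly computes $f$ but also leaks an auxiliary value $I(\vec{x})$ that is determined by $(\vec{x}_T, f(\vec{x}))$ when $|T| = t$ but underdetermined when $|T| < t$. Fix a finite field $\mathbb{F}$ with $|\mathbb{F}| > n$ and distinct nonzero $\alpha_1, \dots, \alpha_{n-t} \in \mathbb{F}$, take $D$ large enough to encode $\mathbb{F}^{n-t}$, and let
\[
f(\vec{x}) \;=\; \Bigl(\textstyle\sum_{i=1}^{n} x_i\, \alpha_j^{i-1}\Bigr)_{j=1}^{n-t}.
\]
The defining $(n-t) \times n$ Vandermonde-style matrix has every $(n-t) \times (n-t)$ submatrix invertible, and by Cramer's rule the (unique up to scalar) vector $\vec{w}$ spanning the kernel of any $(n-t) \times (n-t+1)$ submatrix has all nonzero entries. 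Let $\vec{\pi}$ be the protocol that uses the BCG $t$-secure asynchronous protocol (available because $n > 4t$) both to compute $f$ and, in parallel, to compute and broadcast the value $I(\vec{x}) = x_1 \cdot x_2$.

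I show $\vec{\pi}$ $t$-strictly securely computes $f$ as follows: for any adversary $A$ with $|T| = t$, take the trusted-party adversary with $h$ the identity and $c \equiv [n]$, so that $f_C(\vec{y}) = f(\vec{x})$; by the Vandermonde property, $(\vec{x}_T, f(\vec{x}))$ determines $\vec{x}$, so the simulator reconstructs $\vec{x}$, internally simulates $\vec{\pi}$ together with $\vec{\tau}_T$ and $\sigma_e$ on input $\vec{x}$ (absorbing all coin tosses and scheduler choices into the random tape $r$), and reads off the simulated malicious outputs via $O$. I show $\vec{\pi}$ does not $t$-securely compute $f$ by fixing $T$ with $|T| = t-1$ and $\{1,2\} \cap T = \emptyset$ and considering the strategy in which malicious players run $\vec{\pi}$ honestly and output the broadcast $x_1 x_2$. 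For any trusted-party adversary $(T, h, c, O)$, the information available to $O$ amounts to at most $|T| + (n-t) = n - 1$ independent scalar constraints on $\vec{x}$, leaving a one-parameter family of consistent inputs along which $x_1 x_2$ is a nonconstant quadratic (nonconstant because $w_1, w_2 \ne 0$). Hence two distinct inputs share the same $O$-view but have different $x_1 x_2$, so no $O$ can output the correct value for both.

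The main obstacle, and the only delicate step, is arranging this ``dimension jump'' to sit exactly at $|T| = t$: the function $f$ must reveal enough of $\vec{x}$ that the product $x_1 x_2$ becomes computable once the adversary contributes $t$ of its own inputs, yet not so much that $t - 1$ inputs already suffice. The Vandermonde/Reed--Solomon-style choice of $f$ is the natural way to achieve this, and combined with the standard observation that the scheduler and internal randomness of $\vec{\pi}$ can be folded into the trusted-party adversary's random tape, the rest of the argument reduces to the elementary linear-algebraic dimension count and a standard simulation argument.
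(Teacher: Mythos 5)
Parts (a) and (b) are correct and essentially identical to the paper's argument (pad the adversary with agents that follow $\vec{\pi}$, and note that $T'\supseteq T$ with $|T'|=|T|=t$ forces $T'=T$ for the converse). Part (c) pursues the same high-level strategy as the paper---augment a secure computation of $f$ with a leak that a coalition of exactly $t$ agents could already deduce from $(\vec{x}_T,f(\vec{x}))$ but a coalition of $t-1$ could not---with a different instantiation: a Reed--Solomon-style linear $f$ plus a leaked product $x_1x_2$, versus the paper's one-time-padded Shamir shares plus a broadcast of input bits. As written, however, the instantiation has two genuine problems. First, restricting the matrix $(\alpha_j^{\,i-1})_{j,i}$ (rows indexed by the $n-t$ evaluation points, columns by the powers $0,\dots,n-1$) to $n-t$ columns yields a \emph{generalized} Vandermonde matrix, and over a finite field these need not be invertible (e.g., nodes $1,4$ with exponents $0,2$ over $\mathbb{F}_5$). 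You need to transpose the roles---take $f(\vec{x})=\bigl(\sum_i x_i\alpha_i^{\,j-1}\bigr)_{j=1}^{n-t}$ with one evaluation point per coordinate---so that every maximal square submatrix is a genuine Vandermonde matrix.

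Second, and more seriously, running BCG ``in parallel'' to compute $x_1x_2$ gives that second computation its own core set $C'$, which a scheduler colluding with a coalition of size exactly $t$ can arrange to differ from the core set $C$ of the computation of $f$. If, say, agent $2\notin C$ but $2\in C'$, the second computation reveals the true $x_2$ while $f_C(\vec{y})$ does not determine it, so no $O(\vec{x}_T,r,f_C(\vec{y}),i)$ can reproduce the malicious output---your protocol then fails to be $t$-\emph{strictly} secure, which kills the positive half of the claim. The fix is a single computation with a single agreed core set (e.g., securely compute the pair $(f(\vec{x}),x_1x_2)$ and have honest agents output only the first component); this is precisely why the paper folds the leak into the output of one secure computation. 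Finally, in the negative half, the count of ``$n-1$ independent constraints'' does not by itself preclude a trusted-party adversary whose $c$ and $h$ are chosen so that $x_1$ and $x_2$ are pinned down while other coordinates float; you must first argue that matching the honest agents' outputs (which include the core set $C$) forces $c\equiv[n]$ and $h$ to be the identity, after which the all-nonzero-kernel argument applies. All of this is repairable, but the proof of (c) does not go through as stated.
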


\begin{proof}
  Part (a) follows immediately from the definition of secure computation
  and strict secure computation. For part (b), first suppose that a
protocol $\vec{\pi}$ $t$-stricly securely computes $f:D^n
\rightarrow D^n$.   Given an adversary $A = (T,
\vec{\tau}_T, \sigma_e)$ with $|T| \le t$, consider an adversary of the
form $A' = (T \cup T', \vec{\tau}_T + \vec{\pi}_{T'}, \sigma_e)$
(Since the agents in $T'$ play $\vec{\pi}$, they in fact do not deviate.)
Because $\vec{\pi}$
such that $T \cap T' = \emptyset$ and $|T \cup T'| = t$. Since $\vec{\pi}$
$t$-strictly securely computes $f$, there exists a trusted-party
adversary $A^{tr} = (T \cup T', h, c, O)$ such that
$\vec{\pi}(\vec{x}, A') = \vec{\rho}(A^{tr}, \vec{x};f)$. By
construction, 
$\vec{\pi}(\vec{x}, A') \sim \vec{\pi}(\vec{x}, A)$,
 since
the additional malicious agents in $A'$ do not deviate from the
protocol. Therefore,  
$\vec{\pi}(\vec{x}, A) \sim \vec{\rho}(A^{tr},
\vec{x};f)$, so  $\vec{\pi}$ $t$-weakly securely computes $f$. 

The converse is almost immediate from the definitions.  Suppose that
protocol $\vec{\pi}$ $t$-weakly securely computes $f:D^n 
\rightarrow D^n$ for some $t$. Given an adversary $A = (T,
\vec{\tau}_T, \sigma_e)$ with $|T| = t$, then, by assumption, there is
a $t$-trusted-party adversary $A' = (T,h, c, O)$ such that
$\vec{\pi}(\vec{x}, A') = \vec{\pi}(\vec{x}, A)$.

For part (c), consider the following setup. Let $\mathbb{F}_2$ be
the field with domain $\{0,1\}$.   
Given $n$ and $t$ such that $t > 1$ and $n > 4t$, consider a function
$f:  (\mathbb{F}_2)^{n^3} \rightarrow (\mathbb{F}_2)^{n^2}$ that does
the following.
Given the input 
$(x^i, c^i, y^i, z^i) \in (\mathbb{F}_2)
^{n^2}$
 of each 
 agent
$i$, where $x^i\in \mathbb{F}_2$, $c^i \in (\mathbb{F}_2)^n$, $y^i \in
(\mathbb{F}_2)^{t-1}$, and $z^i$ consists of the remaining 
$n^2 - n - t$
 coordinates
(which 
do not affect
the function 
$f$; they are needed 
because in Definition~\ref{def:secure}, the input space of each 
agent
must be the same as the output space of the function), 
 let $p_i \in \mathbb{F}_2[X]$ be the
unique polynomial of degree $t-1$ such that $p_i(0) = x^i$ and $p_i(j)
= y^i_j$ for all $j = 1,2, \ldots, t-1$. The output of 
$f$ is then $\{p_i(j) + c^j_i\}_{i,j \in [n]}$. In other words, $f$
encodes the first coordinate of each 
agent's
 input using Shamir's 
agent
secret sharing scheme \cite{shamir}. The polynomial $p_i$ that each 
agent
$i$ uses to do the encoding 
and the one-time pads $c^j_i$ added by $i$ to each of the shares
are part of $i$'s input, and not known by the
other 
agents.
 However, a coalition $T$ of $t$ malicious 
agents 
  can
reconstruct the values $p_i(j)$ for all $i \in [n]$ and $j \in T$, and
thus is able to reconstruct each $x_i$ as well, since the 
agents
in $T$ know
$t$ points on each polynomial $p_i$, although no coalition of size
strictly smaller than $t$ knows those values.

Consider a protocol $\vec{\pi}$ that consists of the following: each
agent 
$i$ performs its part of BCG's $t$-secure computation protocol
to compute $f$ and then, if $i$ is included in the core set of the
output, $i$ broadcasts the first bit of its input. By the
earlier argument, if the adversary is of size exactly $t$, it can
reconstruct the first coordinate of the inputs of the 
agents
 in the
core-set from the output of $f$ and its own inputs, which means that
the values broadcast after BCG's secure computation protocol do not
give any extra information about the inputs of honest 
agents
 to the
adversary. However, this is not true for smaller adversaries. 
Thus, $\vec{\pi}$ 
 $t$-strictly securely computes $f$, but does not
$t$-securely compute $f$. 
\end{proof}

\subsection{Implementing mediators}

We now formalize the notion of $(k,t)$-robust equilibrium.  Recall
that in this setting, there are three games, an underlying game
$\Gamma$ for $n$ players, which is technically a \emph{Bayesian game},
a mediator game $\Gamma_d$, and a communication game $\Gamma_{ACT}$.
In a Bayesian 
game, players have inputs, and their payoff depends on the profile of
moves made and the input profile.
The set of players is the same in all three games, except that in the
mediator game, there is also a mediator, who can be viewed as a special
non-strategic player (i.e., there is no utility function for the
mediator) and uses a commonly-known strategy, denoted $\sigma_d$.
In the mediator game, the players just communicate with the mediator
(although deviating or malicious players are allowed to communicate
with each other). 
 In the communication game, they communicate among
themselves using a point-to-point network.  After communicating in the
mediator game and the communication game, the
players make a move in the underlying game $\Gamma$, and get payoffs
as in $\Gamma$.
As is standard, we use $\vec{\sigma} := (\sigma_1, \ldots, \sigma_n)$ to
denote a strategy profile for $n$ players in which each player $i$
plays $\sigma_i$; we use $\vec{\sigma} + \sigma_d$ to denote the
strategy profile for $n$ players and a mediator in which each player
$i$ plays $\sigma_i$ and the mediator plays $\sigma_d$;
finally, we use $(\sigma_{-T},\tau_{T})$ to denote the strategy where
each player $i \notin T$ uses the strategy $\sigma_i$ while $j \in T$
uses the strategy $\tau_j$.

In this game-theoretic setting, we are interested in protocols that
are \emph{$(k,t)$-robust}.  To define $(k,t)$-robustness, we need two
preliminary definitions.

\begin{definition}
    Given a game $\Gamma$, a strategy profile $\vec{\sigma}$ is
    \emph{$t$-immune} if for 
all subsets $T$ of size at most $t$ and all strategies $\vec{\tau}_T$
for players in $T$ $u_i(\vec{\sigma}_{-T}, \vec{\tau}_T) \ge
u_i(\vec{\sigma})$ for all $i \not \in T$, where $u_i(\vec{\sigma})$
is the payoff of player $i$ when players play $\vec{\sigma}$.
\end{definition}

Intuitively, a strategy profile is a $t$-immune equilibrium if no
subset of at most $t$  players can decrease the payoff of other
players by deviating, 

\begin{definition}
A strategy profile $\vec{\sigma}$ is a
\emph{$(k,t)$-resilient} (resp., \emph{strongly $(k,t)$-resilient})
\emph{equilibrium} of a game $\Gamma$
if, for all disjoint subsets $K$
and $T$ of sizes at most $k$ and $t$, respectively, and all strategy profiles
$\vec{\tau}_{K \cup T}$ for players in $K \cup T$,
$u_i(\vec{\sigma}_{-(K \cup T)}, \vec{\tau}_{K \cup T}) \le
u_i(\vec{\sigma}_{-T}, \vec{\tau}_{T})$ for some (resp., for all) $i
\in K$. 
\end{definition}

Intuitively, a strategy protocol is a $(k,t)$-resilient if no subset of
at most  $k$ players can all increase their payoffs, even if 
they can collude with up to $t$ malicious players.  It is a strong
$(k,t)$-resilient equilibrium if not even one player in the set can
increase its payoff.

\begin{definition}
 A strategy profile is a \emph{$(k,t)$-robust} (resp., \emph{strongly
$(k,t)$-robust}) \emph{equilibrium} in a game $\Gamma$ if it is $t$-immune and
a $(k,t)$-resilient (resp., 
strongly $(k,t)$-resilient) equilibrium. 
\end{definition}

The notion of
$(k,t)$-robustness was introduced by Abraham, Dolev, Gonen and
Halpern \citeyear{ADGH06}, who also proved the following: 

\begin{theorem}\emph{\cite{ADGH06}}
  If $\vec{\sigma} + \sigma_d$ is a $(k,t)$-robust equilibrium for a
  synchronous game $\Gamma_d$ that extends some game $\Gamma$ and $n >
  3(k+t)$, then there exists a $(k,t)$-robust equilibrium $\vec{\sigma}_{ACT}$
for $\Gamma_{ACT}$ such that, for all input profiles,  the
distribution over outcomes induced by $\vec{\sigma} + \sigma_d$ is identical
to that induced by $\vec{\sigma}_{ACT}$. 
\end{theorem}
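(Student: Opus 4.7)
The plan is to implement the mediator's strategy $\sigma_d$ using a $(k+t)$-secure multiparty computation (MPC) protocol. Since the mediator game is synchronous, we appeal to the BGW theorem: because $n > 3(k+t)$, there is a protocol $\vec{\pi}$ that $(k+t)$-securely computes any function, and in particular any (randomized) reactive functionality, over a synchronous network. The players use $\vec{\pi}$ in $\Gamma_{ACT}$ to simulate the mediator round-by-round: at each communication step of $\Gamma_d$, each player $i$ feeds the message it would have sent to $\sigma_d$ as input to $\vec{\pi}$, and $\vec{\pi}$ produces, as output to player $i$, the message $\sigma_d$ would have sent back to $i$. After the simulated communication phase ends, each player makes the move in $\Gamma$ that $\sigma_i$ prescribes. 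Call the resulting strategy profile $\vec{\sigma}_{ACT}$.

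First, I would verify the distributional claim: when no player deviates, the security of $\vec{\pi}$ guarantees that the joint distribution of messages received by the players is identical to what $\sigma_d$ would produce in $\Gamma_d$, so the induced distribution over action profiles, and hence over outcomes, agrees with that of $\vec{\sigma} + \sigma_d$. Next, for $t$-immunity, fix a coalition $T$ of size at most $t$ with an arbitrary strategy $\vec{\tau}_T$ in $\Gamma_{ACT}$. By the security of $\vec{\pi}$ (which tolerates $k+t \ge t$ deviators), there is a trusted-party adversary $(T,h,O)$ such that the messages the honest players receive in $\Gamma_{ACT}$ under $(\vec{\sigma}_{ACT,-T},\vec{\tau}_T)$ have the same distribution as the messages they would receive in $\Gamma_d$ if players in $T$ played the ``input-substitution'' strategy $\hat{\tau}_T$ (send $h(\vec{x}_T)$ and use $O$ for outputs). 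Because $\vec{\sigma}+\sigma_d$ is $t$-immune in $\Gamma_d$, every honest $i\notin T$ gets payoff at least $u_i(\vec{\sigma}+\sigma_d)$ against $\hat{\tau}_T$, hence the same bound in $\Gamma_{ACT}$.

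For $(k,t)$-resilience, consider disjoint $K, T$ with $|K|\le k$, $|T|\le t$, and any $\vec{\tau}_{K\cup T}$ in $\Gamma_{ACT}$. Again by security of $\vec{\pi}$ for coalitions of size up to $k+t$, there is a corresponding trusted-party adversary, and hence a strategy $\hat{\tau}_{K\cup T}$ in $\Gamma_d$ that yields the same distribution over outcomes as $(\vec{\sigma}_{ACT,-(K\cup T)}, \vec{\tau}_{K\cup T})$. Since $\vec{\sigma}+\sigma_d$ is $(k,t)$-resilient, some $i\in K$ satisfies $u_i(\vec{\sigma}_{-(K\cup T)}+\sigma_d, \hat{\tau}_{K\cup T}) \le u_i(\vec{\sigma}_{-T}+\sigma_d, \hat{\tau}_T)$, and translating back through the simulation gives the required inequality in $\Gamma_{ACT}$. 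Combining, $\vec{\sigma}_{ACT}$ is $(k,t)$-robust.

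The main obstacle is ensuring the simulation is truly faithful under deviations, particularly that rational deviators in $\Gamma_{ACT}$ cannot exploit the implementation of $\sigma_d$ to gain information, influence, or coordination opportunities that they would not have against the real mediator. This is exactly what $(k+t)$-secure computation buys us: the transcript and outputs seen by any coalition of size at most $k+t$ can be simulated from the inputs and outputs of an ``ideal'' mediator interaction, so every deviation in $\Gamma_{ACT}$ projects to a legal deviation in $\Gamma_d$ with the same payoff distribution. A secondary subtlety is handling simultaneous deviations by the $k$ rational players and $t$ malicious players as a single coalition in the MPC sense; treating $K\cup T$ as the corrupt set in the secure computation, of size at most $k+t$, is precisely what makes the bound $n > 3(k+t)$ suffice.
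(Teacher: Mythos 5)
This statement is quoted from Abraham, Dolev, Gonen, and Halpern (2006); the present paper cites it as a known result and contains no proof of it, so there is nothing in the paper to compare your argument against line by line. Your high-level strategy---replace the mediator by a BGW-style $(k+t)$-secure computation and argue that simulation security projects every real-world deviation onto a legal deviation in $\Gamma_d$---is indeed the strategy of the original ADGH06 proof, and your treatment of the no-deviation case and of $t$-immunity is essentially right.

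However, two steps in your sketch have genuine gaps. First, the BGW theorem as stated (and as stated in this paper) securely computes a one-shot function $f:D^n\rightarrow D$; the mediator $\sigma_d$ is a reactive, stateful, multi-round entity, and simulating it requires carrying secret-shared state across rounds. This extension is standard but is not ``BGW, in particular any reactive functionality''---it has to be constructed and its security argued. Second, and more seriously, your $(k,t)$-resilience argument does not close. The definition compares $u_i(\vec{\sigma}_{-(K\cup T)},\vec{\tau}_{K\cup T})$ against $u_i(\vec{\sigma}_{-T},\vec{\tau}_T)$, where $\vec{\tau}_T$ is the \emph{restriction} of $\vec{\tau}_{K\cup T}$ to $T$. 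Your simulation produces some ideal-world strategy $\hat{\tau}_{K\cup T}$ for the coalition $K\cup T$; applying resilience of $\vec{\sigma}+\sigma_d$ bounds the payoff by $u_i(\vec{\sigma}_{-T}+\sigma_d,(\hat{\tau}_{K\cup T})_T)$, i.e., by the restriction of the \emph{simulated} strategy. But the benchmark you must beat in $\Gamma_{ACT}$ is $u_i(\vec{\sigma}_{ACT,-T},\vec{\tau}_T)$, which the simulation maps to $u_i(\vec{\sigma}_{-T}+\sigma_d,\widehat{\vec{\tau}_T})$ for a possibly different ideal strategy $\widehat{\vec{\tau}_T}$. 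Nothing in generic simulation-based security guarantees that simulation commutes with restriction (the simulator for $K\cup T$ is a monolithic entity using the coalition's joint view, and $(\hat{\tau}_{K\cup T})_T$ may not even be well defined as a stand-alone strategy for $T$), so the inequality $u_i(\vec{\sigma}_{-T}+\sigma_d,(\hat{\tau}_{K\cup T})_T)\le u_i(\vec{\sigma}_{ACT,-T},\vec{\tau}_T)$ does not follow. Closing this requires either constructing the simulator so that it does factor across $K$ and $T$, or reorganizing the argument so that both sides of the resilience inequality are translated coherently; this is precisely where the original proof has to work.
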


ADGH proved an 
analogous result for asynchronous systems. Making the statement
precise required a little care since, even for a fixed input, the output
distribution induced by a protocol depends on the scheduler.  This observation
motivates the following definition.

\begin{definition}
  Protocol $\vec{\sigma}$  \emph{implements} protocol $\vec{\tau}$ in an
asynchronous network if, for all input profiles $\vec{x}$ and all
schedulers $\sigma_e$, there exists a scheduler $\sigma_e'$ such that
the distribution over output profiles induced by $\vec{\sigma}$ with
input profile
$\vec{x}$ and scheduler $\sigma_e$ is identical to the
distribution over output profiles induced by $\vec{\tau}$ with
input profile $\vec{x}$ and scheduler $\sigma_e'$.
\end{definition}
Essentially, this definition says that $\vec{\sigma}$ implements
$\vec{\tau}$ if, for all input profiles $\vec{x}$, the set of possible
output distributions of $\vec{\sigma}$ is the same as that of $\vec{\tau}$.


\begin{theorem}\emph{\cite{ADGH19}}\label{thm:ADGH19}
If $\vec{\sigma} + \sigma_d$ is a $(k,t)$-robust strategy for an
game $\Gamma_d$ that extends some game $\Gamma$ and $n >
4(k+t)$, then there exists a $(k,t)$-robust protocol $\vec{\sigma}_{ACT}$
for $\Gamma_{ACT}$ that implements $\vec{\sigma} + \sigma_d$. 
\end{theorem}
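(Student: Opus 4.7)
The plan is to simulate the mediator by an asynchronous secure multiparty computation (SMC) run among the players themselves, treating the union of the $k$-coalition of rational deviators and the $t$-coalition of malicious players as a single adversarial block of size at most $k+t$. Since the hypothesis is $n>4(k+t)$, Theorem~\ref{thm:BCG} applied with parameter $k+t$ provides an asynchronous $(k+t)$-secure computation protocol that the players can invoke. I would begin by breaking $\sigma_d$ into a sequence of deterministic next-message functions, so that each round of the mediator game becomes a function evaluation whose inputs are the players' local views and whose outputs are the messages the mediator would have sent back. The communication game $\vec{\sigma}_{ACT}$ is then obtained by prepending these SMC invocations to whatever local action $\sigma_i$ would have taken after receiving the mediator's reply.

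First I would construct $\vec{\sigma}_{ACT}$ explicitly in this way and verify the \emph{implementation} clause: if no player deviates, the secrecy and correctness guarantees of the SMC protocol imply that, for every input profile $\vec{x}$ and every scheduler $\sigma_e$, the joint distribution of outputs produced by $\vec{\sigma}_{ACT}$ agrees, for some matching scheduler $\sigma_e'$ in the mediator game, with that produced by $\vec{\sigma}+\sigma_d$. The scheduler translation is where one has to pay attention, because the asynchronous SMC only guarantees that a core set $C$ of at least $n-(k+t)$ players contributes its input; the scheduler $\sigma_e'$ in $\Gamma_d$ is chosen to drop the inputs of exactly the players outside $C$, so the induced distributions coincide.

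Next I would establish the game-theoretic properties. For \emph{$t$-immunity}, suppose a set $T$ with $|T|\le t$ deviates arbitrarily in $\vec{\sigma}_{ACT}$; the $(k+t)$-security of the SMC guarantees the existence of a trusted-party adversary whose effect on honest players' views is already realizable as a deviation in $\Gamma_d$ (by a set of $\le t$ players), and $t$-immunity of $\vec{\sigma}+\sigma_d$ then prevents any honest player's payoff from dropping. For \emph{$(k,t)$-resilience}, a coalition $K\cup T$ with $|K|\le k$, $|T|\le t$ that tries to increase some $i\in K$'s payoff in $\vec{\sigma}_{ACT}$ would, by the same simulation argument, give rise to a deviation in $\Gamma_d$ of size at most $k+t$ that achieves the same payoff increase, contradicting $(k,t)$-resilience of $\vec{\sigma}+\sigma_d$. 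A standard punishment mechanism (honest players switch to a minmax action against any publicly detected deviation) handles the case where deviators try to sabotage termination rather than bias the output.

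The main obstacle is the simulation argument in the previous paragraph: the SMC guarantee as stated in Definition~\ref{def:secure} describes a single function evaluation, whereas $\sigma_d$ may be an arbitrary interactive, stateful strategy, and moreover the adversary in $\vec{\sigma}_{ACT}$ controls exactly $T$ rather than $T\cup K$ (the $k$ rational players are not a priori committed to deviating). Bridging this gap requires an inductive, round-by-round hybrid argument showing that an adversary's view in the simulated execution can be generated from its view of a corresponding mediator game, together with a careful treatment of the core-set freedom so that the asynchronous scheduler in $\Gamma_{ACT}$ translates faithfully to a scheduler in $\Gamma_d$. Once this simulation lemma is in hand, $t$-immunity and $(k,t)$-resilience follow mechanically from the corresponding properties of $\vec{\sigma}+\sigma_d$.
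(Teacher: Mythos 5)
First, note that the paper you are working from does not prove this theorem at all: it is imported verbatim from ADGH \citeyear{ADGH19}, and the present paper only uses it as the upper bound against which its lower bounds are matched. So the comparison here is against the ADGH construction rather than against anything in this text. Your high-level plan---simulate the mediator by an asynchronous multiparty computation tolerating $k+t$ corruptions, treating $K\cup T$ as a single adversarial block---is indeed the strategy ADGH follow, but two points in your sketch are genuine problems rather than deferred details.

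The first is the punishment mechanism. You invoke ``a standard punishment mechanism (honest players switch to a minmax action against any publicly detected deviation)'' to handle deviators who sabotage termination. This is inadmissible for the theorem as stated: the claim is for \emph{every} game $\Gamma$, and not every game has a punishment profile. The paper's own conclusion is explicit that punishment is what buys the \emph{stronger} bound $n>3k+4t$ for the restricted class of games that admit it; for $n>4(k+t)$ no punishment is available or needed, and liveness must instead come from the fault tolerance of the asynchronous primitives themselves (Byzantine agreement, verifiable secret sharing, and agreement on a core set all terminate with up to $\lfloor (n-1)/4\rfloor$ corruptions). As written, your argument proves a weaker statement than the theorem. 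The second problem is that the step you yourself flag as ``the main obstacle'' is where essentially all of the content of the ADGH proof lives, and it cannot be discharged by citing Theorem~\ref{thm:BCG} as a black box: that theorem is a stand-alone guarantee for a single, non-reactive function evaluation, whereas $\sigma_d$ is an interactive, stateful, and in general \emph{randomized} strategy (your reduction to ``deterministic next-message functions'' already loses the mediator's coins, which must themselves be generated and kept secret-shared). Composing a sequence of such evaluations, with the mediator's state and randomness secret-shared between rounds, requires either a composable security notion that Definition~\ref{def:secure} does not provide or a direct construction from the underlying primitives with its own simulation proof; the ``round-by-round hybrid argument'' you gesture at is the theorem, not a routine verification. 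Your treatment of the core-set freedom in matching schedulers, and your observation that the rational players are not a priori committed to deviating, are both correctly identified subtleties, but the proposal as it stands is an outline of the ADGH approach with its hardest step left open and one step (punishment) that would have to be removed.
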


It is easy to 
$t$-securely compute a function $f$ with the help of a
mediator: Each player sends its input to the mediator, the mediator
waits until it receives an input from at least $n-t$ 
agents (in synchronous systems it just waits one round), then it computes the
output of $f$ given the input of the players, and sends it to all
players.
\commentout{
It follows that a lower bound for $t$-secure computation gives a lower
bound for $($
}
However, despite the fact that we think of $t$-secure computation in
terms of mediators, 
it is not obvious that Theorem~\ref{thm:BCG}
follows from Theorem~\ref{thm:ADGH19}, due to the differences
between the definitions of $(k,t)$-robustness and secure
computation. 
\commentout{
We show over the next sections how to reduce $t$-secure computation to
implementing $(t,0)$-robust strategies in mediator games, and how to
reduce $(t+k)$-weak secure computation to implementing $(k,t)$-robust
strategies in mediator games. We use both of these reductions to prove
matching lower bounds for Theorem~\ref{thm:BCG} and
Theorem~\ref{thm:ADGH19}. 
}
At the end of Section~\ref{sec:reduction}, we sketch how to reduce
$t$-secure computation to implementing $(t,0)$-robust strategies in
mediator games. 

\section{Main Results}

In this paper we show that the bound in Theorem~\ref{thm:ADGH19} is tight:

\begin{theorem}\label{thm:main2}
If $k+t+1 < n \le 4k+4t$ there exists a $(k,t)$-robust 
(resp., strongly $(k,t)$-robust) strategy profile $\vec{\sigma} + \sigma_d$ 
for $n$ players and a mediator such that there is no $(k,t)$-robust
(resp., strongly $(k,t)$-robust) strategy profile
$\vec{\sigma_{ACT}}$ that implements $\vec{\sigma} + \sigma_d$. 
\end{theorem}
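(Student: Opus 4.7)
The strategy is to reduce Theorem~\ref{thm:main2} to two intermediate facts that the remainder of the paper will establish: (i) an asynchronous impossibility result stating that when $n \le 4(k+t)$ there exist functions $f$ on $n$ variables that cannot be $(k+t)$-strictly securely computed, and (ii) a reduction that associates to every function $f$ a mediator game $\Gamma_d^{f,k,t}$ together with a $(k,t)$-robust (resp.\ strongly $(k,t)$-robust) strategy profile $\vec{\sigma}+\sigma_d$, with the property that any $(k,t)$-robust (resp.\ strongly $(k,t)$-robust) implementation $\vec{\sigma}_{ACT}$ of $\vec{\sigma}+\sigma_d$ must also $(k+t)$-strictly securely compute $f$. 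Once both are in place, pick $f$ from (i) and the corresponding game and strategies from (ii); any putative $\vec{\sigma}_{ACT}$ as in the statement would $(k+t)$-strictly securely compute $f$, contradicting (i).

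For (i), I would take a function in the spirit of the one constructed in the proof of Theorem~\ref{thm:ranking}(c): each agent's input encodes a secret bit, a degree-$(k+t-1)$ polynomial passing through that bit at $0$, and one-time pads, and $f$ outputs the shares of every polynomial blinded by the corresponding agent's pads. By design, a coalition of exactly $k+t$ agents can reconstruct every honest bit from the output, whereas smaller coalitions learn nothing, which is what forces any attacking strict-secure adversary to be of exactly size $k+t$. Then, assuming $n \le 4(k+t)$, I would partition $[n]$ into four blocks of size at most $k+t$ and, via a BCG-style hybrid argument, produce four executions controlled by different adversaries of size exactly $k+t$ (delaying one block and corrupting $k+t$ agents chosen from the others) whose views are pairwise related in such a way that no trusted-party adversary can match the protocol's joint output distribution across different honest inputs, contradicting Definition~\ref{def:secure} under the strict-size restriction.

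For (ii), I would design $\Gamma_d^{f,k,t}$ so that each player's type is its $f$-input, its action is an element of the output space of $f$, and the mediator $\sigma_d$ collects at least $n-t$ inputs and returns $f$ of them (filling missing coordinates with defaults). The utility function would combine two components: a correctness term that strictly rewards every player for outputting the value the mediator would return on the realized core set---this, together with $t$-immunity, forces the implementation's output distribution on honest players to coincide with the mediator's and yields the correctness clause of strict secure computation; and a small guessing term that strictly rewards a $k$-coalition (resp.\ each of its members, for the strong version) whenever it predicts an honest input beyond what the value of $f$ already determines. The guessing term converts any residual information leakage of the implementation into a profitable joint deviation of a $k$-subset aided by up to $t$ scheduler-controlled malicious players, so $(k,t)$-resilience of $\vec{\sigma}_{ACT}$ forces the secrecy clause of $(k+t)$-strict secure computation.

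The main obstacle is step (ii): calibrating the payoffs so that (a) the correctness and secrecy guarantees extracted from $t$-immunity and $(k,t)$-resilience of $\vec{\sigma}_{ACT}$ match, exactly, the two clauses of Definition~\ref{def:secure} (including the scheduler's freedom to select the core set $c(\vec{x}_T,r)$), and (b) the original profile $\vec{\sigma}+\sigma_d$ in the mediator game is itself $(k,t)$-robust in the corresponding sense. The critical subtlety is that the $(k,t)$-robust coalition is asymmetric---its $k$-part is strategic while its $t$-part is malicious---whereas the strict-secure adversary treats all $k+t$ attackers symmetrically; handling this asymmetry, together with the asynchronous flexibility over which inputs are used, is exactly what forces the reduction to target the \emph{strict} variant rather than the standard variant of secure computation. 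Once that calibration is achieved, the impossibility for $n \le 4(k+t)$ follows immediately from (i), and since the assumption $k+t+1 < n$ ensures that $\Gamma_d^{f,k,t}$ has enough honest players for the construction to be non-degenerate, the theorem is complete.
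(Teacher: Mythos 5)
Your overall plan---reduce $(k+t)$-strict secure computation to implementing a $(k,t)$-robust mediator via a game $\Gamma_d^{f,k,t}$, and combine this with an impossibility result for strict secure computation---is exactly the paper's argument, but it only covers the range $3(k+t) \le n \le 4(k+t)$ (this is Theorem~\ref{thm:secure-computation-reduction} together with Theorem~\ref{prop:weak-secure-computation}(b), yielding Corollary~\ref{thm:main}). The genuine gap is your ingredient (i): it is simply not true that for all $n \le 4(k+t)$ there are functions that cannot be $(k+t)$-strictly securely computed. By Theorem~\ref{prop:weak-secure-computation}(a) and the equivalence of weak and strict secure computation (Theorem~\ref{thm:ranking}(b)), \emph{every} function can be $(k+t)$-strictly securely computed when $n \le 2(k+t)$, so in the sub-range $k+t+1 < n \le 2(k+t)$ your claimed impossibility is false, and no calibration of the game in (ii) can rescue the argument there (indeed the reduction itself needs $n > 2(k+t)$). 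Moreover, for $2(k+t) < n < 3(k+t)$ your hybrid argument breaks down at the very first step: the analogue of Lemma~\ref{prop:technical-1} (``if at most $k+t$ agents have input $0$, all agents output $1$'') requires $n \ge 3(k+t)$; the paper explicitly exhibits a trusted-party adversary showing it fails already at $n = 3t-1$, because the core set $c$ may drop $t$ honest agents with input $1$ while keeping the $t$ substituted zeros. So the route through secure computation cannot reach below $3(k+t)$.

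The paper closes the remaining range $k+t+1 < n \le 3(k+t)$ with a different reduction that your proposal does not contain: Theorem~\ref{thm:reduction-to-weak2} builds a mediator game $\Gamma_d^{k,t}$ whose strongly $(k,t)$-robust implementations yield a $(k+t)$-resilient implementation of \emph{weak consensus}, and then invokes Lamport's lower bound that weak consensus is impossible for $n \le 3(k+t)$. You would need to add this (or some substitute impossibility that holds down to $n = k+t+2$) to obtain the full statement of the theorem. Two smaller remarks: the hard function for (i) in the paper is the majority-like function $f^n$, not a variant of the Shamir-based function from Theorem~\ref{thm:ranking}(c) (that function is designed to \emph{be} strictly securely computable, to separate strict from standard security); and in (ii) the deviating coalition that must be rewarded for a correct guess has size exactly $k+t$ with the guesses required to agree---this exact-size requirement is precisely why the reduction targets strict rather than standard secure computation, a point your sketch gestures at but does not pin down.
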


The proof of Theorem~\ref{thm:main2} is divided in two parts. 

\subsection{Case 1: $3k + 3t \le n \le 4k + 4t$}

Here, we show that that $(k+t)$-strictly securely computing a function $f$ reduces to implementing a $(k,t)$-robust strategy $\vec{\sigma} + \sigma_d$ for some game $\Gamma^{f,k,t}$. To make this precise, we need the following definition:

\begin{definition}
If $g: A \rightarrow B$, and  $\sigma$ is a strategy that plays
actions in $A$, then $g(\sigma)$ is the strategy that is
identical to $\sigma$ except that each action $a \in A$ is replaced by
$g(a) \in B$.  If $\vec{\sigma}$ is a strategy profile where each
player $i$ plays actions in $A$, then $g(\vec{\sigma}) = 
(g(\sigma_1), \ldots, g(\sigma_n))$.  
\end{definition}

\begin{theorem}\label{thm:secure-computation-reduction}
    If $f:D^n \rightarrow D$, $D$ is a finite domain,
and $2(k+t) < n$, then there
exists a game $\Gamma_d^{f,k,t}$ in which all players have the same set $A$
of possible actions, a function $g: A \rightarrow D$, and a
$(k,t)$-robust strategy 
(resp., strongly $(k,t)$-robust strategy)
$\vec{\sigma} + \sigma_d$ for $n$ players and
the mediator in $\Gamma_d^{f,k,t}$ such that if $\vec{\sigma}_{ACT}$ is
 $(k,t)$-robust implementation
(resp., strongly $(k,t)$-robust implementation) 
  of $\vec{\sigma} +\sigma_d$,
 then $g(\vec{\sigma}_{ACT})$ $(k+t)$-strictly securely
computes $f$. 
\end{theorem}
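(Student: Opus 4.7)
The plan is to construct the mediator game $\Gamma_d^{f,k,t}$ so that its natural $(k,t)$-robust mediator equilibrium corresponds to the mediator-based computation of $f$, and so that any $(k,t)$-robust implementation, after applying the action-to-output map $g$, satisfies the correctness and privacy requirements of $(k+t)$-strict secure computation. Concretely, I would take each player's type to be $x_i \in D$; the action set $A$ of the underlying game $\Gamma$ to contain $D$ (with $g: A \to D$ extracting the declared output); the mediator strategy $\sigma_d$ to be the standard one (wait until $n-t$ inputs have arrived, pad missing entries with the default $x_0$, compute $f$, and broadcast the result); and the honest $\sigma_i$ to submit $x_i$ and output the mediator's reply. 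The hypothesis $n > 2(k+t)$ ensures that the wait-for-$n-t$ mediator strategy is well-defined and that at least one honest player always lies outside any $(k+t)$-coalition.

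I would design the payoff $u_i$ so that: (i) every value in the ``mediator range'' $M(\vec x) = \{f(\vec x/_{(S, \vec z)}): |S| \le t, \vec z \in D^{|S|}\}$ gives player $i$ the maximal baseline payoff of $1$, and (ii) the action set and payoffs include a component that rewards a player for outputs that could only be produced by someone with strictly more information than $f_C(\vec y)$, calibrated so that under the mediator every $K$-coalition player is already at the maximum payoff. With these payoffs it is straightforward to verify that $\vec\sigma + \sigma_d$ is (strongly) $(k,t)$-robust: the mediator hands every honest player a value in $M(\vec x)$, so $t$-immunity is immediate; and no $K$-coalition (resp.\ single player in $K$) can strictly improve beyond their mediator-based payoff, yielding (strong) $(k,t)$-resilience.

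The main work is the reduction. Given a $(k,t)$-robust implementation $\vec\sigma_{ACT}$, I would produce, for each adversary $A = (T, \vec\tau_T, \sigma_e)$ with $|T| = k+t$, a trusted-party adversary $A^{tr} = (T, h, c, O)$ matching the output distribution of $g(\vec\sigma_{ACT})(\vec x, A)$. The functions $h, c, O$ are read off from a simulation of $\vec\sigma_{ACT}$ against $\vec\tau_T$: $h(\vec x_T, r)$ records the effective inputs the coalition ends up contributing; $c(\vec x_T, r)$ records the set of players whose inputs are ultimately incorporated; and $O(\vec x_T, r, y, i)$ records the simulated output of malicious $i$ when the public value is $y$. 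Correctness of the honest outputs (they equal $f_C(\vec y)$ for some $|C| \ge n - (k+t)$) is immediate: $\vec\sigma_{ACT}$ implements $\vec\sigma + \sigma_d$ in the no-adversary case, so honest outputs under any scheduler coincide with the mediator's, which always lie in $M(\vec x) \subseteq \{f_C(\vec y): |C| \ge n - t \ge n - (k+t)\}$.

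The hard part, and main obstacle, is the privacy step: showing that the malicious outputs depend on $\vec x_{-T}$ only through $f_C(\vec y)$, so that the function $O$ above is well-defined. Here I would partition $T = K \sqcup T'$ with $|K| = k$, $|T'| = t$, and argue by contradiction: if the output distribution of $\vec\sigma_{ACT}$ against $A$ carried any more information about $\vec x_{-T}$ than $f_C(\vec y)$ does, then $K$ could modify its strategy to exploit that information and pick actions that strictly improve its payoff relative to the baseline $1$, contradicting the $(k,t)$-resilience of $\vec\sigma_{ACT}$. Making this ``extra information implies profitable deviation'' step airtight is the delicate core of the proof, as it requires the payoff structure of $\Gamma$ to faithfully convert any distinguishability between honest input profiles (given the same $f_C(\vec y)$) into a strict payoff gain for some player of $K$; the assumption $n > 2(k+t)$ is essential here to guarantee that the coalition always leaves an information-theoretically hidden honest input, so that the privacy violation is nontrivial and exploitable.
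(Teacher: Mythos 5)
Your high-level plan matches the paper's, and you correctly locate the crux, but you leave the crux unsolved: the construction of a payoff that provably converts any leakage beyond $f_C(\vec{y})$ into a strict payoff gain is the entire content of the paper's proof, and nothing in your sketch supplies it. Concretely, the paper's game draws the input profile uniformly at random from $D^n$ and makes each action a tuple $(Q_i, C_i, z_i, \vec{s}_i, b_i)$: a self-declared type in $\{G,R,M\}$, the claimed core set and output, Shamir shares (degree $k+t$) of the inputs actually used in the computation, and a guess $b_i \in D^n \cup \{\bot\}$ of the whole input profile. The guessing reward is paid only when exactly $k+t$ players play non-$G$ and all make the same guess (this makes the deviating coalition identifiable from the action profile alone, and is precisely why only \emph{strict} secure computation follows), and it equals $p^{-1}-1$ for a correct guess versus $-1$ for a wrong one, where $p$ is the probability of the true profile conditional on $C$, $z$, and the effective inputs reconstructed from the shares. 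This calibration makes the expected gain exactly zero absent extra information and strictly positive with it; the final step then uses a conditional-independence lemma to conclude that in any robust implementation the adversary's history is independent of the honest inputs given $\vec{x}_T$ and $v$, which is what makes your $O$ well-defined. Your sketch cannot reach this conclusion as stated: the payoff function sees only actions and types, so without the shares it cannot even evaluate the conditional probability defining the reward (deviators may lie about the inputs they use), without a specified prior on inputs the ``probability of guessing'' is undefined, and your calibration claim that coalition members are already at the maximum payoff under the mediator contradicts your later need for them to strictly improve. (Also, your mediator waits for $n-t$ inputs, but up to $k+t$ players may withhold theirs, so it must wait for $n-k-t$.)

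Two further steps you declare immediate are not. Correctness of honest outputs under an adversary with $|T| = k+t$ does not follow from the implementation property, which constrains behavior only when everyone follows $\vec{\sigma}_{ACT}$; and for $k>0$, $t$-immunity does not cover coalitions of size $k+t$, so under your payoffs a coalition could drive honest outputs outside the admissible range at no cost to itself. The paper handles this with the $u^1$ component together with the declared type: if the action profile is not $(T,\vec{x})$-secure and some deviator declares $R$, everyone gets payoff $1$ (so the coalition strictly gains, contradicting resilience), and if none does, everyone gets $-1$ (contradicting immunity); this is exactly its lemma that every history with a size-$(k+t)$ adversary yields a $(T,\vec{x})$-secure profile. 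Relatedly, you do not treat the $k=0$ case, where gains for the coalition are impossible and the game must instead transfer the guessing reward/penalty onto the non-deviators so that leakage translates into harm to honest players, nor do you address the strongly robust variant.
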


Theorem~\ref{thm:secure-computation-reduction} shows that being able to implement all $(k,t)$-robust mediators with $n$ players implies that all functions on $n$ variables can be $(k+t)$-strictly securely computed. The proof of Theorem~\ref{thm:main2} for $3(k+t) \le n \le 4(k+t)$ follows from the fact that if $3(k+t) \le n \le 4(k+t)$ there exist functions that cannot be $(k+t)$-weakly securely computed:

\begin{theorem}\label{prop:weak-secure-computation}
\leavevmode
\begin{itemize}
\item[(a)] If $n > 4t$ or $n \le 2t$, every function $f:D^n \rightarrow D$
    can be $t$-weakly securely computed
  in asynchronous systems. 
      \item[(b)] If 
$3t \le n \le 4t$, 
   there exists a domain $D$ and a function $f: D^n \rightarrow D$ that cannot be $t$-weakly securely computed
  in asynchronous systems.  
   \end{itemize}
\end{theorem}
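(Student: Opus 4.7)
The plan is to prove Theorem~\ref{prop:weak-secure-computation} by treating parts (a) and (b) separately, with part (b) as the main technical content.

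For part (a) with $n > 4t$, the claim is essentially immediate: Theorem~\ref{thm:BCG} provides a protocol $\vec{\pi}$ that $t$-securely computes $f$; Theorem~\ref{thm:ranking}(a) makes $\vec{\pi}$ also $t$-strictly secure; and Theorem~\ref{thm:ranking}(b) upgrades this to $t$-weak security. For $n \le 2t$, I would exhibit a trivial universal protocol (each agent broadcasts its input and outputs a pair $(C, f_C(\vec{x}'))$ built from its received messages and default values) together with a matching trusted-party adversary for each standard adversary with $|T| = t$. The key observation is that in this fault-heavy regime $n - t \le t$, so the trusted party has enough freedom in choosing the core set $C$ and the substitution function $h$ to reproduce the protocol's output distribution; intuitively, when the adversary controls at least half the players, almost any protocol behaviour admits a simulation.

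For part (b), I would exhibit an explicit $D$ and $f$ and argue by a four-world contradiction that no protocol $t$-weakly securely computes $f$. By the equivalence of $t$-weak and $t$-strict security (Theorem~\ref{thm:ranking}(b)), it suffices to rule out $t$-strict secure computation, which simplifies the analysis because we only need to consider adversaries of size exactly $t$. Partition $[n]$ into four groups $G_1, G_2, G_3, G_4$ of size at most $t$ each, which is possible precisely because $n \le 4t$, and choose $D$ and $f$ so that the output depends non-trivially on each group's inputs (for instance, take $D$ to be a sufficiently large finite field and $f$ to be the identity on $\vec{x}$, or a suitable weighted sum). Assume for contradiction that a protocol $\vec{\pi}$ $t$-strictly securely computes $f$ and consider four adversaries $A_1,\ldots,A_4$, where in $A_j$ the adversary controls $G_j$ (making it simulate honest behaviour on some alternative input) while the scheduler delays messages from a designated neighbouring group; by $t$-strict security, each $A_j$ admits a trusted-party adversary $A^{tr}_j$ matching the protocol's output distribution. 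Carefully chosen input profiles make the honest-agent views coincide in specific pairs of worlds, forcing the corresponding protocol output distributions to coincide, and the resulting system of constraints on the $A^{tr}_j$'s turns out to be incompatible with the fact that $f$ genuinely depends on all inputs.

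The main obstacle is executing the four-world argument in part (b). Identifying which pairs of worlds have identical honest views, and then extracting a clean contradiction from the joint constraints on the four trusted-party adversaries, is the heart of the proof. A secondary technical point is the regime $3t \le n < 4t$, where at least one block of the partition must have size strictly less than $t$; the argument must be set up so that the distinguishing power of $f$ is preserved even with unequal blocks. Finally, because the trusted-party adversaries are randomised, the indistinguishability must hold at the level of joint output distributions rather than individual executions, so the random strings of the $A^{tr}_j$ must be coupled appropriately.
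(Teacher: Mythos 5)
Your outline for part (b) has the right general shape (partition into four groups, make one group talk inconsistently to two others, use indistinguishability of worlds), but it stops exactly where the actual proof begins, and the choices you do commit to would not work. First, the choice of $f$: taking $f$ to be ``the identity on $\vec{x}$'' or a weighted sum is problematic, because the whole difficulty in the asynchronous setting is that the trusted-party adversary gets to choose a core set $C$ of size $n-t$ and an input substitution $h$, so for such an $f$ there is no single ``correct'' output to contradict. The paper instead uses the majority function $f^n$ on $\{0,1,\bot\}^n$ precisely so that a correctness lemma (Lemma~\ref{prop:technical-1}) pins the output down to $1$ whenever all agents are honest and at most $t$ have input $0$ --- and this lemma is exactly where the hypothesis $n \ge 3t$ enters (it is false for $n = 3t-1$), a point your proposal never engages with even though it determines the range in part (b). Second, and more importantly, the core of the argument is producing two histories $H$ (input $(1,0,1,1)$) and $H'$ (input $(0,1,1,1)$) under the same scheduler with the \emph{same} conversation between agents $1$ and $2$ and the same scheduler history; this is Lemma~\ref{lemma:same-histories}, obtained by chaining through an intermediate world with input $(1,1,0,1)$ using the secrecy property twice. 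Only with $H_{1,2}=H'_{1,2}$ in hand can the malicious group $3$ simultaneously feed agent $1$ its messages from $H'$ and agent $2$ its messages from $H$ and have the induction go through. Your proposal explicitly defers this (``the resulting system of constraints \ldots turns out to be incompatible'') and names it as the main obstacle, so the essential content --- why the two deceptions are mutually consistent, which is precisely the gap in Canetti's original argument that the paper is fixing --- is missing.

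There is also a concrete error in part (a) for $n \le 2t$: a protocol in which every agent broadcasts its input does not $t$-weakly securely compute $f$. It leaks the honest agents' inputs to the adversary, whose output must be expressible as $O(\vec{x}_T, r, f_C(\vec{y}), i)$, which is impossible for, say, constant $f$; and under asynchrony different honest agents would assemble different core sets and disagree on their outputs. Having the adversary control half the players does not make ``almost any protocol behaviour'' simulatable. The correct construction is the opposite extreme: each agent sends nothing and outputs $([n-t], f(\bot^n))$, which is simulatable exactly because $n-t \le t$ lets the trusted-party adversary exclude or substitute every coordinate.
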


The proof of Theorem~\ref{thm:weak-secure-computation} is given in Section~\ref{sec:lower-secure-comp}. A slight variation of the proof provides a simple proof for the well-known lower bound for secure computation on asynchronous systems:

\begin{theorem}\label{prop:secure-computation}
\leavevmode
\begin{itemize}
\item[(a)] If $n > 4t$ or $n \le t$, for all domains $D$, every
  function $f:D^n \rightarrow D$ can be $t$-securely computed
  in asynchronous systems. 
\item[(b)] If $t < n \le 4t$ there exists a domain $D$ and a function $f:D^n \rightarrow D$ that cannot be $t$-securely computed
  in asynchronous systems.  
\end{itemize}
\end{theorem}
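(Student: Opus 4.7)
The plan is to handle parts (a) and (b) separately.

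For part (a), I would consider two cases. If $n > 4t$, the conclusion is exactly Theorem~\ref{thm:BCG}. If $n \le t$, I would observe that the constraint $|C| \ge n - t$ on the trusted-party core set is vacuous (the empty set qualifies), so the trivial protocol in which every agent outputs $(\emptyset, f(\vec{x}_0))$ without sending any messages $t$-securely computes $f$: for any adversary $A = (T, \vec{\tau}_T, \sigma_e)$ with $|T| \le t$, the trusted-party adversary with $c \equiv \emptyset$ and $h, O$ chosen to match the malicious agents' local behavior yields the same output distribution as $\vec{\pi}$, since $f_\emptyset(\vec{y}) = f(\vec{x}_0)$ is independent of $\vec{y}$.

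For part (b), the plan is to split the range $t < n \le 4t$ into two sub-ranges. For $3t \le n \le 4t$, impossibility follows directly from Theorem~\ref{prop:weak-secure-computation}(b) together with Theorem~\ref{thm:ranking}(a)--(b): any function that cannot be $t$-weakly securely computed cannot be $t$-securely computed, since $t$-secure computation implies $t$-strict secure computation, which is equivalent to $t$-weak secure computation.

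For the remaining range $t < n < 3t$, I would adapt the partitioning-based indistinguishability argument used in the proof of Theorem~\ref{prop:weak-secure-computation}(b) --- the ``slight variation'' alluded to in the main text --- by partitioning the $n$ players into three groups $G_1, G_2, G_3$ of size at most $t$ and by using a simpler function, such as the broadcast function $f(\vec{x}) = x_1$ over $D = \{0,1\}$ (or a suitable encoding into a function $D^n \rightarrow D^n$ in the style of Theorem~\ref{thm:ranking}(c)). The argument would then exhibit two adversarial scenarios --- in which different groups are malicious while a third is delayed by the scheduler --- that are indistinguishable to the honest players in the remaining group, forcing their common output to correspond to two different values of $x_1$; this contradicts the existence of a matching trusted-party adversary whose malicious set $T$ is exactly the one used by the actual adversary. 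The main obstacle will be formalizing this indistinguishability carefully in the asynchronous model of Definition~\ref{def:secure}, especially tracking the core set $C$ chosen by the trusted-party adversary across scenarios and, in the edge case $n = t+1$ where only one honest agent exists and the agreement-style argument does not directly apply, falling back on a privacy-style argument in the spirit of Theorem~\ref{thm:ranking}(c) to show that any candidate protocol must leak more than can be attributed to any trusted-party adversary whose $T$ equals the actual malicious set.
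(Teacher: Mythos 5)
Your part (a) and your handling of the range $3t \le n \le 4t$ in part (b) are fine and match the paper's logic: the trivial no-message protocol works for $n \le t$, Theorem~\ref{thm:BCG} covers $n > 4t$, and for $3t \le n \le 4t$ the chain ``$t$-secure $\Rightarrow$ $t$-strict $\Leftrightarrow$ $t$-weak'' from Theorem~\ref{thm:ranking} does transfer the impossibility of Theorem~\ref{prop:weak-secure-computation}(b). The genuine gap is the range $t < n < 3t$. Your three-group partition with the broadcast function $f(\vec{x}) = x_1$ does not support the indistinguishability argument you invoke, for two reasons. First, the mechanism behind Lemma~\ref{lemma:same-histories} needs \emph{four} roles: a delayed group (so the active agents cannot wait for it), a malicious group, and \emph{two} honest active groups whose local histories are stitched together along their common conversation; with three groups, after delaying one and corrupting one you are left with a single honest active group, and there is no pair of honest parties to feed mutually inconsistent worlds. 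Second, the cross-input step of that lemma (the existence of a history $H''$ with a permuted input profile but the same local history for one honest party) is driven by input secrecy of the function being computed; for $f(\vec{x})=x_1$ the only input that matters is not secret, and the trusted-party adversary can in any case legitimately drop player 1 from the core set and output the default value, so the set of acceptable outputs is too large to contradict easily. Your fallback ``privacy-style argument'' for $n=t+1$ is not developed, and privacy is the wrong lever there.

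What the paper actually does for this range: for $t+2 \le n \le 4t$ it keeps the four-group majority-function argument, partitioning the agents into $S_1,S_2,S_3$ of size $\lceil (n-t)/3 \rceil$ and $S_4$ of size at most $t$, and replaces Lemma~\ref{prop:technical-1} by Lemma~\ref{prop:technical-3} (in all-honest histories with at most $(n-t)/2$ zeros the output must be $1$), which holds for all $n \ge t+2$ because even the empty adversary must correspond to a trusted-party adversary with $T=\emptyset$ and a core set of size at least $n-t$; Lemma~\ref{prop:inequalities} supplies the needed size inequalities. For $n = t+1$ it switches to an \emph{agreement}-based argument: it reduces $t$-resilient weak consensus to $t$-secure computation of the ``first non-$\bot$ input'' function $g^n$ and invokes Lamport's impossibility of weak consensus for $n \le 3t$. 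That same reduction in fact covers your entire problematic range $t < n \le 3t$, so the cleanest repair of your proof is to replace your three-group construction by the weak-consensus reduction.
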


\subsection{Case 2: $k+t+1 < n \le 3k + 3t$}

If $k + t + 1< n \le 3k + 3t$ we show that implementing $(k+t)$-resilient weak consensus with $n$ players can be reduced to implementing $(k,t)$-robust mediators:

\begin{theorem}\label{thm:reduction-to-weak2}
If $n > k+t+1$, then there exists a game
$\Gamma^{k,t}_d$ in which all players have the same set $A$ of
possible actions, a function $g: A \rightarrow \{0,1\}$, and a
$(k,t)$-robust (resp., strongly $(k,t)$-robust) strategy $\vec{\sigma} + \sigma_d$ for $n$ players and
the mediator in $\Gamma_d^{k,t}$ such that if $\vec{\sigma}_{ACT}$ is
a $(k,t)$-robust (resp., strongly $(k,t)$-robust) implementation of $\vec{\sigma} + \sigma_d$,
then $g(\vec{\sigma}_{ACT})$ is a $(k+t)$-resilient implementation of weak consensus. 
\end{theorem}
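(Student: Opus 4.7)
The approach mirrors Theorem~\ref{thm:secure-computation-reduction}, replacing the role of secure computation by weak consensus. I would construct the mediator game $\Gamma^{k,t}_d$ as follows: every player $i$ has a private binary input $x_i \in \{0,1\}$ (its proposed consensus value); all players share a common action set $A$; and $g: A \to \{0,1\}$ extracts the output bit. The payoff is defined by $u_i(\vec{a}, \vec{x}) = 1$ exactly when $|\{j : g(a_j) = g(a_i)\}| \ge n-k-t$, and $u_i = 0$ otherwise. The hypothesis $n > k+t+1$ ensures $n - k - t \ge 2$, so this threshold is nontrivial and is met precisely when enough players agree on the same output bit.

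The mediator strategy $\sigma_d$ collects inputs from the first $n-t$ players to respond, applies a canonical rule $f$ that returns $w$ whenever all received inputs equal $w$ (and outputs some default otherwise), and broadcasts the result $v$; each honest player $\sigma_i$ sends $x_i$, awaits $v$, and plays any action with $g(a_i) = v$. Honest play attains the maximum payoff $1$: even in the presence of up to $k + t$ deviators, the $\ge n - k - t$ honest players all play the same action under $g$, giving each honest player at least $n - k - t$ matches. It follows that $\vec{\sigma} + \sigma_d$ is $t$-immune --- the honest payoff stays at $1$ under any deviation of at most $t$ players, since then $\ge n - t \ge n - k - t$ players still agree --- and $(k, t)$-resilient because no coalition can exceed payoff $1$. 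The strongly $(k, t)$-robust case works identically.

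For a $(k, t)$-robust implementation $\vec{\sigma}_{ACT}$ of $\vec{\sigma} + \sigma_d$, agreement and validity of $g(\vec{\sigma}_{ACT})$ follow from the implementation relation: for any input profile and scheduler there is a matching mediator-game scheduler producing the identical output distribution, and in the mediator game every player outputs the single broadcast value $v$, which equals $w$ whenever $\vec{x} = (w, \ldots, w)$ (the mediator then receives only copies of $w$, and $f$ returns $w$). For $(k + t)$-resilience, any candidate deviating coalition $C$ with $|C| \le k + t$ decomposes as $K \cup T$ with $|K| \le k$ and $|T| \le t$; applying the $(k, t)$-resilience clause of $\vec{\sigma}_{ACT}$ produces some $i \in K \subseteq C$ with $u_i(\vec{\sigma}_{-(K \cup T)}, \vec{\tau}_{K \cup T}) \le u_i(\vec{\sigma}_{-T}, \vec{\tau}_T) \le 1 = u_i(\vec{\sigma})$, which is precisely the $(k+t,0)$-resilience condition. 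The strongly $(k, t)$-robust case gives the stronger conclusion that no $i \in C$ strictly improves.

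The main technical subtlety is calibrating the threshold $n - k - t$ in the payoff together with the mediator's canonical rule $f$ so that honest play indeed attains payoff $1$ in every asynchronous execution --- in particular, ensuring $t$-immunity requires that the honest $g$-images coincide for every honest player even when $t$ players deviate and lie to the mediator, which is why the threshold is set to $n - k - t$ rather than the tighter $n - t$. Once this calibration is done, the rest of the argument (transferring agreement and validity through the implementation relation, and reducing $(k + t)$-resilience to $(k, t)$-resilience by the decomposition above) is entirely routine and follows the same template as the proof of Theorem~\ref{thm:secure-computation-reduction}.
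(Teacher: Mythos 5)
There is a genuine gap, and it is at the heart of the reduction. Your payoff function makes honest play achieve the \emph{maximum} possible payoff ($1$), which renders the robustness of the implementation almost vacuous: $(k,t)$-resilience never bites because no coalition can ever gain, and the implementation relation itself (as defined in the paper) only constrains the output distribution of \emph{honest} executions, so it cannot deliver agreement ``for all adversaries $A=(T,\vec{\tau}_T,\sigma_e)$ with $|T|\le k+t$,'' which is what clause (a) of $(k+t)$-resilient weak consensus requires. Concretely: for adversaries of size between $t+1$ and $k+t$, neither $t$-immunity nor $(k,t)$-resilience of $\vec{\sigma}_{ACT}$ imposes any constraint under your payoffs, so a coalition of that size is free to drive honest players to different outputs. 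Even for adversaries of size $\le t$, $t$-immunity only forces each honest player to have $\ge n-k-t$ matches, which does not force agreement: with $k=t=1$, $n=4$, two honest players can output $0$ while one honest player outputs $1$ and the single deviator also outputs $1$; every player then has at least $n-k-t=2$ matches and payoff $1$, yet the honest players disagree. Your final paragraph also conflates the $(k+t)$-resilience of weak consensus (a distributed-computing condition quantifying over adversaries of size $\le k+t$) with a game-theoretic resilience clause; the decomposition $C=K\cup T$ does not convert one into the other.

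The paper's construction is designed precisely to close this hole, and the mechanism it uses is the part your proposal is missing. The equilibrium payoff is set to $0$, each action carries a self-declared type in $\{G,R\}$, and the payoff function awards $1$ to everyone (or $-1$ when $k=0$) exactly when either more than $k+t$ players play $R$, or exactly $k+t$ players play $R$ and two of the \emph{remaining} players disagree on their bit. The key step is then an indistinguishability argument: given any adversary $A=(T,\vec{\tau}_T,\sigma_e)$ with $|T|\le k+t$ that produces a history in which two honest players disagree (or play $R$), one pads $T$ to a set $T'$ of size exactly $k+t$ avoiding those two players, with the added players behaving honestly except for declaring $R$; honest players cannot distinguish the two scenarios, so the disagreement persists, the ``bad event'' fires, and the coalition strictly gains (contradicting $(k,t)$-resilience when $k>0$) or the honest players strictly lose (contradicting $t$-immunity when $k=0$). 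Without a device of this kind---some way for the payoff function to identify the deviating coalition and strictly reward it for inducing disagreement among the others---the robustness hypothesis on $\vec{\sigma}_{ACT}$ gives you no leverage over deviating executions, and the reduction fails.
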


The proof of Theorem~\ref{thm:main2} for $k+t+1 \le n \le 3(k+t)$ follows from Lamport's lower bound for weak consensus~\citeyear{cite:L83}.

\section{Proof of Theorems~\ref{prop:weak-secure-computation} and \ref{prop:secure-computation}}\label{sec:lower-secure-comp}

In this section we prove the
$n > 4t$
 lower bound for $t$-secure
 computation
(also proved or claimed in \cite{ADS20,BKR94,canetti96studies}),
 which we strengthen slightly by showing
that it applies to $t$-weak secure computation.

Our proof is similar to 
that of Canetti \citeyear{canetti96studies}
 at a high level: We construct a function $f$
with four inputs, the scheduler schedules the
agents 
 so that the
fourth 
agent
 never gets to participate in the computation, and one of
the three remaining 
agents 
 is malicious and manages to trick the
other two participating 
agents 
 into outputting something inappropriate.
  Canetti then claims that
\emph{conversations} between 
agents 
 (where a conversation is just the
collection of 
messages sent by two given 
agents)
 must be independent of the inputs
of the 
agents, 
agent
 3 can send messages to 
agents  
  1 and 2 in such
a way that 
agents 
 1 and 2 believe they should output different
values. However, there are two significant problems with this approach: 
\begin{itemize}
\item[(a)] First, the conversations between the 
agents 
 might not be
    totally independent of their inputs, since they can depend on the
  output of the computation, and this ultimately does depend on the
  inputs. For example, 
agents   
can run Bracha's \citeyear{Br84}  consensus protocol 
(which tolerates $t$ malicious 
agents   
   if $n > 3t$) after
  terminating the secure computation protocol to decide the
  output. This would guarantee that all honest 
agents   
   output the same
      value at the end of the computation, so their conversations are
      certainly not independent. 
\item[(b)] Second, there is a more subtle issue when trying to
  simultaneously trick 
agents   
   1 and 2 into outputting some given
  values $a$ and $b$, respectively. Even though 
Canetti proves 
that for the function $f$ that he uses and a particular input
$\vec{x}$, for each
  conversation $h_{1,2}$ between 1 and 2, there is a protocol for
  player 3 that results in a conversation
  $h_{1,3}$ between 1 and 3 such that 1 outputs $a$, and that for each
    conversation $h_{1,2}$ between 1 and 2 there exists a protocol for
    player 3 that results in  a conversation
  $h_{2,3}$ between 2 and 3 such that 2 outputs $b$, there might not
        exist a protocol for agent 3 that results in 1 and 2 having
        conversation $h_{1,3}$ and agents 2 and 3 having conversation 
  $h_{2,3}$  simultaneously. In fact, if $a$ and $b$ are
    different and 
agents     
run a consensus protocol as in (a), there is not. 
\end{itemize}

Roughly speaking, we deal with these issues as follows. We prove that
for our function $f$, a malicious 
agent
 can make all honest 
agents 
output the same incorrect value, 
and we show that in our case there does exist a conversation $h_{1,2}$
between 1 and 2 such that 
agent
 3 can trick both of them
simultaneously, as desired (see Lemma~\ref{lemma:same-histories}). 
Some of these techniques can also be applied to prove lower bounds for
weak secure computation.

\section{END}

\commentout{
We start by carefully proving the 
}
In this section we prove the
$n > 4t$
 lower bound for $t$-secure
 computation
(also proved or claimed in \cite{ADS20,BKR94,canetti96studies}),
 which we strengthen slightly by showing
that it applies to $t$-weak secure computation.

Our proof is similar to 
that of Canetti \citeyear{canetti96studies}
 at a high level: We construct a function $f$
with four inputs, the scheduler schedules the
agents 
 so that the
fourth 
agent
 never gets to participate in the computation, and one of
the three remaining 
agents 
 is malicious and manages to trick the
other two participating 
agents 
 into outputting something inappropriate.
  Canetti then claims that
\emph{conversations} between 
agents 
 (where a conversation is just the
collection of 
messages sent by two given 
agents)
 must be independent of the inputs
of the 
agents, 
agent
 3 can send messages to 
agents  
  1 and 2 in such
a way that 
agents 
 1 and 2 believe they should output different
values. However, there are two significant problems with this approach: 
\begin{itemize}
\item[(a)] First, the conversations between the 
agents 
 might not be
    totally independent of their inputs, since they can depend on the
  output of the computation, and this ultimately does depend on the
  inputs. For example, 
agents   
can run Bracha's \citeyear{Br84}  consensus protocol 
(which tolerates $t$ malicious 
agents   
   if $n > 3t$) after
  terminating the secure computation protocol to decide the
  output. This would guarantee that all honest 
agents   
   output the same
      value at the end of the computation, so their conversations are
      certainly not independent. 
\item[(b)] Second, there is a more subtle issue when trying to
  simultaneously trick 
agents   
   1 and 2 into outputting some given
  values $a$ and $b$, respectively. Even though 
Canetti proves 
that for the function $f$ that he uses and a particular input
$\vec{x}$, for each
  conversation $h_{1,2}$ between 1 and 2, there is a protocol for
  player 3 that results in a conversation
  $h_{1,3}$ between 1 and 3 such that 1 outputs $a$, and that for each
    conversation $h_{1,2}$ between 1 and 2 there exists a protocol for
    player 3 that results in  a conversation
  $h_{2,3}$ between 2 and 3 such that 2 outputs $b$, there might not
        exist a protocol for agent 3 that results in 1 and 2 having
        conversation $h_{1,3}$ and agents 2 and 3 having conversation 
  $h_{2,3}$  simultaneously. In fact, if $a$ and $b$ are
    different and 
agents     
run a consensus protocol as in (a), there is not. 
\end{itemize}

Roughly speaking, we deal with these issues as follows. We prove that
for our function $f$, a malicious 
agent
 can make all honest 
agents 
output the same incorrect value, 
and we show that in our case there does exist a conversation $h_{1,2}$
between 1 and 2 such that 
agent
 3 can trick both of them
simultaneously, as desired (see Lemma~\ref{lemma:same-histories}). 
Some of these techniques can also be applied to prove lower bounds for
weak secure computation. 
\begin{theorem}\label{prop:secure-computation}
\leavevmode
\begin{itemize}
\item[(a)] If $n > 4t$ or $n \le t$, for all domains $D$, every
  function $f:D^n \rightarrow D$ can be $t$-securely computed
  in asynchronous systems. 
\item[(b)] If $t < n \le 4t$ there exists a domain $D$ and a function $f:D^n \rightarrow D$ that cannot be $t$-securely computed
  in asynchronous systems.  
\end{itemize}
\end{theorem}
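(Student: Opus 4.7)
Part (a) splits into two sub-cases. For $n > 4t$, apply Theorem \ref{thm:BCG} directly. For $n \le t$, the trivial constant protocol in which every agent immediately outputs $(\emptyset, f(x_0, \ldots, x_0))$ is $t$-secure: the constraint $|C| \ge n - t$ becomes $|C| \ge 0$, so for any adversary $A = (T, \vec{\tau}_T, \sigma_e)$ with $|T| \le t$ the trusted-party adversary with $c(\cdot) \equiv \emptyset$ is permitted, and its honest-agent outputs are identically $(\emptyset, f_\emptyset(\vec{y})) = (\emptyset, f(x_0, \ldots, x_0))$, matching the protocol. The malicious agents' outputs are then captured by any $O$ reproducing their (essentially inactive) protocol behaviour.

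For part (b), the plan is to follow the four-scenario indistinguishability schema sketched in the preceding section, being careful to address obstacles (a) and (b). Partition $[n]$ into four groups $P_1, P_2, P_3, P_4$ with $|P_i| \le t$ (possible since $n \le 4t$), and design a function $f$ whose value, under the four possible choices of ``excluded'' group (i.e., core sets of the form $[n] \setminus P_i$), produces outputs that are jointly inconsistent in a usable sense: there is an input $\vec{x}$ for which no single protocol run can simultaneously be consistent with the outputs dictated by more than one scenario. Assume for contradiction that $\vec{\pi}$ $t$-securely computes such an $f$. For each $i$, consider the adversary $A_i$ in which $P_i$ is malicious and the scheduler delays $P_i$'s messages until every other agent has terminated; by hypothesis $A_i$ is matched by a trusted-party adversary $A_i^{tr}$ whose core set $C_i \subseteq [n]\setminus P_i$ has $|C_i| \ge n - t$, which pins down the honest-agent outputs under $A_i^{tr}$ up to the substitutions performed by $h$.

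The heart of the argument is to exhibit a single execution of $\vec{\pi}$ in which one malicious coalition (say $P_3$) simultaneously presents to $P_1$ and $P_2$ the transcripts they would see under two distinct scenarios $A_2$ and $A_1$, respectively; combined with $f$'s built-in incompatibility, this produces the contradiction. Obstacle (a) of the introduction---inter-honest conversations depending on inputs through the output---is overcome by first establishing that the malicious coalition can force all honest agents to output a common (possibly incorrect) value, rendering their pairwise conversations essentially insensitive to the disputed inputs. Obstacle (b)---that the individual tricks on $P_1$ and $P_2$ may not be simultaneously realizable by a single strategy of $P_3$---is dissolved by Lemma \ref{lemma:same-histories}, which produces a conversation $h_{1,2}$ compatible with both hypothetical scenarios at once; on such a run, $P_3$ can play the appropriate messages to each honest side. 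The main obstacle will be the clean execution of this simultaneous-trick lemma, because it demands a probabilistic coupling between runs of $\vec{\pi}$ under different adversaries and schedulers that yields one transcript consistent with both; the choice of $f$ and the distinguished input $\vec{x}$ must be tailored so that the coupling survives. Finally, for $t < n < 4t$, the same four-scenario argument adapts verbatim by choosing the partition $|P_1| + \cdots + |P_4| = n$ with each $|P_i| \le t$; only the size bound $|P_i| \le t$ is ever used.
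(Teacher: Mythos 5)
Your part (a) coincides with the paper's argument (BCG's Theorem~\ref{thm:BCG} for $n>4t$, and the silent protocol outputting $(\emptyset,f)$ applied to the default inputs for $n\le t$), and it is fine. Part (b), however, has genuine gaps. The decisive one is your closing claim that the four-scenario argument ``adapts verbatim'' to any partition with $|P_i|\le t$ and that ``only the size bound $|P_i|\le t$ is ever used.'' That is false, and it is exactly where the paper has to work. The argument needs two quantitative facts that fail for arbitrary groups of size at most $t$: (i) in an \emph{all-honest} run in which exactly one group has input $0$, the protocol must be \emph{forced} to output the majority value $1$ --- but the trusted-party adversary for an honest run may still discard up to $t$ inputs through the core set $c$, so this is forced only if the group of zeros has size at most $(n-t)/2$ (the paper's Lemma~\ref{prop:technical-3}); and (ii) on the all-zero input, the deviating group $P_3$ must be \emph{unable} to make output $1$ legitimate, which requires $|P_3|$ to be strictly less than half of the core-set size $n-t$, not merely $|P_3|\le t$ (for instance with $n=2t+2$ and $|P_3|=t$, substituting $t$ ones into a core set of size $t+2$ makes output $1$ consistent with a trusted-party adversary, and no contradiction arises). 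The paper secures both facts by the specific choice $|S_1|=|S_2|=|S_3|=\lceil (n-t)/3\rceil$ together with Lemma~\ref{prop:inequalities} ($3\lceil (n-t)/3\rceil\le n$ and $\lceil (n-t)/3\rceil\le (n-t)/2$), and this covers only $n\ge t+2$. Relatedly, you never exhibit the function $f$ or the distinguished input: ``design $f$ with built-in incompatibility'' plus the simultaneous-trick coupling is precisely the nontrivial content, and Lemma~\ref{lemma:same-histories} (and its generalization, Lemma~\ref{lemma:generalization}) is itself part of the proof of this theorem for the concrete majority function $f^n$, so it cannot be invoked as an external black box; you yourself flag its ``clean execution'' as the main obstacle, which is to say it remains unproven in your proposal.

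The second gap is the range of $n$: your argument, even repaired, gives only $t+2\le n\le 4t$, while the statement requires all $t<n\le 4t$. The case $n=t+1$ cannot be handled by the four-scenario schema at all, since a core set may then drop $t=n-1$ inputs and nothing about the honest output is forced. The paper treats it by a different reduction: from any protocol $t$-securely computing the function $g^n$ that returns the first non-$\bot$ input, one extracts a $t$-resilient implementation of weak consensus for $n=t+1$ agents, contradicting Lamport's impossibility of weak consensus for $n\le 3t$. Your proposal omits this case entirely.
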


\begin{theorem}\label{prop:weak-secure-computation}
\leavevmode
\begin{itemize}
\item[(a)] If $n > 4t$ or $n \le 2t$, every function $f:D^n \rightarrow D$
    can be $t$-weakly securely computed
  in asynchronous systems. 
      \item[(b)] If 
$3t \le n \le 4t$, 
   there exists a domain $D$ and a function $f: D^n \rightarrow D$ that cannot be $t$-weakly securely computed
  in asynchronous systems.  
   \end{itemize}
\end{theorem}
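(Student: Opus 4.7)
My plan to prove Theorem~\ref{prop:weak-secure-computation} splits into the upper-bound part~(a) and the lower-bound part~(b).

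For part~(a), the subcase $n > 4t$ follows immediately from Theorem~\ref{thm:BCG}: a BCG protocol $t$-securely computes $f$, and by Theorem~\ref{thm:ranking}(a) and~(b) it also $t$-weakly securely computes $f$. For the subcase $n \le 2t$, I would construct a direct protocol and appeal to the equivalence in Theorem~\ref{thm:ranking}(b) to restrict attention to adversaries of size exactly $t$. The key structural observation is that the trusted-party adversary's core set $C$ has size at least $n-t$, which is $\le t$ when $n \le 2t$; hence $C$ may be chosen to lie entirely inside the malicious set $T$, making $f_C(\vec{y})$ depend only on the adversary-chosen substitutions $\vec{y}_T = h(\vec{x}_T, r)$ and not on any honest input. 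I would therefore engineer a protocol whose honest outputs can, on every execution, be expressed in this ``$C \subseteq T$'' form, and then construct a matching $(h,c,O)$ per adversary.

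For part~(b), using Theorem~\ref{thm:ranking}(b) I reduce to showing that some $f$ cannot be $t$-strictly securely computed when $3t \le n \le 4t$. I would partition $[n]$ into four ``super-agents'' $G_1, G_2, G_3, G_4$ with $|G_1|=|G_2|=|G_3|=t$ and $|G_4| = n-3t \le t$ (the case $n=3t$ making $G_4$ empty, which only simplifies matters), and define $f$ as a gadget with one logical input per super-agent, tailored to the attack below. Assuming toward contradiction that a protocol $\vec{\pi}$ $t$-strictly securely computes $f$, I fix a scheduler that fully delays $G_4$ and designate each of $G_1,G_2,G_3$ in turn as the $t$-sized adversarial block. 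Each of these three symmetric thought-experiments imposes, via the trusted-party guarantee, a structural constraint on the joint outputs of the other two groups; I would choose $f$ so that these three constraints cannot be simultaneously satisfied on some input profile, yielding the contradiction.

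The hard step will be the analogue of Lemma~\ref{lemma:same-histories}: showing that a single strategy for the malicious super-agent $G_3$ can \emph{simultaneously} force both $G_1$ and $G_2$ to output a prescribed incorrect value. As the excerpt explains, Canetti's argument breaks here, since separate $G_3$-strategies producing the requisite conversations $h_{1,3}$ and $h_{2,3}$ against a common $h_{1,2}$ need not arise from a single combined strategy---especially when $G_1$ and $G_2$ run an internal consensus subprotocol coupling their outputs. My plan is to tailor $f$ so that the forced deception only requires $G_1$ and $G_2$ to \emph{agree} on the same wrong value (rather than disagree), so that internal consensus is no obstruction; simultaneity then reduces to a coupling of three hypothetical executions (all-honest, $G_1$-malicious, $G_2$-malicious) from which one extracts a $G_3$-strategy whose marginal conversations with $G_1$ and with $G_2$ realize both one-sided attacks. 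Constructing this coupling cleanly, and engineering $f$ so the three symmetric constraints clash, is where the substantive work of the proof will lie.
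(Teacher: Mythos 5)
Your proposal is correct and follows essentially the same route as the paper: part (a) via BCG for $n>4t$ and a trivial protocol exploiting $C\subseteq T$ (the paper outputs $([n-t],f(\bot^n))$ after sending nothing) for $n\le 2t$; part (b) via a partition into four groups of size at most $t$, a delayed fourth group, a majority-style function on which both deceived groups must agree on the same wrong value, and a splicing of three executions (all-honest, group-1-suspect, group-2-suspect) with a common $G_1$--$G_2$ conversation, exactly as in Lemmas~\ref{prop:technical-1} and~\ref{lemma:same-histories}. The "gadget" you describe is precisely the paper's majority function $f^n$, and your coupling of the three hypothetical executions is the paper's $H$, $H''$, $H'$ construction.
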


\commentout{
The proof of Theorem~\ref{prop:secure-computation} and
\ref{prop:weak-secure-computation} can be found in
Appendix~\ref{sec:appendix1}. 
}
To prove part (b) of
Theorems~\ref{prop:secure-computation} and
\ref{prop:weak-secure-computation} for $3t \le n \le 4t$ we show
that the majority function $f^4: \{0,1,\bot\}^4 \rightarrow \{0,1,\bot\}$
that outputs $1$ if there are at least as many inputs equal to 1 as
inputs equal to 0, and outputs 0 otherwise, cannot be $1$-weakly
securely computed. In fact, we show that  player 3 can get players 1
and 2 to output 1 even when all agents have input 0. The full proof
can be found in Appendix~\ref{sec:appendix1}. 
\commentout{
For Theorem~\ref{prop:secure-computation}(a), 
note that if $n > 4t$, Theorem~\ref{thm:BCG} shows that every function
$f:D^n \rightarrow D$ can be $t$-securely computed, and thus $t$-weak
securely computed as well. If $n \le t$, let $\bot$ be the input
assigned to the agents that did not submit an input. It can be easily
shown that the protocol where each agent sends no messages and
outputs $(\emptyset, f(\bot^n))$ $t$-securely computes
$f$. Similarly, for Theorem~\ref{prop:weak-secure-computation}(a),
it can be easily checked that if $n \le 2t$, the protocol where
each agent sends nothing and outputs $([n-t], f(\bot^n))$
$t$-weak securely computes $f$.

For the lower bounds (Theorems~\ref{prop:secure-computation}(b) and
\ref{prop:weak-secure-computation}(b)), we proceed as follows.
Consider the function
$f^n: \{0,1, \bot \}^n \rightarrow \{0,1, \bot\}$ that essentially
takes majority between $0$ and $1$: it outputs 1 if the number of
agents with input $1$ is greater or equal to the number of agents with
input $0$, otherwise it outputs $0$. Players who do not submit an
input are assumed to have input $\bot$. 
We 
start by showing 
 that 
$f^4$ 
 cannot be 
 $1$-weakly securely computed
  by four 
agents. 

Suppose that 
$f^4$
 can be 
 $1$-weakly securely computed 
  using a protocol
$\vec{\sigma}$.
Let $\sigma_e^N$ be the scheduler that schedules 
agents 
 1, 2,
and 3 cyclically, and right before scheduling
an agent, it
delivers the messages that were sent by the other 
agents 
the last time they were scheduled. After scheduling each of the first three
agents 
 $N$ times, it schedules 
agent 
  4 as well, adding it to the
cyclic order.

Given a history $H$,
let $\vec{x}_H$ denote the input profile of 
agents 
 in $H$,
let $H_i$ denote agent $i$'s local history in $H$, 
let $H_e$ denote the scheduler's local history in $H$, and 
let $H_{(i,j)}$ denote the \emph{conversation} between 
agents 
 $i$ and
$j$ (i.e.,  
the messages sent and received between $i$ and $j$, in
addition to the times at which $i$ and $j$ were scheduled).
%
We can now prove essentially what BCG claimed to prove (although, as
we said, this claim does not hold for the BCG construction).

\begin{lemma}\label{lemma:same-histories}
There exist $N$ and two (finite) histories $H$ and $H'$ of $\vec{\sigma}$
where the scheduler uses $\sigma_e^N$, 
$\vec{x}_H = (1,0,1,1)$, $\vec{x}_{H'} = (0,1,1,1)$, 
agents 
 1, 2, and 3 all output 1
in $H$,  
agent
 4 is never scheduled in either $H$ or
$H'$, $H_{1,2} = H'_{1,2}$, and $H_e = H'_e$.
\end{lemma}

To prove Lemma~\ref{lemma:same-histories}, we first need to prove 
what seems to be obvious: if all agents are honest, at most $t$
agents have input 0, and $n \ge 3t$, then the output of a weakly
secure computation of $f^n$ will be 1.  While
this seems obvious (and is true), it is not quite so trivial. For example,
it is not true if 
$n = 3t-1$.
In this case, if we consider a trusted-party adversary $A = (T, c, h,
O)$, in which $|T| = t$, $h$ replaces all inputs of malicious players
with 0, and $c$ chooses all $t$ malicious players and $t-1$
additional honest players, it is easy to check that the output of
honest players is 0. 

\begin{lemma}\label{prop:technical-1}
  Let $n \ge 3t$ and let $\vec{\sigma}$ be a protocol that $t$-weakly
  securely computes $f^n$. Then for all schedulers, in all histories of
$\vec{\sigma}$ in which all agents are honest and at most $t$ agents
have input $0$, all agents output $1$. 
\end{lemma}

\begin{proof}
  Let $S$ be the subset of agents that have input $0$. Given a scheduler
  $\sigma_e$, consider an adversary $A = (T, \vec{\tau}_T,
\sigma_e)$ such that $T \supseteq S$, $|T| = t$, and $\vec{\tau}_T =
\vec{\sigma}_T$ (so all the malicious agents follow protocol
$\vec{\sigma}$). By definition of $t$-weak secure computation, the 
output of honest agents with adversary $A$ should be one that is possible with a trusted-party adversary of the form $A' = (T, c, h, O)$. However, no
matter what the output $C$ of $c$ is, since $|C| \ge n-t$, there
will be at least $n-2t$ honest agents in $C$, all of them with input
$1$. Since $n \ge 3t$, this suffices to guarantee that all players not
in $T$ output 1. Since malicious agents play $\vec{\sigma}$, they are indistinguishable from honest agents. Thus, if all agents are honest, all agents not in $T$ output $1$. To see that 
agents in $T$ also output $1$ if all players are honest, consider an
adversary $A'' = (T', \vec{\tau}_{T'}, \sigma_e)$ such that $T' \cap T
= \emptyset$, $|T'| = t$ (such a set $T'$ always exists
since $n \ge 3t$), and $\vec{\tau}_{T'} = \vec{\sigma}_{T'}$.
Since  honest agents not in $T \cup T'$ 
(note that $[n] \setminus (T \cup T') \not= \emptyset$)
have the same histories with $A$ and $A''$, they must output the same
value with both adversaries, and so must
output $1$ with adversary $A''$. By definition of $t$-weak secure
computation, since $|T'| = t$, all agents not in $T'$ must output the
same value. Thus, since $T \cap T' = \emptyset$, all agents in $T$
also output $1$ with adversary $A''$. Again, since agents in 
$T'$
 are
indistinguishable from honest agents, this implies that agents in $T$
also output $1$ if all agents are honest. 
\end{proof}

\commentout{
\begin{lemma}\label{prop:technical-1}
In all histories of $\vec{\sigma}$ such that agents have input profile $(0,1,1,1)$, $(1,0,1,1)$, $(1,1,0,1)$ or $(1,1,1,0)$, all agents output $1$.
\end{lemma}

\begin{proof}
  We prove this result for input profile $(0,1,1,1)$; the remaining
cases are analogous. Given a scheduler $\sigma_e$, consider an
adversary $A_1 = (T, \vec{\tau}_T, \sigma_e)$ such that $T = \{1\}$
and $\tau_1 = \sigma_1$ (so the malicious agent follows
its part of the protocol $\vec{\sigma}$). By definition of $t$-weak
secure computation, the output of honest agents should be one that is
possible with a trusted-party adversary of the form $A_1' = (\{1\}, c,
h, O)$. It is easy to check that regardless of $c$ and $h$,
$\rho_i(A_1', \vec{x}; f)$ is of the form $(C, 1)$ for $i \in
\{2,3,4\}$. Since agent 1 plays $\sigma_1$,
agents $2$, $3$, and $4$ cannot distinguish this
history from one where all agents are honest, so they must
also output 1 when all agents are honest. To see that agent $1$ also
outputs 1, consider an adversary $A_2 = (T', \vec{\tau}_{T'},
\sigma_e)$ such that $T' = \{2\}$ and $\tau_2 = \sigma_2$. Since $A_2$
is indistinguishable from $A_1$ to agents 3 and 4, they must
output $1$ with $A_2$ (as they do with $A_1$). Since $1$ is honest in
this case, it must output $1$ as well. Since agent $2$ is
indistinguishable from an honest agent, agent  $1$ also outputs $1$ if
all agents are honest. 
\end{proof}
}

\begin{proof}[Proof of Lemma~\ref{lemma:same-histories}]
\commentout{
With input profile 
$(1,0,1,1)$, 
 all 
agents  
  must
eventually output 1
 even if some 
agent
 does does not send any
messages (since that 
agent
 could be malicious).
Thus,}
By Lemma~\ref{prop:technical-1}, 
 there
exists an integer $N$ such that if 
agents 
 1, 2, and 3 are honest,
with nonzero probability, 
 they will output $1$ with scheduler $\sigma_e^N$
at or before the $N$th time they are scheduled.
Let $H$ be a history where the 
agents 
 use
$\vec{\sigma}$, the scheduler uses $\sigma_e^N$, the
input is 
$(1,0,1,1)$, 
agents  
  1, 2, and 3 are
honest and have
been scheduled 
at most $N$ times and all three have outputted 1.
By the
properties of secure computation, in particular, the secrecy of the
inputs, there must exist a history $H''$ such that $\vec{x}_{H''} = (1,1,0,1)$,
  $H''_1 = H_1$, and $H''_e = H_e$.
  (Note that this means that we can assume, without loss of
generality, that the scheduler uses strategy $\sigma_e^N$.)
If this were not the case and
agent
 1 were malicious in $H''$, then it would know
  that the input profile can't be $(1,1,0,1)$ given histories $H_1$
  and $H_e$. (Recall that we can assume without loss of generality
  that the malicious 
  agents
   can communicate with the
  scheduler.) Similarly, there exists a history $H'$ with
$\vec{x}_{H'} = (0,1,1,1)$  
  such that $H'_2 = H''_2$ and $H'_e =
  H''_e$. The fact that the scheduler has the same local history in
  $H, H'$, and $H''$ and that $H_1 = H''_1$ and $H''_2 = H'_2$ implies
    that $H_{1,2} = H'_{1,2} (= H''_{1,2})$, as desired.  In more detail, since
    $H_2' = H_2''$, agent 2 sends the same messages to and receives
    the same messages from agent 1 in $H'$
    and $H''$, so 1 receives the same messages from and sends the same
    messages to 2 in both $H'$ and $H''$.  Thus, $H_{1,2}' =
    H_{1,2}''$.  A similar argument shows that $H_{1,2} =
    H''_{1,2}$.  
\end{proof}

Now suppose that 
agents 
 have input profile $\vec{x} = (0,0,0,0)$. We show
that there exists a strategy $\tau_3$ for 
agent
 3 
such that if all other 
agents 
 play $\vec{\sigma}$ and the scheduler
plays $\vec{\sigma}_e^N$, then with non-zero probability, 
agents 
 1 and 2 output 1. This suffices to show that 
$f^4$
 cannot be
 $1$-weakly securely computed, since honest 
agents  
  should output 0
  when playing with any trusted-party adversary with at most one
  malicious agent.

  \begin{lemma}
If the agents have input profile $(0,0,0,0)$, then
   there exists a strategy $\tau_3$ for 
agent  3  such that if all other agents 
 play $\vec{\sigma}$ and the scheduler
plays $\vec{\sigma}_e^N$, then with non-zero probability, 
agents  1 and 2 output 1.
\end{lemma}
\begin{proof}  
 Let $H$ and $H'$ be the two histories guaranteed to
exist by
Lemma~\ref{lemma:same-histories}.  The strategy $\tau_3$ for 
agent
 3 consists
of sending 
agent
 1 the messages that 
agent 
  3 sends to 
agent  
   1 in 
$H'$ while
sending 
agent
 2 the messages that 
agent 
  3 sends to 
agent  
   2 in 
$H$.
Suppose that
agent
 1 has the same random bits as in 
$H'$, 
 while
agent
 2 has the same random bits as in 
$H$. 
An easy induction now shows that, in the resulting history, 
agent
 1
will have history 
$H'_1$ 
 and 
agent 
  2 will have history 
 $H_2$
 after
each having been scheduled 
at most 
$N$ times, using the fact that, as shown in
Lemma~\ref{lemma:same-histories}, $H_{12} = H'_{12}$.
Thus, by Lemma~\ref{lemma:same-histories}, 
agents 1 and 2 output 1 in this case.
This contradicts the fact that $\vec{\sigma}$ 1-weakly securely
computes $f^4$, since 
\commentout{
reasoning analogous to that of
Lemma~\ref{prop:technical-1}
}
Lemma~\ref{prop:technical-1} 
 shows that, with input profile
$(0,0,0,0)$, all honest players 
output $0$. 
\end{proof}

It is
straightforward to extend this argument to all $n$ and $t$ such that
$3t \le n 
\le 4t$. Given $n$ and $t$ such that $3t \le n \le 4t$, we divide the
agents into four disjoint sets $S_1$, $S_2$, $S_3$, and $S_4$ such
that $0 < |S_i| \le t$ for all $i \in \{1,2,3\}$ and $0 \le |S_4| \le
t$. Consider a scheduler $\sigma_e^N$ that schedules 
agents
 in
$S_1, S_2$ and $S_3$ cyclically and, right before scheduling an
agent,
it delivers the messages that were sent by the other 
agents
 the last
time they were scheduled. After scheduling each of the 
agents
 in $S_1
\cup S_2 \cup S_3$ $N$ times, it schedules the 
agents
 in $S_4$ as
well. Suppose that $\vec{\sigma}$ is a strategy for $n$ 
agents
 that
$t$-securely computes $f^n$. 
\commentout{
Then we have a proposition that is analogous to Proposition~\ref{prop:technical-1}:

\begin{proposition}\label{prop:technical-2}
Denote by $(x^1_{S_1}, x^2_{S_2}, x^3_{S_3}, x^4_{S_4})$ the input profile in which all agents in $S_i$ have input $x^i$ for $i \in \{1,2,3,4\}$. In all histories of $\vec{\sigma}$ such that agents have input profile $(0_{S_1}, 1_{S_2}, 1_{S_3}, 1_{S_4})$, $(1_{S_1}, 0_{S_2}, 1_{S_3}, 1_{S_4})$, $(1_{S_1}, 1_{S_2}, 0_{S_3}, 1_{S_4})$ or $(1_{S_1}, 1_{S_2}, 1_{S_3}, 0_{S_4})$, all agents output $1$.
\end{proposition}

\begin{proof}
    The proof is similar to that of
Lemma~\ref{prop:technical-1}. Again, we show the validity of 
this proposition for input $(0_{S_1}, 1_{S_2}, 1_{S_3}, 1_{S_4})$
since, the other three cases are analogous. Given any scheduler
$\sigma_e$, consider an adversary $A = (T, \vec{\tau}_T, \sigma_e)$
such that $T \supseteq S_1$, $|T| = t$, and $\vec{\tau}_T =
\vec{\sigma}_T$. By definition of $t$-weak secure computation, the
output of honest agents should be one that is possible with a
trusted-party adversary of the form $A' = (T, c, h, O)$. However, no
matter what the output $C$ of $c$ is, since $|C| \ge n-t$, there
will be at least $n-2t$ honest players in $C$, all of them with input
$1$. Since $n \ge 3t$, this suffices to guarantee that all players not
in $T$ output 1. As in Lemma~\ref{prop:technical-1}, this also
holds if $T = \emptyset$ and all players are honest.
To see that
players in $T$ also output $1$ if all players are honest, consider an
adversary $A'' = (T', \vec{\tau}_{T'}, \sigma_e)$ such that $T' \cap T
= \emptyset$ and $|T'| = t$ (note that such a set $T'$ always exists
since $n \ge 3t$) and $\vec{\tau}_{T'} = \vec{\sigma}_{T'}$. Since $A$
and $A''$ are indistinguishable, honest agents outside of $T$ and $T'$
(note that $[n] \setminus (T \cup T') \not= \emptyset$) output $1$
with adversary $A_2$. This means that agents in $T$ do so as
well. Again, since malicious agents in $A_2$ are indistinguishable
from honest agents, agents in $T$ also output $1$ if all players
are honest. 
\end{proof}

We can now prove the following generalization of
Lemma~\ref{lemma:same-histories}:
}

\begin{lemma}\label{lemma:generalization}
There exist $N$ and two (finite) histories $H$ and $H'$ of
$\vec{\sigma}$ where the scheduler uses $\sigma_e^N$, $\vec{x}_H =
(1_{S_1}, 0_{S_2}, 1_{S_3}, 1_{S_4})$,
\commentout{
(which means that 
agents
 in
$S_1$ have input 1, 
agents
 in $S_2$ have input 0, etc.),}
$\vec{x}_{H'} = (\vec{0}_{S_1}, \vec{1}_{S_2}, \vec{1}_{S_3}, \vec{1}_{S_4})$, 
agents
 in $S_1
\cup S_2 \cup S_3$ output 1 in $H$, 
agents
 in $S_4$ are never
scheduled in either $H$ or $H'$, $H_{S_1,S_2} = H'_{S_1, S_2}$ (which
is the conversation between 
the agents
 in $S_1$ and 
 the agents
  in $S_2$) and
$H_e = H'_e$. 
\end{lemma}

\begin{proof} The proof is analogous to the proof of
Lemma~\ref{lemma:same-histories}; the subsets $S_1$, $S_2$, $S_3$, and
$S_4$ play the roles of 
agents
 1, 2, 3, and 4, respectively.
\end{proof}

We now have the tools we need to prove
Theorem~\ref{prop:weak-secure-computation}(b). 
Given $H$ and $H'$ from Lemma~\ref{lemma:generalization},
consider a strategy $\vec{\tau}_{S_3}$ for 
agents
in $S_3$ that
consists of sending 
agents
 in $S_1$ and $S_2$ exactly the same
messages they would send in $H'$ and $H$ respectively. Again, if
agents
 have input $\vec{0}$, with non-zero probability, 
agents 
  in
$S_2$ will eventually have history $H_{S_2}$, and thus will output $1$,
contradicting the assumption that $\vec{\sigma}$ 
$t$-weakly securely computes $f^n$.
This completes the proof of Theorem~\ref{prop:weak-secure-computation}(b). 

The proof of Theorem~\ref{prop:secure-computation}(b)
follows similar lines.  We start with 
an analogue of Lemma~\ref{prop:technical-1} which holds for a larger
range of values of $n$: 

\commentout{
In the case of standard secure computation, the proof of Theorem~\ref{prop:weak-secure-computation} is valid for a larger range of values of $n$. In fact we have the following:

\begin{theorem}\label{thm:secure-computation-lower-bound}
  If 
$t+2 \le n \le 4t$, 
   there exists a domain $D$ and a function $f: D^n \rightarrow D$ that cannot be $t$-securely computed. 
\end{theorem} 


Suppose that $n \ge t+2$ and that $\vec{\sigma}$ $t$-securely computes
$f^n$. If we split the agents into four groups $S_1, S_2, S_3, S_4$
such that $|S_i| = \lceil \frac{n-t}{3} \rceil$ for $i \in \{1,2,3\}$
(and such that $|S_4| \le t$) we have the following: 
}

\begin{lemma}\label{prop:technical-3}
Let $n \ge t+2$ and $\vec{\sigma}$ be a protocol that $t$-securely computes $f^n$. Then, in all histories of $\vec{\sigma}$ in which all agents are honest and at most $(n-t)/2$ agents have input 0, all agents output 1.
\end{lemma}

\begin{proof}
Given any scheduler $\sigma_e$, if all agents are honest, their output should be one that is possible with a trusted-party adversary of the form $A = (\emptyset, c,h,O)$. No matter what the output $C$ of $c$ is, at most $(n-t)/2$ agents in $C$ have input $0$. Since $|C| \ge n-t$, at least half of the agents in $C$ have input $1$ and thus all honest agents output 1.
\end{proof}

\commentout{
\begin{proposition}\label{prop:technical-3}
In all histories of $\vec{\sigma}$ such that agents have input profile $(0_{S_1}, 1_{S_2}, 1_{S_3}, 1_{S_4})$, $(1_{S_1}, 0_{S_2}, 1_{S_3}, 1_{S_4})$, $(1_{S_1}, 1_{S_2}, 0_{S_3}, 1_{S_4})$ or $(1_{S_1}, 1_{S_2}, 1_{S_3}, 0_{S_4})$, all agents output $1$.
\end{proposition}

\begin{proof}
For any scheduler $\sigma_e$, if all players are honest, their output should be one that is possible with a trusted-party adversary of the form $A = (\emptyset, c,h,O)$. No matter what the output $C$ of $c$ is, at most $\lceil \frac{n-t}{3} \rceil$ of agents in $C$ have input $0$. Since $n \ge t+2$ we have that $n-t \ge 2 \lceil \frac{n-t}{3} \rceil$, which means that $C$ has at least as many agents with input $1$ than agents with input $0$. This implies that all honest players output $1$.
\end{proof}
}

\commentout{
The remaining part of the proof 
of Theorem~\ref{thm:secure-computation-lower-bound}
is analogous to that of Theorem~\ref{prop:weak-secure-computation}.
}

We also need the following technical result:
\begin{lemma}\label{prop:inequalities}
If $t+2 \le n \le 4t$ then
\begin{itemize}
  \item[(a)] $n \ge 3\lceil \frac{n-t}{3}\rceil$;
\item[(b)] $\lceil \frac{n-t}{3}\rceil \le \frac{n-t}{2}$.
\end{itemize}
\end{lemma}

\begin{proof}
  If $t + 2 \le 4t$ then $t > 0$. To prove part (a), note that if $t = 1$,
  then $n$ can be only $3$ or $4$. In both cases, the inequality is
satisfied. If $t \ge 2$ then $\lceil \frac{n-t}{3}\rceil \le \lceil
\frac{n-2}{3}\rceil \le \frac{n}{3}$,
from which the desired result immediately follows.
To prove part (b), let $a$ and $b$ be the two positive
integers such that $n-t = 3a + b$ with $1 \le b \le 3$. Then $\lceil
\frac{n-t}{3}\rceil = a+1$ and $\frac{n-t}{2} = \frac{3a+b}{2} = a +  
\frac{a+b}{2}$. Since $n-t \ge 2$, then either $a > 0$ or $b >
1$.  Since $b \ge 1$, in both cases, $a+1 \le a + \frac{a+b}{2}$. 
\end{proof}

Given $n$ and $t$ such that $t+2 \le n \le 4t$, we divide the agents
into four disjoint sets $S_1, S_2, S_3, S_4$ such that $|S_i| = \lceil
\frac{n-t}{3}\rceil$
for $i \le 3$ and $|S_4| \le t$ (which is always possible, by
Lemma~\ref{prop:technical-3}(a)). 
If $n \ge t+2$, then by Lemma~\ref{prop:inequalities}(b),
$\lceil \frac{n-t}{3}\rceil \le  
\frac{n-t}{2}$,
and thus by Lemma~\ref{prop:technical-3}, in all
histories in which all agents are honest and have inputs $(0_{S_1},
1_{S_2}, 1_{S_3}, 1_{S_4})$, $(1_{S_1}, 0_{S_2}, 1_{S_3}, 1_{S_4})$ or
$(1_{S_1}, 1_{S_2}, 0_{S_3}, 1_{S_4})$, all the agents output
$1$. 
Reasoning analogous to that used in the proof of 
Theorem~\ref{prop:weak-secure-computation}(b) then shows that $f^n$
cannot be $t$-securely computed for $t+2 \le n \le 4t$. 

It remains to deal with the case where $n = t+1$. To show that
there exist functions that cannot be $t$-securely computed if $n =
t+1$, we reduce \emph{$t$-resilient weak consensus} to $t$-secure
computation. 
\begin{definition}
A protocol $\vec{\sigma}$ for $n$ agents is a $t$-resilient implementation of weak consensus if the following holds for all adversaries $A = (T, \vec{\tau}_T, \sigma_e)$ with $|T| \le t$ and all histories:
\begin{itemize}
\item[(a)] All agents not in $T$ output the same value.
\item[(b)] If all agents are honest and have the same input $x$, all agents output $x$. 
\end{itemize}
\end{definition}
Lamport~\citeyear{L83} 
showed that if $n \le 3t$ there is no $t$-resilient implementation of
weak consensus. Thus, if there exists a reduction from
$t$-resilient weak consensus to $t$-secure computation for $n > t$,
then there are functions $f:D^n \rightarrow D$ with $n = t+1$
that cannot be $t$-securely computed. 

The reduction proceeds as follows: Consider a function $g^n: \{0,1, \bot\}
\rightarrow \{0,1,\bot\}$ such that $g^n(\bot, \ldots, \bot) = \bot$,
and $g^n(x_1, \ldots, x_n) = x_i$ if $x_i \not =
\bot$ and $x_j = \bot$ for all $j < i$; that is, $g^n$ outputs
the first non-$\bot$ value if there is one, and otherwise outputs
$\bot$. Suppose, by way of contradiction, that $\vec{\sigma}$
$t$-securely computes $g^n$. 
Let $\vec{\tau}$ be identical to $\vec{\sigma}$, 
except hat, whenever agent $i$ would have output $(C,v)$ with
$\sigma_i$,  it outputs $v$ instead if $v \not = \bot$, and otherwise it outputs
$0$. By the properties of $t$-secure computation,
all honest agents output the same value when using $\vec{\tau}$. 
Moreover, if all honest agents have input $0$ or all of them
have input $1$, if $n > t$, then the output of the
secure computation has the form $(C,0)$ or $(C,1)$, respectively. Thus, if
there exists a protocol that $t$-securely computes $g^n$ for $n =
t+1$, then there exists also a $t$-resilient implementation of weak
consensus for $t+1$ agents, contradicting 
Lamport's
result. This proves Theorem~\ref{prop:secure-computation}.

}

 
\section{Reducing $(k+t)$-Strict Secure Computation to Implementing a
$(k,t)$-Robust Equilibrium}\label{sec:reduction}

In this section we show that $(k+t)$-strictly securely computing a
function $f$ reduces to implementing a $(k,t)$-robust strategy
$\vec{\sigma} + \sigma_d$ for some game $\Gamma_d^{f,k,t}$. To make
this precise, we need the following definition: 

\begin{definition}
If $g: A \rightarrow B$, and  $\sigma$ is a strategy that plays
actions in $A$, then $g(\sigma)$ is the strategy that is
identical to $\sigma$ except that each action $a \in A$ is replaced by
$g(a) \in B$.  If $\vec{\sigma}$ is a strategy profile where each
player $i$ plays actions in $A$, then $g(\vec{\sigma}) = 
(g(\sigma_1), \ldots, g(\sigma_n))$.  
\end{definition}


\begin{theorem}\label{thm:secure-computation-reduction}
    If $f:D^n \rightarrow D$, $D$ is a finite domain,
and $2(k+t) < n$, then there
exists a game $\Gamma_d^{f,k,t}$ in which all players have the same set $A$
of possible actions, a function $g: A \rightarrow D$, and a
$(k,t)$-robust strategy 
(resp., strongly $(k,t)$-robust strategy)
$\vec{\sigma} + \sigma_d$ for $n$ players and
the mediator in $\Gamma_d^{f,k,t}$ such that if $\vec{\sigma}_{ACT}$ is
a
 $(k,t)$-robust implementation
(resp., strongly $(k,t)$-robust implementation) 
  of $\vec{\sigma} +\sigma_d$,
 then $g(\vec{\sigma}_{ACT})$ $(k+t)$-strictly securely
computes $f$. 
\end{theorem}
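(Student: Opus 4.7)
The plan is to design the game $\Gamma_d^{f,k,t}$ so that $(k,t)$-robustness of any implementation forces both the correctness and the privacy guarantees of $(k+t)$-strict secure computation. I let each player's type be $(x_i,\alpha_i)$ with $x_i\in D$ and $\alpha_i\in\{0,1\}^m$ a uniformly random secret, and let each action $a_i=(y_i,\vec{g}_i)\in D\times\{0,1\}^{m(n-1)}$ consist of a proposed output $y_i$ and guesses $g_i^j$ at each $\alpha_j$ for $j\neq i$. The projection $g:A\to D$ retains $y_i$. The payoff is $u_i=u_i^{\mathrm{corr}}+\epsilon\,u_i^{\mathrm{guess}}$ for small $\epsilon>0$, where $u_i^{\mathrm{corr}}=1$ iff $y_i=f_C(\vec{x}/_{(T,\vec{z})})$ for some $T$ with $|T|\le k+t$, some core set $C$ with $|C|\ge n-k-t$, and some $\vec{z}\in D^{|T|}$ (and $0$ otherwise), and $u_i^{\mathrm{guess}}$ is a strictly proper scoring rule on the guesses of $\vec{\alpha}_{-i}$, normalized so that uniform guessing yields expected payoff zero while any strictly more informative guess distribution yields strictly positive expected payoff. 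The mediator strategy $\sigma_d$ waits for inputs from at least $n-k-t$ players, forms a core set $C$, computes $f_C$, and broadcasts $(C,f_C)$; the honest strategy $\vec{\sigma}$ sends only $x_i$ (never $\alpha_i$), sets $y_i$ to the received value, and guesses uniformly at random.

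The first step is to verify that $\vec{\sigma}+\sigma_d$ is $(k,t)$-robust (in fact strongly so) in $\Gamma_d^{f,k,t}$. Under honest play every player earns $u_i=1$. Any coalition of size $t$ acting maliciously still induces the mediator to output $f_C$ on a valid perturbation, so $u_i^{\mathrm{corr}}=1$ for every honest $i$; and since the only information leaving the mediator is $f_C$, which is independent of each $\alpha_j$, no deviating coalition of size $k$ (even while colluding with $t$ malicious players) can form better-than-uniform guesses, keeping $u_i^{\mathrm{guess}}=0$. This yields both $t$-immunity and $(k,t)$-resilience.

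The heart of the argument is the reduction. Given a $(k,t)$-robust implementation $\vec{\sigma}_{ACT}$ and an adversary $A=(T',\vec{\tau}_{T'},\sigma_e)$ with $|T'|=k+t$, I partition $T'=K\cup T$ with $|K|=k$, $|T|=t$, and construct a trusted-party adversary $A^{tr}=(T',h,c,O)$ whose output distribution matches $\vec{\sigma}_{ACT}(\vec{x},A)$. For correctness: if some honest $i$'s output were, with positive probability, not of the form $f_C(\vec{x}/_{(T',\vec{z})})$, then $u_i^{\mathrm{corr}}<1$, and letting $T$ execute the sub-strategy of $\vec{\tau}_{T'}$ responsible for the deviation (with $K$ playing $\vec{\sigma}_{ACT}$ honestly) would strictly lower $i$'s payoff below $1$, violating $t$-immunity; this forces the honest-output distribution to be realizable by appropriate $(c,h)$. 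For privacy: consider the deviation in which the players in $K$ follow $\vec{\sigma}_{ACT}$ unchanged except that they output as $g_i^j$ the MAP estimate of $\alpha_j$ given the joint view of $K\cup T$. If $\vec{\sigma}_{ACT}$ leaks any information about $\vec{\alpha}_{-T'}$ beyond what $f(\vec{x}/_{(T',h)})$ reveals (which is nothing, as $f$ is independent of $\vec{\alpha}$), this deviation gives every player in $K$ a strictly positive gain on $u^{\mathrm{guess}}$, contradicting $(k,t)$-resilience. Hence the joint view of $T'$ can be simulated from $\vec{x}_{T'}$, $\vec{\alpha}_{T'}$, $f(\vec{x}/_{(T',h)})$, and fresh randomness, and $O$ is read off from that simulator.

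The main obstacle is making the privacy step rigorous: I must translate mere non-trivial leakage into a deviation that strictly improves each coalition member's payoff by more than any collateral loss on $u^{\mathrm{corr}}$. This forces a careful balancing: I pick $\epsilon$ small enough that $u^{\mathrm{corr}}$ still dominates (preserving the correctness argument) yet choose a strictly proper scoring rule whose expected gain from a posterior at total variation distance $\delta$ from uniform is bounded below by a positive function of $\delta$, ensuring that any leakage is punished. A secondary technical point is that, because the setting is asynchronous, the extraction of $O$ must exploit the scheduler freedom from the implementation definition to align the simulator's view with the actual execution; here the hypothesis $n>2(k+t)$ ensures that every admissible core set $C$ contains at least one honest player, so the correctness predicate is non-vacuous and the reduction goes through.
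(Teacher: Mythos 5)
The central gap is in your privacy step: the guessing component rewards guesses of the auxiliary secrets $\alpha_j$, which are independent of the inputs $x_j$, are never used by $f$, and are never sent by the honest strategy in the mediator game. Consequently an implementation can leak the actual inputs with impunity. Take $\vec{\sigma}_{ACT}$ to be: run a $(k+t)$-secure computation of $f$, then have every player broadcast its input $x_j$, output the computed value, and guess uniformly. No information about any $\alpha_j$ ever enters this protocol, so no deviation can improve $u^{\mathrm{guess}}$, and the correctness component is unaffected; hence this $\vec{\sigma}_{ACT}$ is a $(k,t)$-robust implementation of your $\vec{\sigma}+\sigma_d$, yet $g(\vec{\sigma}_{ACT})$ does not $(k+t)$-strictly securely compute $f$, because the coalition's view (and thus its outputs) depends on the honest inputs in a way that cannot be written as $O(\vec{x}_T,r,v,i)$. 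This is precisely the counterexample the paper's construction is designed to defeat: there the guess is of the input profile $\vec{x}$ itself, the reward for a correct guess is calibrated to $p^{-1}-1$, where $p$ is the posterior probability of $\vec{x}$ given the core set, the output, and the inputs the coalition actually used in the computation (which must themselves be encoded in the action profile via Shamir shares, to handle deviators who lie about their inputs), and a $\{G,R,M\}$ self-declaration converts any leakage either into a strict gain for self-declared rational deviators or into a loss imposed on the honest players---the latter being essential when $k=0$, where $(k,t)$-resilience is vacuous and your appeal to it collapses, leaving only $t$-immunity to work with.

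Your correctness step also does not follow from $(k,t)$-robustness as defined. A coalition of size $k+t$ that merely lowers honest players' $u^{\mathrm{corr}}$ violates neither $t$-immunity (which constrains only coalitions of size at most $t$) nor $(k,t)$-resilience (which compares only the coalition members' own payoffs under different deviations); and the harmful behavior need not decompose, as you assume, into a $t$-sized sub-deviation that still causes the bad output while $K$ plays honestly. The paper sidesteps this by making an insecure joint action profile itself strictly rewarding for the coalition when some deviator declares $R$ (and strictly harmful to everyone when none does), so that any failure of correctness directly contradicts robustness for coalitions of size exactly $k+t$. Relatedly, your per-player condition $y_i=f_C(\vec{x}/_{(T,\vec{z})})$ is checked separately for each honest $i$, so it does not force the honest players to agree on a single $(C,z)$; even if every honest player earns $u^{\mathrm{corr}}=1$, the honest output profile need not be realizable by any single trusted-party adversary, whereas the paper's notion of a profile being $(T,\vec{x})$-secure is a joint condition on the whole action profile.
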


The proof of Theorem~\ref{thm:secure-computation-reduction} is
surprisingly nontrivial. Given a function $f$,
it is easy to check that there is a
$(k,t)$-robust strategy $\vec{\sigma} + \sigma_d$ with a mediator
that $(k+t)$-securely computes $f$, where we assume that actions in the
mediator game have the form $(o,tp)$, where $o$ is a possible output
of $f$ and $tp$ is the player's self-declared type (i.e., whether the player is honest, rational, or malicious), honest players
get a payoff of 1 if they all agree on 
a valid
 output of $f$ and 0
otherwise, and self-declared rational players get a payoff of 1 if
they disrupt the output of honest players:
players send their inputs
to the mediator, the mediator waits to receive $n-k-t$ inputs, sends
to every player the output of the computation (taking the remaining
$k+t$ inputs to be $\bot$), and then the players
play the output received. We would expect
that any $(k,t)$-robust implementation of $\vec{\sigma} + \sigma_d$
also $(k+t)$-securely computes $f$ (without the mediator),
but this is
not the case.  For example, if players perform a secure computation of $f$
and, right after that, they broadcast their inputs, the resulting protocol
would still be a $(k,t)$-robust implementation of $\vec{\sigma} +
\sigma_{d}$. However, using this strategy, all rational and malicious
players would learn the honest players outputs. This example shows
that the game must somehow 
reward players that declare themselves to be rational if they
manage to learn something that they shouldn't. A more detailed
discussion of this issue, our solution, and a full proof of
Theorem~\ref{thm:secure-computation-reduction} can be found in
Appendix~\ref{sec:appendix2}. 
\commentout{
The proof of Theorem~\ref{thm:secure-computation-reduction} can be
found in Appendix~\ref{sec:appendix2}. 
}
As an immediate corollary of
Theorems~\ref{prop:weak-secure-computation} and
\ref{thm:secure-computation-reduction}, we get the desired lower bound
for 
implementing mediators.

\begin{corollary}\label{thm:main}
If 
$3k + 3t \le n \le 4k + 4t$ 
 there exists a 
$(k,t)$-robust (resp., strongly robust)
 protocol $\vec{\sigma} + \sigma_d$ for $n$ players and a
mediator such that there is no
$(k,t)$-robust (resp., strongly robust)
protocol $\vec{\sigma}_{ACT}$ that implements $\vec{\sigma} +
\sigma_d$. 
\end{corollary}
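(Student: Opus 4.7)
The plan is to obtain the corollary as a direct combination of the two ingredients already stated, namely Theorem~\ref{prop:weak-secure-computation}(b) (the lower bound for weak secure computation in the range $3t \le n \le 4t$) and Theorem~\ref{thm:secure-computation-reduction} (the reduction from strict secure computation to implementing $(k,t)$-robust mediators). First I would instantiate the weak-secure-computation lower bound with parameter $t' := k+t$: since $3(k+t) \le n \le 4(k+t)$, Theorem~\ref{prop:weak-secure-computation}(b) supplies a finite domain $D$ and a function $f:D^n \rightarrow D$ that cannot be $(k+t)$-weakly securely computed in the asynchronous setting. Invoking Theorem~\ref{thm:ranking}(b), weak and strict secure computation coincide, so $f$ also fails to be $(k+t)$-strictly securely computed by any protocol.

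Next I would feed this particular $f$ into Theorem~\ref{thm:secure-computation-reduction}. The hypothesis $2(k+t) < n$ required there is implied by our assumption $3(k+t) \le n$, so the theorem applies and produces a game $\Gamma_d^{f,k,t}$, an action-translation function $g: A \rightarrow D$, and a $(k,t)$-robust (resp.\ strongly $(k,t)$-robust) strategy profile $\vec{\sigma} + \sigma_d$ for the $n$ players and the mediator, with the property that every $(k,t)$-robust (resp.\ strongly $(k,t)$-robust) implementation $\vec{\sigma}_{ACT}$ of $\vec{\sigma} + \sigma_d$ yields a protocol $g(\vec{\sigma}_{ACT})$ that $(k+t)$-strictly securely computes $f$. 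This is precisely the pairing we need.

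To conclude, I would argue by contradiction: suppose some $(k,t)$-robust (resp.\ strongly $(k,t)$-robust) protocol $\vec{\sigma}_{ACT}$ did implement $\vec{\sigma} + \sigma_d$. Then $g(\vec{\sigma}_{ACT})$ would be an asynchronous protocol that $(k+t)$-strictly securely computes $f$, contradicting the first step. Hence no such implementation can exist, which is exactly the corollary. There is no real obstacle here beyond keeping the parameter bookkeeping clean; the only point to double-check is that the range $3(k+t) \le n \le 4(k+t)$ in the corollary simultaneously satisfies both $3(k+t) \le n \le 4(k+t)$ (needed to apply Theorem~\ref{prop:weak-secure-computation}(b) with $t' = k+t$) and $2(k+t) < n$ (needed to apply Theorem~\ref{thm:secure-computation-reduction}), which is immediate.
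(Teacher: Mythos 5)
Your proposal is correct and matches the paper's own argument: the paper obtains this corollary exactly by combining Theorem~\ref{prop:weak-secure-computation}(b) (instantiated at $k+t$) with Theorem~\ref{thm:secure-computation-reduction}, using the equivalence of weak and strict secure computation from Theorem~\ref{thm:ranking}(b) to bridge the two. The parameter bookkeeping you check ($3(k+t)\le n$ gives both the lower-bound range and $2(k+t)<n$) is the same routine verification the paper leaves implicit.
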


\section{The Lower Bound on Implementing Mediators}

In the previous section, we showed that $(t+k)$-strict secure
computation can be reduced to implementing certain $(k,t)$-robust (or
strongly robust) strategies, and thus that if $3(t+k) \le n \le
4(t+k)$, then there exist $(k,t)$-robust (resp., strongly robust) strategies
with a mediator that cannot be implemented without a mediator. In this section,
we use a different construction to extend this impossibility result to
$t+k+1 < n \le 3k+3t$.   That is, we have the following strengthening
of Corollary~\ref{thm:main}:

\begin{theorem}\label{thm:main2}
  If $k+t+1 < n \le 4k+4t$ there exists a $(k,t)$-robust 
(resp., strongly $(k,t)$-robust) strategy profile $\vec{\sigma} + \sigma_d$ 
  for $n$ players and a mediator such that there is no $(k,t)$-robust
(resp., strongly $(k,t)$-robust) strategy profile
$\vec{\sigma_{ACT}}$ that implements $\vec{\sigma} + \sigma_d$. 
\end{theorem}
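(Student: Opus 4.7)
The theorem has been set up so that it splits cleanly into the two cases $3(k+t) \le n \le 4(k+t)$ and $k+t+1 < n \le 3(k+t)$, and each case is handled by a separate reduction already stated in the paper. The plan is to combine these two reductions with the corresponding lower bounds.

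For Case 1 ($3(k+t) \le n \le 4(k+t)$), I would apply Theorem~\ref{thm:secure-computation-reduction} with the parameters $k$ and $t$ (the hypothesis $2(k+t) < n$ is satisfied since $n \ge 3(k+t)$). This produces, for any finite-domain function $f:D^n \rightarrow D$, a game $\Gamma_d^{f,k,t}$ and a $(k,t)$-robust (resp., strongly $(k,t)$-robust) strategy $\vec{\sigma} + \sigma_d$ with the property that any $(k,t)$-robust (resp., strongly $(k,t)$-robust) implementation $\vec{\sigma}_{ACT}$ yields a protocol $g(\vec{\sigma}_{ACT})$ that $(k+t)$-strictly securely computes $f$. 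Now by Theorem~\ref{thm:ranking}(b), $(k+t)$-strict secure computation coincides with $(k+t)$-weak secure computation, and by Theorem~\ref{prop:weak-secure-computation}(b), since $3(k+t) \le n \le 4(k+t)$, there exists a finite domain $D$ and a function $f:D^n \rightarrow D$ that cannot be $(k+t)$-weakly securely computed. Choosing this particular $f$ produces a mediator strategy whose implementation would contradict the lower bound, so no such implementation exists.

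For Case 2 ($k+t+1 < n \le 3(k+t)$), I would apply Theorem~\ref{thm:reduction-to-weak2}. The hypothesis $n > k+t+1$ holds, so there exist a game $\Gamma_d^{k,t}$, a function $g$, and a $(k,t)$-robust (resp., strongly $(k,t)$-robust) strategy $\vec{\sigma} + \sigma_d$ such that any $(k,t)$-robust (resp., strongly $(k,t)$-robust) implementation $\vec{\sigma}_{ACT}$ produces a protocol $g(\vec{\sigma}_{ACT})$ that is a $(k+t)$-resilient implementation of weak consensus. But by Lamport's classical lower bound~\citeyear{cite:L83}, weak consensus cannot be $(k+t)$-resiliently implemented when $n \le 3(k+t)$, contradicting the existence of such an implementation.

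Putting the two cases together exhausts the interval $k+t+1 < n \le 4(k+t)$ and proves Theorem~\ref{thm:main2}. The potentially delicate step is ensuring that the two reductions (Theorems~\ref{thm:secure-computation-reduction} and \ref{thm:reduction-to-weak2}) preserve the strong versions of the robustness property, not just the ordinary ones, so that the parenthetical ``strongly $(k,t)$-robust'' statement follows as well; both theorems are stated with this strengthening built in, so the combination goes through. No additional argument is needed beyond invoking these four results in the indicated ranges.
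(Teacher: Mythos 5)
Your proposal is correct and follows exactly the paper's own decomposition: Corollary~\ref{thm:main} (i.e., Theorem~\ref{thm:secure-computation-reduction} combined with Theorem~\ref{prop:weak-secure-computation}(b), via the strict/weak equivalence of Theorem~\ref{thm:ranking}(b)) handles $3(k+t) \le n \le 4(k+t)$, and Theorem~\ref{thm:reduction-to-weak2} together with Lamport's lower bound handles $k+t+1 < n \le 3(k+t)$. Your closing remark about the strong-robustness variants being built into both reductions matches how the paper treats them as well.
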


Corollary~\ref{thm:main} shows that  Theorem~\ref{thm:main2} holds if
$3k + 3t \le n \le 4k+4t$. We prove the remaining cases by reducing
weak consensus to implementing mediators, much like as we did in the
previous section for secure computation. 

\begin{theorem}\label{thm:reduction-to-weak2}
If $n > k+t+1$, then there exists a game
$\Gamma^{k,t}_d$ in which all players have the same set $A$ of
possible actions, a function $g: A \rightarrow \{0,1\}$, and a strongly
$(k,t)$-robust strategy $\vec{\sigma} + \sigma_d$ for $n$ players and
the mediator in $\Gamma_d^{k,t}$ such that if $\vec{\sigma}_{ACT}$ is
a strongly $(k,t)$-robust implementation of $\vec{\sigma} + \sigma_d$,
then $g(\vec{\sigma}_{ACT})$ is a $(k+t)$-resilient implementation of weak consensus. 
\end{theorem}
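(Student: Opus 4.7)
The plan is to construct $\Gamma_d^{k,t}$ in analogy with the game used for Theorem~\ref{thm:secure-computation-reduction}, specialized to weak consensus. Take the action set to be $A = \{0,1\} \times \{H,R\}$, with $g$ the projection onto the bit component; let the honest strategy $\sigma_i$ send $x_i$ to the mediator and then play $(b^*, H)$, where $b^*$ is the bit the mediator returns; and let the mediator strategy $\sigma_d$ wait for $n-k-t$ inputs (possible because $n > k+t+1$ guarantees at least one honest input eventually arrives) and return a bit $b^*$ satisfying $b^* = x$ whenever all received inputs equal $x$ (e.g., the majority of received inputs). The payoffs will reward $H$-declared players uniformly for coordinating on $b^*$ and reward $R$-declared players with a strictly larger bonus whenever a specific ``unforgeable'' pattern of $H$-disagreement is witnessed.

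Granting such a game, one first verifies that $\vec\sigma + \sigma_d$ is strongly $(k,t)$-robust: $t$-immunity holds because malicious players cannot suppress the mediator's broadcast, so honest players play $b^*$ and earn payoff $1$; strong $(k,t)$-resilience holds because the mediator's broadcast forces all non-malicious $H$-declared players to play the same bit, so by design no $(k+t)$-coalition can trigger the $R$-bonus. For the reduction, let $\vec\sigma_{ACT}$ be a strongly $(k,t)$-robust implementation of $\vec\sigma + \sigma_d$. Property (b) of weak consensus follows at once from the ``implements'' relation: when all inputs equal $x$ and nobody deviates, the output distribution of $\vec\sigma_{ACT}$ must match that of $\vec\sigma + \sigma_d$, which is the constant $x$. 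For property (a), suppose for contradiction that some adversary $A = (T^*, \vec\tau_{T^*}, \sigma_e)$ with $|T^*| \le k+t$ causes two non-adversarial players to output different bits with positive probability. Partition $T^* = K \cup T$ with $|K| \le k$ and $|T| \le t$, and let $\vec\tau_K'$ differ from $\vec\tau_K$ only by changing the self-declaration in the final $\Gamma$-action from $H$ to $R$. Because all communication and bits played are unchanged, the honest disagreement persists, the $R$-bonus trigger fires, and the players in $K$ earn the $R$-bonus, strictly exceeding the payoff of $1$ guaranteed to them in the non-deviation scenario $(\vec\sigma_{-T}, \vec\tau_T)$ by $t$-immunity. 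This contradicts strong $(k,t)$-resilience, so property (a) must hold.

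The main obstacle is the payoff design: the $R$-bonus trigger must be simultaneously (i) unforgeable in $\vec\sigma + \sigma_d$ by any $k$-rational-plus-$t$-malicious coalition---otherwise $\vec\sigma + \sigma_d$ itself is not $(k,t)$-resilient---and (ii) guaranteed to fire whenever two honest players disagree in $\vec\sigma_{ACT}$ under a $(k+t)$-adversary. The naive trigger ``any two $H$-declared players play different bits'' fails (i), since $t$ malicious players can simply self-declare $H$ and play $\neg b^*$ in the mediator game. Resolving this requires an input-dependent witness condition or carefully calibrated threshold, morally analogous to the ``reward for illicitly-learned information'' device used in the proof of Theorem~\ref{thm:secure-computation-reduction}, together with setting the $R$-bonus magnitude large enough that even a small positive probability of disagreement in $\vec\sigma_{ACT}$ yields a strict expected-utility improvement for rational deviators (or, alternatively, exhibiting an adaptive adversary that forces disagreement with probability close to $1$).
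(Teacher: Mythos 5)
There is a genuine gap: you correctly diagnose that the naive trigger (``two $H$-declared players disagree'') is forgeable by $t$ malicious players who self-declare $H$ in the mediator game, but you then leave the actual trigger design as an open ``main obstacle,'' and the deviation you construct does not obviously fire any unforgeable trigger. The resolution in the paper is not an input-dependent witness or a magnitude calibration; it is a counting device on the self-declarations themselves. The payoff is nonzero only if (i) the number of players declaring $R$ \emph{strictly exceeds} $k+t$, or (ii) the number declaring $R$ is \emph{exactly} $k+t$ and two players \emph{outside} that $R$-set play different bits. This is unforgeable in the mediator game: a coalition of size at most $k+t$ can never reach case (i), and to reach case (ii) every coalition member must declare $R$, at which point all non-$R$ players are honest and play the mediator's common value, so no disagreement among them exists. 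Yet it always fires in a bad implementation: given an adversary $T$ with $|T|\le k+t$ under which two honest players $i,j$ disagree (or one of them declares $R$), one pads $T$ to a set $T'$ of size exactly $k+t$ with $i,j\notin T'$ (this is where $n>k+t+1$ is used), and has the added players behave exactly like honest players except for declaring $R$; honest players cannot distinguish the two adversaries, so the disagreement persists and condition (i) or (ii) triggers. Your deviation, which only flips the declarations of the $\le k$ players in $K$, would not meet an exact-$(k+t)$ threshold, and without a concrete trigger the ``strictly exceeding the payoff of $1$'' claim cannot be checked (in the paper the equilibrium payoff is $0$ and the triggered payoff is $1$, so no large bonus is needed).

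A second omission: the case $k=0$. There are then no rational players to reward, so the resilience-violation argument is vacuous; the paper handles this by flipping the sign of the triggered payoff to $-1$ for everyone when $k=0$, so that the same deviation contradicts $t$-immunity instead of $(k,t)$-resilience. Your proposal rewards $R$-declared players in all cases and therefore does not yield a contradiction when $k=0$. Your treatment of property (b) and the overall shape of the indistinguishability argument are fine, and your mediator (wait for one bit, broadcast it) is essentially the paper's, but the theorem's content lives in the payoff construction you have deferred.
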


The proof of Theorem~\ref{thm:reduction-to-weak2} can be found in
Appendix~\ref{sec:appendix3}.

\commentout{
\begin{proof}
Consider the game $\Gamma^{k,t}$ in which the set of actions of each player is $\{G,R\} \times \{0,1\}$. Given an action profile $\vec{a}$, in which each player $i$ plays $a_i = (Q_i, y_i)$ with $Q_i \in \{G,R\}$ and $y_i \in \{0,1\}$, let $T$ be the subset of players $i$ such that $Q_i = R$. If $|T| > k+t$, if $k = 0$ all players get a payoff of -1, otherwise all players get a payoff of 1. If $|T| = t+k$ and there exist two players $i,j \not \in T$ such that $y_i \not = y_j$, if $k = 0$ all players get a payoff of -1, otherwise all players get a payoff of 1. In all remaining cases, all players get a payoff of 0. Let $g$ be the function such that $g(Q, y) = y$.

Consider the following protocol $\vec{\sigma} + \sigma_d$ for $n$
players and a mediator. With $\sigma_i$, each player $i$ sends the
mediator its input $x_i$ the first time it is scheduled. The mediator
waits until receiving a message containing either $0$ or $1$, and
sends that value $y$ to all players. The players play $(G, y)$
whenever they receive $y$ from the mediator. Clearly, this strategy is
$(k,t)$-robust (resp., strongly $(k,t)$-robust), since the only way
that players get a payoff other than 0 with an adversary of size 
at most $k+t$ is if two honest players output different values,
but they all receive the same value from the mediator. Suppose a
strategy $\vec{\sigma}_{ACT}$ is a $(k,t)$-robust (resp., strongly
$(k,t)$-robust) implementation of $\vec{\sigma} + \sigma_d$. We show
next that (a) for all adversaries $A = (T, \vec{\tau}_T, \sigma_e)$
with $|T| \le k+t$, all honest players play the same value $y_i$, and
(b) if all players are honest and have the same input $x$, then
they output $x$.  

Property (b) follows trivially from the fact that $\vec{\sigma}_{ACT}$ implements $\vec{\sigma} + \sigma_d$: if all players are honest and have the same input $x$, the value received by the mediator in $\vec{\sigma} + \sigma_d$ is guaranteed to be $x$, and thus, in $\vec{\sigma} + \sigma_d$, all honest players play $(G, x)$. 

To prove (a),
\commentout{
first note that in all histories of $\vec{\sigma}_{ACT}$
with an adversary $A = (T, \vec{\tau}, \sigma_e)$ such that $|T| \le
t+k$, all honest players play $G$. For if there exists a
all honest players play $G$. For if there exists a
history $H$ and a 
player $i \not \in T$ that does not play $G$ in $H$, consider an
adversary $A' = (T', \vec{\tau}_{T'}, \sigma_e)$ such that $|T'| =
k+t$, $T \subseteq T'$ and $i \not \in T'$, and such that players in
$T$ use $\vec{\tau}_T$ and players in $T' - T$ act like 
honest players, except that all the players in $T'$ play
$(R, 0)$. Since histories generated by playing with $A$ and $A'$ are
indistinguishable by 
honest players, 
there exists a history $H'$ in $\vec{\sigma}_{ACT}$ with adversary
$A'$ in which $i$ plays $R$, and thus at least $k+t+1$ players play
$R$ in $H'$. Thus, in $H'$ all players  get a payoff of 1
instead of 0 (or $-1$ if $k = 0$), contradicting the fact that
$\vec{\sigma}_{ACT}$ is $(k,t)$-resilient (resp., strongly
$(k,t)$-resilient). If $k = 0$, this contradicts the fact
$\vec{\sigma}_{ACT}$ is $t$-immune.

Now suppose
}
suppose that there exists an adversary $A = (T, \vec{\tau},
\sigma_e)$ with $|T|
\le k+t$ such that, in some history $H$ of
$\vec{\sigma}_{ACT}$ with $A$, there exist two players $i,j \not \in T$ that
play $(Q_i, y_i)$ and $(Q_j, y_j)$, respectively, with $y_i \not =
y_j$, $Q_i = R$, or $Q_j=R$. 
Consider an adversary $A' = (T', \vec{\tau}_{T'}, \sigma_e)$
such that 
$|T'| = k+t$, $T \subseteq T'$, and $i,j \not \in T'$ (we
know that such a  subset $T'$ exists, since $n > t+k+1$), and such that
players in $T$ act as in $\vec{\tau}_T$ and players in $T' - T$ 
act like honest players, except that all of them play $(R,
0)$.
Since histories generated by playing with $A$ and $A'$ are
indistinguished by honest players, 
there exists a history $H'$ in $\vec{\sigma}_{ACT}$ with adversary
$A'$ in which all honest players send and receive the same messages,
and perform the same actions.
If $Q_i = R$ in $H$, then there are $k+t+1$ players that
play $R$ in $H'$: the $k+t$ players in $T'$ and $i$.
Thus, all players get a payoff of 1 if $k > 0$, contradicting the
assumption that  $\vec{\sigma}_{ACT}$ is $(k,t)$-resilient, or 
all players get a payoff of $-1$ if $k=0$,
contradicting the assumption that $\vec{\sigma}_{ACT}$ is $t$-immune.
The same argument shows that $Q_i = G$ in $H$ and $H'$ and, indeed, 
that all honest players must play $G$ in $H$ and $H'$.  Now if $q_i \ne q_j$ in
$H$, then $q_i \ne q_j$ in $H'$, so (since all honest players play
$G$, so exactly $k+t$ players in $H'$ play $R$), again, all players in $H'$
get a payoff of 1 if $k> 0$ and a payoff of $-1$ if $k=0$, so we again
get the same contradiction as before.
%
\end{proof}
}

Since, as proved by Lamport~\citeyear{L83}, there are no $(t+k)$-resilient
implementations of weak consensus if $n \le 3(t+k)$, it follows from
Theorem~\ref{thm:reduction-to-weak2} that Theorem~\ref{thm:main2}
holds for $t+k+1 < n \le 3k+3t$ as well, completing its proof. 

\section{Conclusion}
We have shown that both $(k+t)$-secure computation and the problem of
implementing a $(k,t)$-robust equilibrium with a mediator have a
lower bound of $n > 4k + 4t$. Moreover, we have shown that this is
also a lower bound for 
weaker notions of secure computation such as $(k+t)$-strict secure
computation and $(k+t)$-weak secure computation.
Finally, by considering a number of variants of the definition of
secure computation, we also 
highlighted
some of the subtleties in
the definition.

ADGH showed that protocols can tolerate more
malicious behavior if honest players can punish rational players if
they are caught deviating. Honest players can perform this punishment
by playing an action profile that results in 
all players getting an expected payoff that is worse than their
payoff in equilibrium. Not all games have such a punishment profile, 
but
ADGH showed that for games that do,
 every $(k,t)$-robust strategy with a mediator
can be implemented if $n > 3k + 4t$.  Finding a matching lower bound for this
case remains an open problem. 


\newpage

\appendix

\section{Proof of Theorems~\ref{prop:secure-computation} and \ref{prop:weak-secure-computation}}\label{sec:appendix1}

For Theorem~\ref{prop:secure-computation}(a), 
note that if $n > 4t$, Theorem~\ref{thm:BCG} shows that every function
$f:D^n \rightarrow D$ can be $t$-securely computed, and thus $t$-weak
securely computed as well. If $n \le t$, let $\bot$ be the input
assigned to the agents that did not submit an input. It can be easily
shown that the protocol where each agent sends no messages and
outputs $(\emptyset, f(\bot^n))$ $t$-securely computes
$f$. Similarly, for Theorem~\ref{prop:weak-secure-computation}(a),
it can be easily checked that if $n \le 2t$, the protocol where
each agent sends nothing and outputs $([n-t], f(\bot^n))$
$t$-weak securely computes $f$.

For the lower bounds (Theorems~\ref{prop:secure-computation}(b) and
\ref{prop:weak-secure-computation}(b)), we proceed as follows.
Consider the function
$f^n: \{0,1, \bot \}^n \rightarrow \{0,1, \bot\}$ that essentially
takes majority between $0$ and $1$: it outputs 1 if the number of
agents with input $1$ is greater or equal to the number of agents with
input $0$, otherwise it outputs $0$. Players who do not submit an
input are assumed to have input $\bot$. 
We 
start by showing 
 that 
$f^4$ 
 cannot be 
 $1$-weakly securely computed
  by four 
agents. 

Suppose that 
$f^4$
 can be 
 $1$-weakly securely computed 
  using a protocol
$\vec{\sigma}$.
Let $\sigma_e^N$ be the scheduler that schedules 
agents 
 1, 2,
and 3 cyclically, and right before scheduling
an agent, it
delivers the messages that were sent by the other 
agents 
the last time they were scheduled. After scheduling each of the first three
agents 
 $N$ times, it schedules 
agent 
  4 as well, adding it to the
cyclic order.

Given a history $H$,
let $\vec{x}_H$ denote the input profile of 
agents 
 in $H$,
let $H_i$ denote agent $i$'s local history in $H$, 
let $H_e$ denote the scheduler's local history in $H$, and 
let $H_{(i,j)}$ denote the \emph{conversation} between 
agents 
 $i$ and
$j$ (i.e.,  
the messages sent and received between $i$ and $j$, in
addition to the
relative 
 times at which $i$ and $j$ were scheduled).
%
We can now prove essentially what BCG claimed to prove (although, as
we said, this claim does not hold for the BCG construction).

\begin{lemma}\label{lemma:same-histories}
There exist $N$ and two (finite) histories $H$ and $H'$ of $\vec{\sigma}$
where the scheduler uses $\sigma_e^N$, 
$\vec{x}_H = (1,0,1,1)$, $\vec{x}_{H'} = (0,1,1,1)$, 
agents 
 1, 2, and 3 all output 1
in $H$,  
agent
 4 is never scheduled in either $H$ or
$H'$, $H_{1,2} = H'_{1,2}$, and $H_e = H'_e$.
\end{lemma}

To prove Lemma~\ref{lemma:same-histories}, we first need to prove 
what seems to be obvious: if all agents are honest, at most $t$
agents have input 0, and $n \ge 3t$, then the output of a weakly
secure computation of $f^n$ will be 1.  While
this seems obvious (and is true), it is not quite so trivial. For example,
it is not true if 
$n = 3t-1$.
In this case, if we consider a trusted-party adversary $A = (T, c, h,
O)$, in which $|T| = t$, $h$ replaces all inputs of malicious players
with 0, and $c$ chooses all $t$ malicious players and $t-1$
additional honest players, it is easy to check that the output of
honest players is 0. 

\begin{lemma}\label{prop:technical-1}
  Let $n \ge 3t$ and let $\vec{\sigma}$ be a protocol that $t$-weakly
  securely computes $f^n$. Then for all schedulers, in all histories of
$\vec{\sigma}$ in which all agents are honest and at most $t$ agents
have input $0$, all agents output $1$. 
\end{lemma}

\begin{proof}
  Let $S$ be the subset of agents that have input $0$. Given a scheduler
  $\sigma_e$, consider an adversary $A = (T, \vec{\tau}_T,
\sigma_e)$ such that $T \supseteq S$, $|T| = t$, and $\vec{\tau}_T =
\vec{\sigma}_T$ (so all the malicious agents follow protocol
$\vec{\sigma}$). By definition of $t$-weak secure computation, the 
output of honest agents with adversary $A$ should be one that is possible with a trusted-party adversary of the form $A' = (T, c, h, O)$. However, no
matter what the output $C$ of $c$ is, since $|C| \ge n-t$, there
will be at least $n-2t$ honest agents in $C$, all of them with input
$1$. Since $n \ge 3t$, this suffices to guarantee that all players not
in $T$ output 1. Since malicious agents play $\vec{\sigma}$, they are indistinguishable from honest agents. Thus, if all agents are honest, all agents not in $T$ output $1$. To see that 
agents in $T$ also output $1$ if all players are honest, consider an
adversary $A'' = (T', \vec{\tau}_{T'}, \sigma_e)$ such that $T' \cap T
= \emptyset$, $|T'| = t$ (such a set $T'$ always exists
since $n \ge 3t$), and $\vec{\tau}_{T'} = \vec{\sigma}_{T'}$.
Since  honest agents not in $T \cup T'$ 
(note that $[n] \setminus (T \cup T') \not= \emptyset$)
have the same histories with $A$ and $A''$, they must output the same
value with both adversaries, and so must
output $1$ with adversary $A''$. By definition of $t$-weak secure
computation, since $|T'| = t$, all agents not in $T'$ must output the
same value. Thus, since $T \cap T' = \emptyset$, all agents in $T$
also output $1$ with adversary $A''$. Again, since agents in 
$T'$
 are
indistinguishable from honest agents, this implies that agents in $T$
also output $1$ if all agents are honest. 
\end{proof}

\commentout{
\begin{lemma}\label{prop:technical-1}
In all histories of $\vec{\sigma}$ such that agents have input profile $(0,1,1,1)$, $(1,0,1,1)$, $(1,1,0,1)$ or $(1,1,1,0)$, all agents output $1$.
\end{lemma}

\begin{proof}
  We prove this result for input profile $(0,1,1,1)$; the remaining
cases are analogous. Given a scheduler $\sigma_e$, consider an
adversary $A_1 = (T, \vec{\tau}_T, \sigma_e)$ such that $T = \{1\}$
and $\tau_1 = \sigma_1$ (so the malicious agent follows
its part of the protocol $\vec{\sigma}$). By definition of $t$-weak
secure computation, the output of honest agents should be one that is
possible with a trusted-party adversary of the form $A_1' = (\{1\}, c,
h, O)$. It is easy to check that regardless of $c$ and $h$,
$\rho_i(A_1', \vec{x}; f)$ is of the form $(C, 1)$ for $i \in
\{2,3,4\}$. Since agent 1 plays $\sigma_1$,
agents $2$, $3$, and $4$ cannot distinguish this
history from one where all agents are honest, so they must
also output 1 when all agents are honest. To see that agent $1$ also
outputs 1, consider an adversary $A_2 = (T', \vec{\tau}_{T'},
\sigma_e)$ such that $T' = \{2\}$ and $\tau_2 = \sigma_2$. Since $A_2$
is indistinguishable from $A_1$ to agents 3 and 4, they must
output $1$ with $A_2$ (as they do with $A_1$). Since $1$ is honest in
this case, it must output $1$ as well. Since agent $2$ is
indistinguishable from an honest agent, agent  $1$ also outputs $1$ if
all agents are honest. 
\end{proof}
}

\begin{proof}[Proof of Lemma~\ref{lemma:same-histories}]
\commentout{
With input profile 
$(1,0,1,1)$, 
 all 
agents  
  must
eventually output 1
 even if some 
agent
 does does not send any
messages (since that 
agent
 could be malicious).
Thus,}
By Lemma~\ref{prop:technical-1}, 
 there
exists an integer $N$ such that if 
agents 
 1, 2, and 3 are honest,
with nonzero probability, 
 they will output $1$ with scheduler $\sigma_e^N$
at or before the $N$th time they are scheduled.
Let $H$ be a history where the 
agents 
 use
$\vec{\sigma}$, the scheduler uses $\sigma_e^N$, the
input is 
$(1,0,1,1)$, 
agents  
  1, 2, and 3 are
honest and have
been scheduled 
at most $N$ times and all three have outputted 1.
By the
properties of secure computation, in particular, the secrecy of the
inputs, there must exist a history $H''$ such that $\vec{x}_{H''} = (1,1,0,1)$,
  $H''_1 = H_1$, and $H''_e = H_e$.
  (Note that this means that we can assume, without loss of
generality, that the scheduler uses strategy $\sigma_e^N$.)
If this were not the case and
agent
 1 were malicious in $H''$, then it would know
  that the input profile can't be $(1,1,0,1)$ given histories $H_1$
  and $H_e$. (Recall that we can assume without loss of generality
  that the malicious 
  agents
   can communicate with the
  scheduler.) Similarly, there exists a history $H'$ with
$\vec{x}_{H'} = (0,1,1,1)$  
  such that $H'_2 = H''_2$ and $H'_e =
  H''_e$. The fact that the scheduler has the same local history in
  $H, H'$, and $H''$ and that $H_1 = H''_1$ and $H''_2 = H'_2$ implies
    that $H_{1,2} = H'_{1,2} (= H''_{1,2})$, as desired.  In more detail, since
    $H_2' = H_2''$, agent 2 sends the same messages to and receives
    the same messages from agent 1 in $H'$
    and $H''$, so 1 receives the same messages from and sends the same
    messages to 2 in both $H'$ and $H''$.  Thus, $H_{1,2}' =
    H_{1,2}''$.  A similar argument shows that $H_{1,2} =
    H''_{1,2}$.  
\end{proof}

Now suppose that 
agents 
 have input profile $\vec{x} = (0,0,0,0)$. We show
that there exists a strategy $\tau_3$ for 
agent
 3 
such that if all other 
agents 
 play $\vec{\sigma}$ and the scheduler
plays $\vec{\sigma}_e^N$, then with non-zero probability, 
agents 
 1 and 2 output 1. This suffices to show that 
$f^4$
 cannot be
 $1$-weakly securely computed, since honest 
agents  
  should output 0
  when playing with any trusted-party adversary with at most one
  malicious agent.

  \begin{lemma}
If the agents have input profile $(0,0,0,0)$, then
   there exists a strategy $\tau_3$ for 
agent  3  such that if all other agents 
 play $\vec{\sigma}$ and the scheduler
plays $\vec{\sigma}_e^N$, then with non-zero probability, 
agents  1 and 2 output 1.
\end{lemma}
\begin{proof}  
 Let $H$ and $H'$ be the two histories guaranteed to
exist by
Lemma~\ref{lemma:same-histories}.  The strategy $\tau_3$ for 
agent
 3 consists
of sending 
agent
 1 the messages that 
agent 
  3 sends to 
agent  
   1 in 
$H'$ while
sending 
agent
 2 the messages that 
agent 
  3 sends to 
agent  
   2 in 
$H$.
Suppose that
agent
 1 has the same random bits as in 
$H'$, 
 while
agent
 2 has the same random bits as in 
$H$. 
An easy induction now shows that, in the resulting history, 
agent
 1
will have history 
$H'_1$ 
 and 
agent 
  2 will have history 
 $H_2$
 after
each having been scheduled 
at most 
$N$ times, using the fact that, as shown in
Lemma~\ref{lemma:same-histories}, $H_{12} = H'_{12}$.
Thus, by Lemma~\ref{lemma:same-histories}, 
agents 1 and 2 output 1 in this case.
This contradicts the fact that $\vec{\sigma}$ 1-weakly securely
computes $f^4$, since 
\commentout{
reasoning analogous to that of
Lemma~\ref{prop:technical-1}
}
Lemma~\ref{prop:technical-1} 
 shows that, with input profile
$(0,0,0,0)$, all honest players 
output $0$. 
\end{proof}

It is
straightforward to extend this argument to all $n$ and $t$ such that
$3t \le n 
\le 4t$. Given $n$ and $t$ such that $3t \le n \le 4t$, we divide the
agents into four disjoint sets $S_1$, $S_2$, $S_3$, and $S_4$ such
that $0 < |S_i| \le t$ for all $i \in \{1,2,3\}$ and $0 \le |S_4| \le
t$. Consider a scheduler $\sigma_e^N$ that schedules 
agents
 in
$S_1, S_2$ and $S_3$ cyclically and, right before scheduling an
agent,
it delivers the messages that were sent by the other 
agents
 the last
time they were scheduled. After scheduling each of the 
agents
 in $S_1
\cup S_2 \cup S_3$ $N$ times, it schedules the 
agents
 in $S_4$ as
well. Suppose that $\vec{\sigma}$ is a strategy for $n$ 
agents
 that
$t$-securely computes $f^n$. 
\commentout{
Then we have a proposition that is analogous to Proposition~\ref{prop:technical-1}:

\begin{proposition}\label{prop:technical-2}
Denote by $(x^1_{S_1}, x^2_{S_2}, x^3_{S_3}, x^4_{S_4})$ the input profile in which all agents in $S_i$ have input $x^i$ for $i \in \{1,2,3,4\}$. In all histories of $\vec{\sigma}$ such that agents have input profile $(0_{S_1}, 1_{S_2}, 1_{S_3}, 1_{S_4})$, $(1_{S_1}, 0_{S_2}, 1_{S_3}, 1_{S_4})$, $(1_{S_1}, 1_{S_2}, 0_{S_3}, 1_{S_4})$ or $(1_{S_1}, 1_{S_2}, 1_{S_3}, 0_{S_4})$, all agents output $1$.
\end{proposition}

\begin{proof}
    The proof is similar to that of
Lemma~\ref{prop:technical-1}. Again, we show the validity of 
this proposition for input $(0_{S_1}, 1_{S_2}, 1_{S_3}, 1_{S_4})$
since, the other three cases are analogous. Given any scheduler
$\sigma_e$, consider an adversary $A = (T, \vec{\tau}_T, \sigma_e)$
such that $T \supseteq S_1$, $|T| = t$, and $\vec{\tau}_T =
\vec{\sigma}_T$. By definition of $t$-weak secure computation, the
output of honest agents should be one that is possible with a
trusted-party adversary of the form $A' = (T, c, h, O)$. However, no
matter what the output $C$ of $c$ is, since $|C| \ge n-t$, there
will be at least $n-2t$ honest players in $C$, all of them with input
$1$. Since $n \ge 3t$, this suffices to guarantee that all players not
in $T$ output 1. As in Lemma~\ref{prop:technical-1}, this also
holds if $T = \emptyset$ and all players are honest.
To see that
players in $T$ also output $1$ if all players are honest, consider an
adversary $A'' = (T', \vec{\tau}_{T'}, \sigma_e)$ such that $T' \cap T
= \emptyset$ and $|T'| = t$ (note that such a set $T'$ always exists
since $n \ge 3t$) and $\vec{\tau}_{T'} = \vec{\sigma}_{T'}$. Since $A$
and $A''$ are indistinguishable, honest agents outside of $T$ and $T'$
(note that $[n] \setminus (T \cup T') \not= \emptyset$) output $1$
with adversary $A_2$. This means that agents in $T$ do so as
well. Again, since malicious agents in $A_2$ are indistinguishable
from honest agents, agents in $T$ also output $1$ if all players
are honest. 
\end{proof}

We can now prove the following generalization of
Lemma~\ref{lemma:same-histories}:
}

\begin{lemma}\label{lemma:generalization}
There exist $N$ and two (finite) histories $H$ and $H'$ of
$\vec{\sigma}$ where the scheduler uses $\sigma_e^N$, $\vec{x}_H =
(1_{S_1}, 0_{S_2}, 1_{S_3}, 1_{S_4})$,
\commentout{
(which means that 
agents
 in
$S_1$ have input 1, 
agents
 in $S_2$ have input 0, etc.),}
$\vec{x}_{H'} = (\vec{0}_{S_1}, \vec{1}_{S_2}, \vec{1}_{S_3}, \vec{1}_{S_4})$, 
agents
 in $S_1
\cup S_2 \cup S_3$ output 1 in $H$, 
agents
 in $S_4$ are never
scheduled in either $H$ or $H'$, $H_{S_1,S_2} = H'_{S_1, S_2}$ (which
is the conversation between 
the agents
 in $S_1$ and 
 the agents
  in $S_2$) and
$H_e = H'_e$. 
\end{lemma}

\begin{proof} The proof is analogous to the proof of
Lemma~\ref{lemma:same-histories}; the subsets $S_1$, $S_2$, $S_3$, and
$S_4$ play the roles of 
agents
 1, 2, 3, and 4, respectively.
\end{proof}

We now have the tools we need to prove
Theorem~\ref{prop:weak-secure-computation}(b). 
Given $H$ and $H'$ from Lemma~\ref{lemma:generalization},
consider a strategy $\vec{\tau}_{S_3}$ for 
agents
in $S_3$ that
consists of sending 
agents
 in $S_1$ and $S_2$ exactly the same
messages they would send in $H'$ and $H$ respectively. Again, if
agents
 have input $\vec{0}$, with non-zero probability, 
agents 
  in
$S_2$ will eventually have history $H_{S_2}$, and thus will output $1$,
contradicting the assumption that $\vec{\sigma}$ 
$t$-weakly securely computes $f^n$.
This completes the proof of Theorem~\ref{prop:weak-secure-computation}(b). 

The proof of Theorem~\ref{prop:secure-computation}(b)
follows similar lines.  We start with 
an analogue of Lemma~\ref{prop:technical-1} which holds for a larger
range of values of $n$: 

\commentout{
In the case of standard secure computation, the proof of Theorem~\ref{prop:weak-secure-computation} is valid for a larger range of values of $n$. In fact we have the following:

\begin{theorem}\label{thm:secure-computation-lower-bound}
  If 
$t+2 \le n \le 4t$, 
   there exists a domain $D$ and a function $f: D^n \rightarrow D$ that cannot be $t$-securely computed. 
\end{theorem} 


Suppose that $n \ge t+2$ and that $\vec{\sigma}$ $t$-securely computes
$f^n$. If we split the agents into four groups $S_1, S_2, S_3, S_4$
such that $|S_i| = \lceil \frac{n-t}{3} \rceil$ for $i \in \{1,2,3\}$
(and such that $|S_4| \le t$) we have the following: 
}

\begin{lemma}\label{prop:technical-3}
Let $n \ge t+2$ and $\vec{\sigma}$ be a protocol that $t$-securely computes $f^n$. Then, in all histories of $\vec{\sigma}$ in which all agents are honest and at most $(n-t)/2$ agents have input 0, all agents output 1.
\end{lemma}

\begin{proof}
Given any scheduler $\sigma_e$, if all agents are honest, their output should be one that is possible with a trusted-party adversary of the form $A = (\emptyset, c,h,O)$. No matter what the output $C$ of $c$ is, at most $(n-t)/2$ agents in $C$ have input $0$. Since $|C| \ge n-t$, at least half of the agents in $C$ have input $1$ and thus all honest agents output 1.
\end{proof}

\commentout{
\begin{proposition}\label{prop:technical-3}
In all histories of $\vec{\sigma}$ such that agents have input profile $(0_{S_1}, 1_{S_2}, 1_{S_3}, 1_{S_4})$, $(1_{S_1}, 0_{S_2}, 1_{S_3}, 1_{S_4})$, $(1_{S_1}, 1_{S_2}, 0_{S_3}, 1_{S_4})$ or $(1_{S_1}, 1_{S_2}, 1_{S_3}, 0_{S_4})$, all agents output $1$.
\end{proposition}

\begin{proof}
For any scheduler $\sigma_e$, if all players are honest, their output should be one that is possible with a trusted-party adversary of the form $A = (\emptyset, c,h,O)$. No matter what the output $C$ of $c$ is, at most $\lceil \frac{n-t}{3} \rceil$ of agents in $C$ have input $0$. Since $n \ge t+2$ we have that $n-t \ge 2 \lceil \frac{n-t}{3} \rceil$, which means that $C$ has at least as many agents with input $1$ than agents with input $0$. This implies that all honest players output $1$.
\end{proof}
}

\commentout{
The remaining part of the proof 
of Theorem~\ref{thm:secure-computation-lower-bound}
is analogous to that of Theorem~\ref{prop:weak-secure-computation}.
}

We also need the following technical result:
\begin{lemma}\label{prop:inequalities}
If $t+2 \le n \le 4t$ then
\begin{itemize}
  \item[(a)] $n \ge 3\lceil \frac{n-t}{3}\rceil$;
\item[(b)] $\lceil \frac{n-t}{3}\rceil \le \frac{n-t}{2}$.
\end{itemize}
\end{lemma}

\begin{proof}
  If $t + 2 \le 4t$ then $t > 0$. To prove part (a), note that if $t = 1$,
  then $n$ can be only $3$ or $4$. In both cases, the inequality is
satisfied. If $t \ge 2$ then $\lceil \frac{n-t}{3}\rceil \le \lceil
\frac{n-2}{3}\rceil \le \frac{n}{3}$,
from which the desired result immediately follows.
To prove part (b), let $a$ and $b$ be the two positive
integers such that $n-t = 3a + b$ with $1 \le b \le 3$. Then $\lceil
\frac{n-t}{3}\rceil = a+1$ and $\frac{n-t}{2} = \frac{3a+b}{2} = a +  
\frac{a+b}{2}$. Since $n-t \ge 2$, then either $a > 0$ or $b >
1$.  Since $b \ge 1$, in both cases, $a+1 \le a + \frac{a+b}{2}$. 
\end{proof}

Given $n$ and $t$ such that $t+2 \le n \le 4t$, we divide the agents
into four disjoint sets $S_1, S_2, S_3, S_4$ such that $|S_i| = \lceil
\frac{n-t}{3}\rceil$
for $i \le 3$ and $|S_4| \le t$ (which is always possible, by
Lemma~\ref{prop:technical-3}(a)). 
If $n \ge t+2$, then by Lemma~\ref{prop:inequalities}(b),
$\lceil \frac{n-t}{3}\rceil \le  
\frac{n-t}{2}$,
and thus by Lemma~\ref{prop:technical-3}, in all
histories in which all agents are honest and have inputs $(0_{S_1},
1_{S_2}, 1_{S_3}, 1_{S_4})$, $(1_{S_1}, 0_{S_2}, 1_{S_3}, 1_{S_4})$ or
$(1_{S_1}, 1_{S_2}, 0_{S_3}, 1_{S_4})$, all the agents output
$1$. 
Reasoning analogous to that used in the proof of 
Theorem~\ref{prop:weak-secure-computation}(b) then shows that $f^n$
cannot be $t$-securely computed for $t+2 \le n \le 4t$. 

It remains to deal with the case where $n = t+1$. To show that
there exist functions that cannot be $t$-securely computed if $n =
t+1$, we reduce \emph{$t$-resilient weak consensus} to $t$-secure
computation. 
\begin{definition}
A protocol $\vec{\sigma}$ for $n$ agents is a $t$-resilient implementation of weak consensus if the following holds for all adversaries $A = (T, \vec{\tau}_T, \sigma_e)$ with $|T| \le t$ and all histories:
\begin{itemize}
\item[(a)] All agents not in $T$ output the same value.
\item[(b)] If all agents are honest and have the same input $x$, all agents output $x$. 
\end{itemize}
\end{definition}
Lamport~\citeyear{L83} 
showed that if $n \le 3t$ there is no $t$-resilient implementation of
weak consensus. Thus, if there exists a reduction from
$t$-resilient weak consensus to $t$-secure computation for $n > t$,
then there are functions $f:D^n \rightarrow D$ with $n = t+1$
that cannot be $t$-securely computed. 

The reduction proceeds as follows: Consider a function $g^n: \{0,1, \bot\}
\rightarrow \{0,1,\bot\}$ such that $g^n(\bot, \ldots, \bot) = \bot$,
and $g^n(x_1, \ldots, x_n) = x_i$ if $x_i \not =
\bot$ and $x_j = \bot$ for all $j < i$; that is, $g^n$ outputs
the first non-$\bot$ value if there is one, and otherwise outputs
$\bot$. Suppose, by way of contradiction, that $\vec{\sigma}$
$t$-securely computes $g^n$. 
Let $\vec{\tau}$ be identical to $\vec{\sigma}$, 
except hat, whenever agent $i$ would have output $(C,v)$ with
$\sigma_i$,  it outputs $v$ instead if $v \not = \bot$, and otherwise it outputs
$0$. By the properties of $t$-secure computation,
all honest agents output the same value when using $\vec{\tau}$. 
Moreover, if all honest agents have input $0$ or all of them
have input $1$, if $n > t$, then the output of the
secure computation has the form $(C,0)$ or $(C,1)$, respectively. Thus, if
there exists a protocol that $t$-securely computes $g^n$ for $n =
t+1$, then there exists also a $t$-resilient implementation of weak
consensus for $t+1$ agents, contradicting 
Lamport's
result. This proves Theorem~\ref{prop:secure-computation}.

\section{Proof of Theorem~\ref{thm:secure-computation-reduction}}\label{sec:appendix2}

We  prove Theorem~\ref{thm:secure-computation-reduction} only for the
case of $(k,t)$-robustness; the proof in the case of $(k,t)$-strong
robustness is identical. 

\commentout{
Intuitively, 
the
 construction of $\Gamma_d$ and $\vec{\sigma} +
\sigma_d$ proceeds  as follows. The set of actions of each player consists
of all possible outputs of a secure computation of $f$ 
(more
precisely, actions are of the form $(C, z)$, with $C \subseteq [n]$
and $z \in D$), and if
there is no subset $S$ of  
 honest players 
 do not securely compute $f$, which 
means that either honest players 
output different values or
some honest player outputs a value that is not a possible
output of a secure computation of $f$, 
then rational players get a higher payoff and/or the
honest players get a lower payoff. In $\vec{\sigma} + \sigma_d$, each
player sends its input to the mediator when it is scheduled for
the first time, the mediator waits until it receives the input $x_i$
from a set $C$ of players with $|C| \ge n-t-k$. It then computes $z :=
f_C(\vec{x})$, and
sends $(C,z)$ to all players. Players play $(C,z)$ 
when they receive the message. 

It would seem that any $(k,t)$-robust implementation of $\vec{\sigma}
+ \sigma_d$ also $(k+t)$-strictly securely computes $f$. However, this
naive construction does not satisfy the desired properties, since it
does not capture the idea that rational and malicious players should
not learn anything besides the output. In fact, the following protocol
$\vec{\sigma}_{ACT}$ is a $(k,t)$-robust implementation of
$\vec{\sigma} + \sigma_d$ whenever $n > 3(k+t)$: 
\begin{enumerate}
\item Each player broadcasts its input $x_i$ whenever it is first scheduled using Bracha's protocol~\cite{Br84}.
\item Players reach consensus on a subset $C$ of players with $|C| \ge
    n-t-k$ such that each player in $C$ broadcast its input
  successfully (for instance using BCG's Agreement on a Core Set
  protocol~\cite{BCG93}). 
  \item Each player outputs $(C, z)$, where $z = f_C(\vec{y})$, where
    $\vec{y}$ is the vector of values sent at step 1. 
\end{enumerate}
}

A naive construction of $\Gamma_d$ and $\vec{\sigma} + \sigma_d$
proceeds as follows. The set of actions of each player consists of all
possible outputs of a secure computation of $f$ in addition to their
type (more precisely, actions are of the form $(C,z,Q)$ with $C
\subseteq [n]$ and $z \in D$ and $Q \in \{H, R, M\}$, where $H$ stands
for honest, $R$ stands for rational, and $M$ stands for malicious).
If there is  
no subset $S$ of at least $n-k-t$ honest players such that players in
$S$ securely compute $f$, that is, every subset $S$ of
$n-k-t$ players either do not all output the same value or they all output a
value that is not a possible output of a secure computation of $f$,
then rational players get a higher payoff and/or the honest players
get a lower payoff. In $\vec{\sigma} + \sigma_d$, each player sends
its input to the mediator when it is scheduled for the first time. The
mediator waits until it receives the input $x_i$ from a set $C$ of
players with $|C| \ge n - k - t$, then computes $z := f_C(\vec{x})$
and sends $(C,z,H)$ to all players. Players play $(C,z,H)$ when they
receive the message. 

It would seem that any $(k,t)$-robust implementation of $\vec{\sigma}
+ \sigma_d$ also $(k+t)$-strictly securely computes $f$. In fact, any
$(k+t)$-secure computation of $f$ is also a $(k,t)$-robust
implementation of $\vec{\sigma} + \sigma_d$, but not all $(k,t)$-robust
implementations $(k+t)$-securely compute $f$. As we suggested in the
main text, consider a
protocol $\vec{\sigma}_{ACT}$ in which players perform BCG's
$(k+t)$-secure computation protocol and broadcast their
inputs immediately afterwards. It is easy to check that
$\vec{\sigma}_{ACT}$ is a $(k,t)$-robust implementation of
$\vec{\sigma} + \sigma_d$ whenever $n > 4(k+t)$, but that
$\vec{\sigma}_{ACT}$ does not $(k+t)$-securely compute $f$, since it
leaks the honest players' inputs to all other players. 

This shows that it is necessary to somehow encode all information that
malicious players can learn into the set of actions of $\Gamma_d$ in
such a way that they can increase their payoff if they manage to
learn anything about the other players' inputs besides what can be
learned from the output of the computation. The idea for doing this is
that, besides the output, the action of each player should include
a \emph{guess} as to what the input profile $\vec{x}$ is (they can
also guess $\bot$ if they have no guess). If a player $i$ guesses
correctly it receives an additional positive payoff $q_i$, while if it
guesses wrong, its payoff decreses by $1$. The value of $q_i$
should be chosen in such a way that (a) it is never worthwhile deviating
if $i$ is not able to learn anything besides the
output, and (b) it is always worthwhile deviating if $i$ is able to learn
something (otherwise, $\vec{\sigma}_{ACT}$ 
may not $(k+t)$-strictly securely compute $f$ even if it is
$(k,t)$-robust, as in the example above). Given the set $C$ of players
whose inputs are included in the computation, the output $z$ of $f$,
and the input profile $\vec{x}$, let $p_i$ be the probability that a
player $i$ guesses $\vec{x}$ conditional on its own input $x_i$,
$C$, and $z$. Conditions (a) and (b) imply that $p_iq_i +
(1-p_i)(-1) \le 0$ and $p_iq_i + (1-p_i)(-1) \ge 0$ respectively,
which means that $p_iq_i + p_i - 1 = 0$ and thus 
that $q_i = p_i^{-1} - 1$. 

This approach cannot be 
generalized easily to a situation where a coalition of
players may deviate. In this case, a player in the coalition will know
the values of all players in the coalition, not just its own input.  Moreover,
if a player in the coalition plays just like an honest
player, except that it tells the other coalition
members its input, then  this is completely indistinguishable (by
the honeset players) from the
scenario in which that player is honest and the other members of the
coalition were just lucky guessing its input. In addition, a player can lie
about its input if it is easier to guess the input profile with a
different input than its own. Since the payoffs of a Bayesian game
depend only on their actions and their real input profile, it is
always worthwhile for a player to lie about its input if this is the
case. For instance, suppose that $D = \mathbb{F}_2$ and that
$f(\vec{x}) = \prod_{i = 1}^n (1-x_i)$. If $i$ has input $1$ and plays
honestly, then it
learns absolutely nothing about the other players' inputs, since the output
will be $0$ no matter what. However, if $i$ pretends to have input
$0$, $i$ will learn more 
information: if the output is $1$, then all players have input $0$;
otherwise, at least one player has input $1$. In this case, it would 
always be worthwhile for player $i$ to act as if it has input $0$,
regardless of its actual 
input. 
This shows that to compute the probability that the adversary
guesses the inputs correctly, it is critical to know who is malicious
and what inputs the malicious players are pretending to use in the computation.

To deal with the fact that we may not be able to tell which players
are deviating, we require that exactly $k+t$ players must try
to guess a non-$\bot$ value in order to get an additional (positive or
negative) payoff. Moreover, their guesses must be identical. If
honest players always guess $\bot$, this suffices to identify the
coalition of $k+t$ deviating players given their action profile.
Note that this is why we require strict secure computation in
Theorem~\ref{thm:secure-computation-reduction}. If we required only
(standard) secure computation, smaller adversaries wouldn't be able to
get a better payoff, even if they managed to guess the inputs
of everyone else (thus it wouldn't satisfy condition (b)). 
\commentout{
(note
that this is why we require strict secure computation in
Theorem~\ref{thm:secure-computation-reduction}, rather than standard
secure computation).  
}
\commentout{
As we shall see, this will enable us to
determine the probability of them 
guessing the input profile.
}
To 
deal with players lying about their inputs, we require that the action
profile of the 
players encode the inputs used by the players for the computation
(even though these inputs may differ from their actual inputs). The
probability of guessing the input profile is based on the inputs used,
not players' actual inputs.  Note that these values must be
encoded into the action profile without any coalition of $k+t$ players
learning anything about them. This can be done as follows: each player
$i$, in addition to the set $C$, the output $z$ and their guess $b_i$, 
also outputs $n$ values $s_{i1}, \ldots, s_{in}$ such that the values
$s_{i,j}$ for a fixed $j$ encode the value used by $j$ for the
computation (using Shamir's scheme). 

There is one final issue. The definition of $(0, t)$-robustness
is equivalent to that of $t$-immunity, which means that no coalition
of $t$ players can decrease the payoff of other players by
deviating.  In this case, the effect of a coalition of
$t$ players being able to learn something about the inputs should
be to decrease the payoffs of the remaining players,
rather than increasing their own payoff.  To deal with this, we
require players to declare wither they are 
$G$ (good), $R$ (rational), or $M$ (malicious). If a
coalition of $(k+t)$ players tries to guess the inputs of everyone
else, if they all declare $R$, then they get an additional payoff
as described above. Otherwise, everyone gets the negative of that
value. 

We now formalize these ideas.
\commentout{
Given a function $f: D^n \rightarrow D$,
our goal is to design a game
that corresponds to a secure computation of $f$.  That means that if 
the honest players do not securely compute $f$, which 
means that either (a) honest players 
output different values or
some honest player outputs a value that is not a possible
output of a secure computation of $f$, or (b) rational players are
able to deduce information about the honest players' inputs,
then there is a deviation for the rational and malicious players that
results in the rational players getting a higher payoff and/or the
honest players getting a lower payoff.
More formally, given
}
Given
 $f$ 
and integers $k$ and $t$ such that $n > 2k + 2t$, consider the game
$\Gamma^{f,k,t}$, defined as follows. The input profile of the
players is chosen uniformly at random from $D^n$. The set of
actions of each player in $\Gamma^{f,k,t}$ is $\{G, R, M\} \times 2^{[n]} \times D \times
D^n \times (D^n \cup \bot)$, so an action of player $i$ has 
the form $a_i = (Q_i, C_i, z_i, \vec{s}_i, b_i)$, where 
$Q_i \in \{G, R, M\}$,
$C_i \subseteq [n]$,
 $z_i \in D$, 
$x_i \in D$, and $b_i \in D^n \cup \bot$.
Intuitively, 
$Q_i$ denotes if $i$ is good ($G$), rational ($R$), or malicious ($M$),
$(C_i,z_i)$ is $i$'s output in the secure computation of
$f$; $s_{ij}$ is $i$'s share of $j$'s input (this will be made clearer
below), and $b_i$ is $i$'s guess 
of
the (supposedly secret) input, where $b_i = \bot$ if $i$ has no guess.

We next define the utility function.  We take $u_i = u_i^1 +
u_i^2$, where, 
intuitively,
 $u_i^1$ is the utility that $i$ gets if 
honest players either output different values or
some honest player outputs a value that is not  a possible
output of a secure computation of $f$ and $u_i^2$ is the utility that
$i$ gets from guessing the correct input of the other players.
To define $u_i^1$, we first define what it means
for 
 an action profile
$\vec{a}$ to be \emph{secure for an input profile $\vec{x}$}.  This is
the case if 
there exist subsets $C, T \subseteq [n]$ with $|C| \ge n-t-k$ and
$|T| = k+t$, a vector $\vec{v} \in D^{k+t}$,  
and $n$ polynomials $p_1, \ldots, p_n$ of degree $k+t$ 
(where, intuitively, $p_j$ encodes $j$'s input, so $p_j(i)$ is $i$'s
share of $j$'s input)
such that, for each player $j \notin T$, the action $a_j = (Q_j, C_j, z_j,
\vec{s}_j, b_j)$ of player $j$ satisfies (1) $Q_j = G$, (2) $C_j = C$, (3)
$b_j = \bot$, 
(4) $z_j = f(\vec{y})$, where  $\vec{y} =
(\vec{x}/_{(T,\vec{v})})/_{(\overline{C},\vec{0})}$ (5) $p_{j'}(j) =
s_{jj'}$ for all $j' \in [n]$, and (6) $p_j(0) = y_j$.  We say that
\emph{$\vec{a}$ is $(T,\vec{x})$-secure} if these properties hold for
the set $T$. 
Intuitively, $\vec{a}$ is $(T,\vec{x})$-secure if it could be the
output of a $(k+t)$-secure computation of $f$ with input $\vec{x}$,
and the inputs used for the computation---which may differ from the
the actual input profile due to deviating players lying about their
inputs---were shared
correctly between 
the players. 
If $\vec{a}$ is secure for $\vec{x}$, then 
$u_i^1(\vec{a}, \vec{x}) = 0$ for all $i \in [n]$; if $\vec{a}$ is not
secure for $\vec{x}$ and at least one player $i$ played
$R$ (i.e., played an action with $Q_i = R$), then $u_j^1(\vec{a},
\vec{x}) = 1$ for all players $j$; 
otherwise, $u_j^1(\vec{a}, \vec{x}) = -1$ for all players $j$.

If $\vec{a}$ is
not 
 secure for $\vec{x}$, then $u^2_i(\vec{a}, \vec{x}) = 
0$.
If $\vec{a}$ is secure for $\vec{x}$,
let $K$ be the subset of players that did not play $G$.
Note that if $\vec{a}$ is secure for $\vec{x}$, then $|K| \le
k+t$.  If $|K| < k+t$ or not all players in $K$ guess the same value $b$
(i.e. not all players in $K$ have the same value $b$ as the last
component of their actions), then 
$u^2_i(\vec{a}, \vec{x}) =  
0$ for all $i$. 
\commentout{
Otherwise, let $p$ be the probability that a 
vector $\vec{w}$ sampled uniformly from $D^n$ is equal to $\vec{x}$,
conditional on $\vec{w}_K = \vec{y}_K$ and 
$f_C(\vec{w}) = z$ (note that $\vec{y}$ and $C$ are uniquely determined by $\vec{a}$ if $\vec{a}$ is secure for $\vec{x}$), and let $b$ be the common guess of players in $K$.
}
Otherwise, let $b$ be the common guess of players in $K$ and let $p$
be the probability that a  
vector $\vec{w}$ sampled uniformly from $D^n$ is equal to $\vec{x}$,
conditional on $\vec{w}_K = \vec{y}_K$ and 
$f_C(\vec{w}) = z$. Note that if $\vec{a}$ is $(T,\vec{x})$-secure for
some $T$, then $C$ is uniquely determined by $\vec{a}$, and
if, in addition, $n > 2(t+k)$, then $\vec{y}$ is also uniquely determined by
the shares $\vec{s}_i$ of players $i \not \in T$. 
 If $b = \vec{\bot}$, then $u^2_i(\vec{a}, \vec{x}) = 
0$ for all $i$.
If at least one player $i \in K$ played $R$ in its action, then, if $b
= \vec{x}$, $u^2_i(\vec{a}, \vec{x}) = p^{-1} - 1$ for all
$i \in K$; otherwise, $u^2_i(\vec{a}, \vec{x}) = -1$ for all $i \in K$.
On the other hand, if no player $i \in K$ played $R$ in its action,
then, if $b = \vec{x}$,  $u^2_i(\vec{a}, \vec{x}) = 1 - p^{-1}$ for
all $i \not \in K$; otherwise, $u^2_i(\vec{a}, \vec{x}) = 1$ for
all $i \not \in K$. Note 
that since $p(p^{-1} - 1) - (1 - p) = 0$, the payoffs $u_i^2$ are
designed in such a way that the adversary can, in expectation, either
increase its payoff (if there are any rational players) or decrease
the payoff of everyone else (if there are no rational players) if it
can guess the inputs of honest players with higher probability than
$p$ (which is the probability of guessing the honest players' inputs
if the adversary knew nothing but the output of the function and its
own strategy and inputs).

Consider the following strategy $\vec{\sigma} + \sigma_d$ for
$\Gamma_d^{f,k,t}$. According to $\sigma_i$, player $i$
sends its input to the mediator at the beginning of the game. If $i$
receives a message $msg$ from the mediator, it plays $msg$ in the
underlying game. According to  $\sigma_d$,
the mediator waits until there exists a set 
$C \subseteq [n]$ with $|C| \ge n-t-k$ such that it has received exactly
one message from each player $i \in C$ and each of these messages
consists of a value $y_i \in D$. The mediator computes
$n$ polynomials $p_1, \ldots, p_n \in D[X]$ of degree $k+t$ whose
non-constant coefficients are chosen uniformly at random and such
that $p_i(0) = y_i$ if $i \in C$ and $p_i(0) = 0$ otherwise;
it then 
computes $z := f(p_1(0), \ldots, p_n(0))$ and sends $(G,C,
z, p_1(i), \ldots, p_n(i), \bot)$ to each player $i$. 

\begin{proposition}\label{prop:equilibrium}
  $\vec{\sigma} + \sigma_d$ is 
  $(k,t)$-robust and the equilibrium payoff is $0$.
\end{proposition}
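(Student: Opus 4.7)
The plan is to verify the two claims in turn: first, that all players following $\vec{\sigma} + \sigma_d$ receive payoff $0$; and second, that no deviation by a coalition of size at most $k + t$ can either lower an honest player's expected payoff or raise a rational deviator's expected payoff above $0$. The equilibrium payoff calculation is immediate: if every player plays its prescribed strategy, then the mediator sends the same pair $(C, z)$ and consistent polynomial shares to every player, so each $a_i = (G, C, z, \vec{s}_i, \bot)$. The action profile satisfies conditions (1)--(6) of the security definition (take $T$ to be any $k + t$ players and $\vec{v} = \vec{x}_T$), so $u_i^1 = 0$; since no player plays $R$ and every $b_j = \bot$, also $u_i^2 = 0$.

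For $(k, t)$-robustness, the key preliminary step is to show that whenever at most $k + t$ players deviate, the resulting profile is still secure for $\vec{x}$, so $u_i^1 \equiv 0$. This uses $n > 2(k + t)$: the $n - (k + t)$ honest shares for each $j'$ uniquely determine a polynomial of degree $k + t$, which must coincide with the mediator's polynomial $p^{\mathrm{med}}_{j'}$; thus we may take $T'$ in the security definition to be any size-$(k + t)$ superset of the deviators, let $p_{j'} := p^{\mathrm{med}}_{j'}$, and set $\vec{v}_j := p_j(0)$ for $j \in T' \cap C$. The remaining work concerns $u^2$, which is nonzero only when the set $K_a$ of non-$G$ players has size exactly $k + t$ and agrees on a common non-$\bot$ guess $b$. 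Here I will invoke Shamir's secrecy: because the mediator chose the non-constant coefficients of each polynomial uniformly at random and the coalition sees only $k + t$ evaluations of each, the coalition's view is independent of $x_j$ for honest $j$ given $(C, z)$ and its own declared inputs. Consequently the common guess $b$ matches $\vec{x}$ with probability at most the $p$ appearing in the utility.

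From this I can assemble both properties. For $t$-immunity, the honest player's expected $u^2$ is either $0$ (because $|K_a| \le t < k+t$ whenever $k \ge 1$, or because no $R$ appears in $K_a$) or $1 - \Pr[b = \vec{x}] \cdot p^{-1} \ge 0$ by the Shamir bound. For $(k, t)$-resilience, when only $T$ deviates the preceding arguments give $u_i = 0$ for every $i$, so $u_i(\vec{\sigma}_{-T}, \vec{\tau}_T) = 0$, while under a joint deviation by $K \cup T$ a rational player's expected $u^2$ is at most $p \cdot p^{-1} - 1 = 0$ when $R$ is played and exactly $0$ otherwise. The bound is uniform over $i \in K$, which yields the strong version as well.

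The main obstacle is that a rational coalition may lie about its inputs to the mediator, so that the declared $\vec{y}_{K_a}$ differs from the actual $\vec{x}_{K_a}$ and the conditional probability $p$ in the utility is zero. Handling this requires formulating the Shamir secrecy argument with respect to the declared inputs rather than the true ones: in the degenerate case, no guess $b$ consistent with the coalition's view lies in the conditioning event at all, so $\Pr[b = \vec{x}] = 0 = p$, and the corresponding contribution to $u^2$ is $-1$ whenever $R$ is played, which still preserves the required inequality.
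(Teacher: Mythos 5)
Your proof is correct and follows essentially the same route as the paper's own (much terser) argument: $u^1$ vanishes identically because the honest players' mediator-supplied actions make every reachable profile $(T,\vec{x})$-secure, and $u^2$ cannot be exploited because the degree-$(k+t)$ Shamir sharing leaves the coalition's view independent of the honest inputs given $C$, $z$, and the declared inputs, so the calibrated guessing payoff has nonpositive expectation. Your final paragraph is in fact more careful than the paper, which never discusses the case $\vec{y}_K \neq \vec{x}_K$; just note that your claim $\Pr[b=\vec{x}]=0$ there is not justified (the coalition knows its true $\vec{x}_K$ and can still guess $\vec{x}$ with positive probability, making $p^{-1}$ ill-defined when $p=0$) --- but this is a defect in the paper's definition of $p$ rather than in your argument.
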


\begin{proof}
Let $u_i(\vec{\sigma}, A)$ be the expected payoff of player $i$ when
playing $\vec{\sigma}$ with adversary $A$. It follows by construction
that $u_i^1(\vec{\sigma} + \sigma_d, A) = 0$ for all adversaries $A =  
 (T, \vec{\tau}_T)$ of size at most 
 $k+t$
  since, no matter what $T$ is,
the output profile $\vec{a}$ is $(T,\vec{x})$-secure for all input profiles $\vec{x}$.

 Thus, $\vec{\sigma} + \sigma_d$ is not 
 $(k,t)$-robust if and
 only if  there exists an adversary $A = (T, \vec{\tau}_T)$ with $|T|
  \le k+t$ and an 
 input profile $\vec{x}_T$ such that, in expectation, (a) $u_i^2(\vec{\sigma} + \sigma_d,
 A, \vec{x}_T) > 0$ for 
 some 
 $i \in T$, or (b) $u_i^2(\vec{\sigma} + \sigma_d,
 A, \vec{x}_T) < 0$ for all $i \not \in K$.  The definition of $u_i^2$
guarantees that, in both cases, the adversary must consist
 of exactly $k+t$ players and these players must all play a non-$G$
 action. Moreover, these players must guess the input of honest
 players with a probability higher than they could guess it by just
 knowing the output of 
 the function, their strategy, and their inputs. However, the
 construction of $\vec{\sigma} + \sigma_d$ guarantees that don't have any extra
 information
 (note that $C$ depends only on the adversary, and that the
  adversary does not get any information about the input of the honest
 players besides the value of $z$, since the polynomials $p_i$ are all
  of degree $k+t$).  
\end{proof}

We next show that if there exists a $(k,t)$-robust strategy that implements $\vec{\sigma} + \sigma_d$, then that strategy also $(k+t)$-securely computes $f$.
We first need the following lemma.

\begin{lemma}\label{lemma:equiv1}
If $\vec{\sigma}_{ACT}$ is a $(k,t)$-robust strategy that implements
$\vec{\sigma} + \sigma_d$, then for all adversaries $A = (T,
\vec{\tau}_T,\sigma_e)$ with $|T| = k+t$, all inputs $\vec{x}$, and all
histories 
$H$
 of $\vec{\sigma}_{ACT}$ with adversary $A$ and
input $\vec{x}$, the action profile $\vec{a}$ played in 
$H$
 is
$(T, \vec{x})$-secure. 
\end{lemma}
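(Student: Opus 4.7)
The plan is a proof by contradiction. Suppose there exist an adversary $A=(T,\vec{\tau}_T,\sigma_e)$ with $|T|=k+t$, an input $\vec{x}$, and a history of $\vec{\sigma}_{ACT}$ under $A$ arising with probability $p>0$ whose action profile $\vec{a}$ is not $(T,\vec{x})$-secure. The first move is to normalize the adversary at the action step: define $A^R$ to agree with $\vec{\tau}_T$ during communication but to force every player in $T$ to output an action of the form $(R,\cdot,\cdot,\cdot,\bot)$ at the action step. Since $(T,\vec{x})$-security depends only on the actions of players outside $T$---which in turn depend only on the communication phase---this substitution preserves the probability $p$. Moreover, since every $T$-player plays a non-$G$ action under $A^R$, any witness set $T^{*}$ of size $k+t$ showing $\vec{a}$ is ``secure for $\vec{x}$'' (the existential notion used by the utility) must contain $T$ and, by equality of cardinality, must equal $T$. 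Thus under $A^R$ the existential notion ``secure for $\vec{x}$'' coincides with ``$(T,\vec{x})$-secure.'' In secure histories $u^1=u^2=0$ (the common guess is $\bot$); in insecure histories $u^1=1$ and $u^2=0$. Every player in $T$ therefore has expected payoff exactly $p$ under $A^R$, strictly above the equilibrium payoff of $0$ from Proposition~\ref{prop:equilibrium}.

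To contradict $(k,t)$-resilience, fix a partition $T=K\cup T_1$ with $|K|=k$ and $|T_1|=t$, and compare $A^R$ against the alternative $A^R_{T_1}$ in which only $T_1$ deviates (using the restriction of $A^R$'s strategy to $T_1$) while $K$ reverts to $\vec{\sigma}_{ACT}$. Under $A^R_{T_1}$ at most $t<k+t$ players play non-$G$, forcing $u^2\equiv 0$, so each player's payoff reduces to $q:=\Pr[\vec{a}\text{ not secure for }\vec{x}\text{ under }A^R_{T_1}]$; resilience applied to the pair $(K,T_1)$ demands $p\le q$ for some (and, in the strong case, for every) $i\in K$, so it suffices to show $q=0$. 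For this I would introduce the twin adversary $A^G_{T_1}$, identical to $A^R_{T_1}$ during the communication phase but with $T_1$ playing a ``marker'' $G$-tuple at the action step, engineered so that no size-$(k+t)$ witness can absorb a $T_1$-player. Since $A^G_{T_1}$ is a $t$-size adversary yielding honest payoff $-\Pr[\vec{a}\text{ not secure}]$ in insecure histories, $t$-immunity forces that probability to vanish; since $A^R_{T_1}$ and $A^G_{T_1}$ share the same communication phase (and hence the same insecurity event, which depends only on non-deviator actions), we conclude $q=0$, giving $p\le 0$ and the desired contradiction.

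The main technical obstacle is the marker construction for $A^G_{T_1}$: the $G$-tuple must structurally violate the consistency conditions imposed on non-$T^{*}$ players---the common $C$, the common $z$, and the degree-$(k+t)$ polynomial interpolation of the shares $\vec{s}$---so that excluding any $T_1$-player from $T^{*}$ necessarily breaks the witness. Moreover, one must rule out the scenario in which $\vec{\sigma}_{ACT}$ itself prescribes non-$G$ action-step plays for honest players in response to the deviation, which would disrupt both the $u^2\equiv 0$ calculation under $A^R_{T_1}$ and the ``no one plays $R$'' premise used to read $u^1=-1$ under $A^G_{T_1}$. Resolving this requires exploiting the specific encoding of the action space of $\Gamma_d^{f,k,t}$, in which the legal outputs of the honest protocol are constrained by the polynomial and consistency conditions so tightly that any deviation from the $G$-form can itself be incorporated into the marker argument.
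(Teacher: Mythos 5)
Your opening move is exactly the paper's: replace the action-step behaviour of the $k+t$ deviators by a fixed non-$G$ action with guess $\bot$ (an $R$-action when $k>0$; the paper uses an $M$-action when $k=0$), observe that this leaves the non-deviators' actions---and hence the (in)security of the profile---unchanged, and that once every player in $T$ plays non-$G$ the only candidate witness set is $T$ itself, so the profile is not secure for $\vec{x}$ under any witness and $u^1=1$ for the deviators. The paper stops there, asserting that the coalition thereby gets payoff $1$ ``rather than $0$'' and concluding that $(k,t)$-robustness fails. You correctly notice that the formal definition of $(k,t)$-resilience compares $u_i(\vec{\sigma}_{-(K\cup T_1)},\vec{\tau}_{K\cup T_1})$ against $u_i(\vec{\sigma}_{-T_1},\vec{\tau}_{T_1})$, so the relevant baseline is the payoff when only the $t$-subcoalition deviates, and this is not obviously $0$; your second phase (comparing $A^R_{T_1}$ to a $G$-marker twin $A^G_{T_1}$ and invoking $t$-immunity to force $q=0$) is an attempt to close a step the paper simply elides. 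That phase is essentially workable, and the ``main technical obstacle'' you flag is less serious than you fear: condition (3) of $(T^*,\vec{x})$-security requires $b_j=\bot$ for every $j\notin T^*$, so a $G$-action with a non-$\bot$ guess is automatically forced into any witness set, which is exactly the marker you need; and since security depends only on the non-deviators' actions, the insecurity events under $A^R_{T_1}$ and $A^G_{T_1}$ coincide.

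The one genuine residual issue is the possibility that $\vec{\sigma}_{ACT}$ prescribes $R$-actions for honest players in off-equilibrium histories, which would make $u^1=+1$ rather than $-1$ in insecure histories of $A^G_{T_1}$ and defeat the immunity step. This worry is real, but it is equally a gap in the paper's own one-sentence conclusion (and in the $k=0$ branch of its proof, which also needs honest players not to play $R$); it is not a defect you introduced. In short: same core construction as the paper, plus a more careful---and mostly correct---treatment of the resilience baseline that the paper leaves implicit.
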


\begin{proof}
Suppose that $k > 0$. If there exists an input $\vec{x}$ and an
adversary $A = (T, \vec{\tau}_T, \sigma_e)$
with $|T| = k+t$ 
 such that, for some history
$H$, 
 the action profile $\vec{a}$ played in 
$H$
  is not $(T, \vec{x})$-secure,
consider the adversary $A' = (T, \vec{\tau}'_T, \sigma_e)$ where
$\vec{\tau}'_T$ is identical to $\vec{\tau}_T$, except that if a
player $i \in T$ plays an action $a$
with $\tau_i$, then $i$ instead plays
  $(R, \emptyset, 0, 0, \bot)$ with $\tau'_i$.
Thus,
 if 
an
  action profile $\vec{a}'$
 played in some history 
$H'$ 
 when $\vec{\sigma}_{ACT}$ is run with
adversary $A'$  
is 
$(T', \vec{x})$-secure, then $T\subseteq T'$
(note that for an action profile $\vec{a}$ to be $(T',
\vec{x})$-secure, we require that all players not in $T'$ play $G$ in
the first component, and none of the players in $T$ plays $G$) 
 and, since $|T| = k+t$, $T = T'$.
 Since the histories generated by playing with adversaries $A$ and
 $A'$ are indistinguishable by honest players,   
 if there exists a history  
$H$
 with adversary $A$
 and input $\vec{x}$ such that the action $\vec{a}$ played in 
$H$ 
  is
not
  $(T, \vec{x})$-secure, 
then the resulting action profile $\vec{a}'$ of playing
$\vec{\sigma}_{ACT}$ with adversary $A'$ and input $\vec{x}$ in which
all players use the same randomization as in 
$H$ 
 is not 
$(T,\vec{x})$-secure, 
and all players in $T$
would get a payoff of 1 rather than 0. It follows that
$\vec{\sigma}_{ACT}$ is not $(k,t)$-robust. If $k = 0$, the argument is
analogous, except that players in $T$ play $(M, \emptyset, 0, 0, \bot)$
rather than $(R, \emptyset, 0, 0, \bot)$
\end{proof}

\commentout{ 
The desired result now follows immediately:

\begin{proposition}\label{prop:equivalence}
If $g$ is the function that, given $(Q, C, v, \vec{s}, b)$ with $Q \in \{G,R,M\}$, $C
\subseteq[n]$, $v \in D$,  $\vec{s} \in D^n$, and $b \in D^n \cup \bot$,
returns the action of outputting $(C,v)$, and 
$\vec{\sigma}_{ACT}$ is a $(k,t)$-robust strategy
that implements 
$\vec{\sigma} + \sigma_d$, then $g(\vec{\sigma}_{ACT})$
$(k+t)$-strictly securely 
computes $f$. 
\end{proposition}
}

To complete the proof of
Theorem~\ref{thm:secure-computation-reduction}, we must 
show that if $\vec{\sigma}_{ACT}$ is a $(k,t)$-robust implementation
of $\vec{\sigma} + \sigma_d$, the output of an adversary $A = (T,
\vec{\tau}_T, \sigma_e)$ with $|T| = k+t$ is just a (randomized)
function of its input $\vec{x}_T$ and the output $v$ of the function.  
To do this, we need the following lemma:
\begin{lemma}\label{lemma:probability}
Consider two random variables $X$ and $Y$ that take values on  countable spaces $S_1$ and $S_2$ respectively. Then, $\Pr[X = x \mid Y = y] = \Pr[X = x \mid Y = y']$ for all $x \in S_1$ and $y,y' \in S_2$ if and only if $X$ and $Y$ are independent.
\end{lemma}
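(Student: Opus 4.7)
The plan is to prove both directions by routine manipulation of the definition of conditional probability, with the only real care needed being the handling of values $y$ at which $\Pr[Y=y]=0$ (for which $\Pr[X=x\mid Y=y]$ is undefined and the statement must be read as applying to $y,y'$ with positive mass).

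First I would dispose of the easy direction. If $X$ and $Y$ are independent, then by definition $\Pr[X=x,Y=y]=\Pr[X=x]\Pr[Y=y]$ for all $x\in S_1$ and $y\in S_2$, so for every $y$ with $\Pr[Y=y]>0$ we have $\Pr[X=x\mid Y=y]=\Pr[X=x]$. The right-hand side does not depend on $y$, so the conditional probability is the same for all such $y,y'$.

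For the other direction, I would let $p(x)$ denote the common value of $\Pr[X=x\mid Y=y]$ (over all $y\in S_2$ with $\Pr[Y=y]>0$) and apply the law of total probability:
\[
\Pr[X=x]=\sum_{y:\Pr[Y=y]>0}\Pr[X=x\mid Y=y]\Pr[Y=y]=p(x)\sum_{y:\Pr[Y=y]>0}\Pr[Y=y]=p(x).
\]
Hence $\Pr[X=x\mid Y=y]=\Pr[X=x]$ for every $y$ with positive mass, which upon multiplying by $\Pr[Y=y]$ gives $\Pr[X=x,Y=y]=\Pr[X=x]\Pr[Y=y]$; this identity holds trivially when $\Pr[Y=y]=0$ as both sides vanish. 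Independence of $X$ and $Y$ follows.

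The only subtlety—and hence the only potential obstacle—is the bookkeeping around values $y$ with $\Pr[Y=y]=0$; these must simply be excluded from the quantification, which is the natural reading of the statement. Once that convention is fixed, both implications reduce to one-line applications of the definitions, and the countability of $S_1$ and $S_2$ ensures that the sum in the law of total probability is well defined.
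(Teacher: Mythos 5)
Your proposal is correct and follows essentially the same route as the paper: identify the common value of $\Pr[X=x\mid Y=y]$ with $\Pr[X=x]$ by summing $\Pr[X=x,Y=y]=p(x)\Pr[Y=y]$ over $y$ (the paper's $\lambda_x$ is your $p(x)$), then multiply back to get the product form. Your explicit handling of values $y$ with $\Pr[Y=y]=0$ is a minor tidying of a point the paper leaves implicit, not a different argument.
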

\begin{proof}
If $\Pr[X = x \mid Y = y]$ does not depend on $y$, there exists a
constant $\lambda_x$ such that $\Pr[X = x \mid Y = y] = \lambda_x$ for
all $y \in S$. Then, since $\Pr[X = x \mid Y = y] = \frac{\Pr[X = x, Y
    = y]}{\Pr[Y = y]}$, it follows that $\Pr[X = x, Y = y] = \lambda_x
\Pr[Y = y]$. Therefore, $\sum_{y \in S_2} \Pr[X = x, Y = y] = \sum_{y
  \in S_2} \Pr[Y = y]$, which gives that $\lambda_x = \Pr[X = x]$, as
  desired. The converse is straightforward. 
\end{proof}

We can now complete the proof of Theorem~\ref{thm:secure-computation-reduction}.
Suppose that $k > 0$. Recall that if all players $i \in T$ set $b_i$ to some input
$\vec{x}$, they get a payoff of $p_{\vec{x}}^{-1}-1$, where $p_x\in
    [0,1]$  if the input
profile is indeed $\vec{x}$, and otherwise they get $-1$.
Given a history
$H_T$
 in which the adversary has
input $\vec{x}_T$ and honest players output $v$, let
$p_v^{H_T}(\vec{x})$
 be the probability that the input profile
is $\vec{x}$ conditional on $v$ and 
$H_T$. If
$p_v^{H_T}(\vec{x}) > p_{\vec{x}}$, then
$p_v^{H_T}(\vec{x})(p_{\vec{x}}^{-1} - 1) + (-1)(1 -
p_v^{H_T}(\vec{x})) > 0$, which means that taking $b_i = \vec{x}$
is strictly better than taking $b_i = \bot$ for each of the players in $T$,
contradicting the assumption that $\vec{\sigma}_{ACT}$ is 
$(k,t)$-robust. 
Thus, $p_v^{H_T}(\vec{x}) \le p_{\vec{x}}$ for all $\vec{x}$.
Since both $\sum_{\vec{x}} 
p_v^{H_T}(\vec{x})$ and $\sum_{\vec{x}} p_{\vec{x}}$ are
1, it must be the case  that $p_v^{H_T}(\vec{x}) = 
p_{\vec{x}}$ for all $\vec{x}$. 
This shows that for every history $\vec{h}_T$ of the adversary, the
distribution of possible inputs of honest players conditional on
$\vec{h}_T$ depends only on their inputs and what honest players
output. By Lemma~\ref{lemma:probability}, this implies that
the input of honest players and the history of the adversary are
independent (given the input $\vec{x}_T$ of the adversary and the
output $v$ of honest players), and thus, again by
Lemma~\ref{lemma:probability}, it follows that the distribution of
possible histories of the adversary depends only on $\vec{x}_T$
and $v$. This shows that every possible output function of the
adversary can be simplified to a function that has as inputs only
$\vec{x}_T$ and $v$, as desired. The argument for $k = 0$ is analogous,
\commentout{
considering that in these case it is the input of honest players the
one that varies depending if the adversary guesses the input or not. 
}
except that in this case, if the distribution of possible histories of
the adversary is not independent of the inputs of the honest
players, the adversary decreases the payoffs of the honest players,
rather than increasing the payoffs of the deviating players.
This completes the proof of Theorem~\ref{thm:secure-computation-reduction}.

\commentout{
As an immediate corollary, we get the desired lower bound for
implementing mediators.

\begin{corollary}\label{thm:main}
If 
$3k + 3t \le n \le 4k + 4t$ 
 there exists a 
$(k,t)$-robust (resp., strongly robust)
 protocol $\vec{\sigma} + \sigma_d$ for $n$ players and a
mediator such that there is no
$(k,t)$-robust (resp., strongly robust)
protocol $\vec{\sigma}_{ACT}$ that implements $\vec{\sigma} +
\sigma_d$. 
\end{corollary}
}

Note that a $(k,t)$-robust implementation of $\vec{\sigma} +
\sigma_{ACT}$ may not necessarily (non-strictly) $(k+t)$-securely
compute $f$, since if the adversary consists of
fewer than $k+t$ malicious players, the malicious players might be
able to deduce 
information about the honest players' inputs without being able to
take advantage of it (recall that a subset $K$ consisting of $k+t$
players must all guess the same value for $u_i^2(\vec{a}, \vec{x})$ to be
non-zero). However,  
a
 small variation in the construction
of $u_i^2$ in $\Gamma_d^{f,k,t}$ allows us to construct a game 
 $\Gamma_d^{f,k}$ such that any 
strongly  
 $(k,0)$-robust implementation of the
strategy used in Proposition~\ref{prop:equilibrium} also 
$k$-securely computes $f$, 
so secure
computation can be reduced to implementing strategies for certain
mediator games. The idea is that instead of
requiring the subset $K$ of players who do not play $G$ to have size
exactly $k$, we only require it to have size at most
$k$.
This modification of $u_i^2$ leads to some of the
problems discussed at the beginning of this section, namely, that if
some rational players act like honest players, except that
they share their inputs with other rational players, the latter players
might be able 
to guess the input profile and get a strictly positive expected payoff.
This scenario is indistinguishable
from one in which the players who shared their input are actually
honest and rational players are just lucky. 
To deal with this issue, 
we
further modify  the payoffs in $\Gamma^{f,k}$
so that
 if
  the players in $K$ guess the inputs correctly, 
    then everyone else gets a huge negative payoff (rather than 0, as in
the original construction). We can show that if this payoff is
sufficiently small (e.g., $-n$ times the winnings), then if there
exists a strategy in which rational players get a positive payoff from
$u^2$, then there exists a strategy in which rational players get a
positive payoff from $u^2$ and they all guess the same value in every
possible history (if the negative payoffs are small enough, 
rational players not guessing any value gives a negative total payoff
for rational players, 
 regardless if some of them guess the correct value). 

 Note that this modification works only for strong $(k,t)$-robustness, since
 if we require only $(k,t)$-robustness, a rational player
may decrease its own payoff if that helps other rational players, even
if the total gain from doing so is negative. 
This is enough to show that
the strategy used in Proposition~\ref{prop:equilibrium} is strongly $(k,0)$-robust with these payoffs.

\section{Proof of Theorem~\ref{thm:reduction-to-weak2}}\label{sec:appendix3}

Consider the game $\Gamma^{k,t}$ in which the set of actions of each player is $\{G,R\} \times \{0,1\}$. Given an action profile $\vec{a}$, in which each player $i$ plays $a_i = (Q_i, y_i)$ with $Q_i \in \{G,R\}$ and $y_i \in \{0,1\}$, let $T$ be the subset of players $i$ such that $Q_i = R$. If $|T| > k+t$, if $k = 0$ all players get a payoff of -1, otherwise all players get a payoff of 1. If $|T| = t+k$ and there exist two players $i,j \not \in T$ such that $y_i \not = y_j$, if $k = 0$ all players get a payoff of -1, otherwise all players get a payoff of 1. In all remaining cases, all players get a payoff of 0. Let $g$ be the function such that $g(Q, y) = y$.

Consider the following protocol $\vec{\sigma} + \sigma_d$ for $n$
players and a mediator. With $\sigma_i$, each player $i$ sends the
mediator its input $x_i$ the first time it is scheduled. The mediator
waits until receiving a message containing either $0$ or $1$, and
sends that value $y$ to all players. The players play $(G, y)$
whenever they receive $y$ from the mediator. Clearly, this strategy is
$(k,t)$-robust (resp., strongly $(k,t)$-robust), since the only way
that players get a payoff other than 0 with an adversary of size 
at most $k+t$ is if two honest players output different values,
but they all receive the same value from the mediator. Suppose a
strategy $\vec{\sigma}_{ACT}$ is a $(k,t)$-robust (resp., strongly
$(k,t)$-robust) implementation of $\vec{\sigma} + \sigma_d$. We show
next that (a) for all adversaries $A = (T, \vec{\tau}_T, \sigma_e)$
with $|T| \le k+t$, all honest players play the same value $y_i$, and
(b) if all players are honest and have the same input $x$, then
they output $x$.  

Property (b) follows trivially from the fact that $\vec{\sigma}_{ACT}$ implements $\vec{\sigma} + \sigma_d$: if all players are honest and have the same input $x$, the value received by the mediator in $\vec{\sigma} + \sigma_d$ is guaranteed to be $x$, and thus, in $\vec{\sigma} + \sigma_d$, all honest players play $(G, x)$. 

To prove (a),
\commentout{
first note that in all histories of $\vec{\sigma}_{ACT}$
with an adversary $A = (T, \vec{\tau}, \sigma_e)$ such that $|T| \le
t+k$, all honest players play $G$. For if there exists a
all honest players play $G$. For if there exists a
history $H$ and a 
player $i \not \in T$ that does not play $G$ in $H$, consider an
adversary $A' = (T', \vec{\tau}_{T'}, \sigma_e)$ such that $|T'| =
k+t$, $T \subseteq T'$ and $i \not \in T'$, and such that players in
$T$ use $\vec{\tau}_T$ and players in $T' - T$ act like 
honest players, except that all the players in $T'$ play
$(R, 0)$. Since histories generated by playing with $A$ and $A'$ are
indistinguishable by 
honest players, 
there exists a history $H'$ in $\vec{\sigma}_{ACT}$ with adversary
$A'$ in which $i$ plays $R$, and thus at least $k+t+1$ players play
$R$ in $H'$. Thus, in $H'$ all players  get a payoff of 1
instead of 0 (or $-1$ if $k = 0$), contradicting the fact that
$\vec{\sigma}_{ACT}$ is $(k,t)$-resilient (resp., strongly
$(k,t)$-resilient). If $k = 0$, this contradicts the fact
$\vec{\sigma}_{ACT}$ is $t$-immune.

Now suppose
}
suppose that there exists an adversary $A = (T, \vec{\tau},
\sigma_e)$ with $|T|
\le k+t$ such that, in some history $H$ of
$\vec{\sigma}_{ACT}$ with $A$, there exist two players $i,j \not \in T$ that
play $(Q_i, y_i)$ and $(Q_j, y_j)$, respectively, with $y_i \not =
y_j$, $Q_i = R$, or $Q_j=R$. 
Consider an adversary $A' = (T', \vec{\tau}_{T'}, \sigma_e)$
such that 
$|T'| = k+t$, $T \subseteq T'$, and $i,j \not \in T'$ (we
know that such a  subset $T'$ exists, since $n > t+k+1$), and such that
players in $T$ act as in $\vec{\tau}_T$ and players in $T' - T$ 
act like honest players, except that all of them play $(R,
0)$.
Since histories generated by playing with $A$ and $A'$ are
indistinguished by honest players, 
there exists a history $H'$ in $\vec{\sigma}_{ACT}$ with adversary
$A'$ in which all honest players send and receive the same messages,
and perform the same actions.
If $Q_i = R$ in $H$, then there are $k+t+1$ players that
play $R$ in $H'$: the $k+t$ players in $T'$ and $i$.
Thus, all players get a payoff of 1 if $k > 0$, contradicting the
assumption that  $\vec{\sigma}_{ACT}$ is $(k,t)$-resilient, or 
all players get a payoff of $-1$ if $k=0$,
contradicting the assumption that $\vec{\sigma}_{ACT}$ is $t$-immune.
The same argument shows that $Q_i = G$ in $H$ and $H'$ and, indeed, 
that all honest players must play $G$ in $H$ and $H'$.  Now if $q_i \ne q_j$ in
$H$, then $q_i \ne q_j$ in $H'$, so (since all honest players play
$G$, so exactly $k+t$ players in $H'$ play $R$), again, all players in $H'$
get a payoff of 1 if $k> 0$ and a payoff of $-1$ if $k=0$, so we again
get the same contradiction as before.
%

\bibliographystyle{ACM-Reference-Format}
\bibliography{joe,game1}
\end{document}